\documentclass[prx,twocolumn,showpacs,superscriptaddress,preprintnumbers,amssymb]{revtex4-2} 
\usepackage{graphicx}
\usepackage{latexsym}
\usepackage{amssymb}
\usepackage{amsmath}
\usepackage{mathtools}
\usepackage{amsmath} 
\usepackage{amsfonts} 
\usepackage{upgreek}
\usepackage{bm}
\usepackage{booktabs}
\usepackage{microtype}
\usepackage{lmodern}
\usepackage[percent]{overpic}
\usepackage{makecell}

\usepackage{multirow}
\usepackage[shortlabels]{enumitem}
\usepackage{color}
\usepackage[colorlinks, citecolor=blue, linkcolor=blue]{hyperref}
\usepackage{float}
\usepackage{xcolor}
\usepackage{physics}
\usepackage{manfnt}
\usepackage{lipsum}
\usepackage[ruled,linesnumbered]{algorithm2e}
\usepackage{tabularx}
\usepackage{nicematrix}
\usepackage{braket}
\usepackage{bm}
\usepackage[normalem]{ulem}
\usepackage{tikz}
\usepackage[most]{tcolorbox}
\tcbuselibrary{breakable}

\newcommand{\stagefigplaceholder}[2][]{%
  \IfFileExists{#2}{\includegraphics[#1]{#2}}{%
    \fbox{\parbox{0.88\linewidth}{\centering\vspace{0.6em}
    Figure placeholder for \texttt{\detokenize{#2}}.\\
    The corresponding PDF is expected in the Overleaf/project figure folder.\\
    \vspace{0.6em}}}}%
}

\DeclareRobustCommand{\cylindericon}[1][1.65]{%
  \mathord{\vcenter{\hbox{%
    \begin{tikzpicture}[
      x=1em,
      y=1em,
      scale=#1,
      line cap=round,
      line join=round,
      line width=0.055em
    ]
      \def\cylrx{0.34}%
      \def\cylry{0.11}%
      \def\cylht{0.72}%

      \path[use as bounding box] (-0.43,-0.15) rectangle (0.43,0.87);

      \draw[dash pattern=on 0.55pt off 1.50pt]
        (-\cylrx,0)
        arc[start angle=180,end angle=0,
            x radius=\cylrx,y radius=\cylry];

      \draw (-\cylrx,\cylht) -- (-\cylrx,0);
      \draw ( \cylrx,\cylht) -- ( \cylrx,0);

      \draw
        (-\cylrx,0)
        arc[start angle=180,end angle=360,
            x radius=\cylrx,y radius=\cylry];

      \draw (0,\cylht)
        ellipse [x radius=\cylrx,y radius=\cylry];
    \end{tikzpicture}%
  }}}%
}

\usepackage{xurl}

 \newtcolorbox{axiombox}[1][]{
  breakable,
  colback=gray!15,    
  boxrule=0pt,     
  arc=3mm,           
  left=2mm,          
  right=2mm,         
  top=2mm,           
  bottom=2mm,        
  auto outer arc,
  #1
}

\usepackage{pifont} 

\usepackage{xcolor}
\usepackage{physics}
\usepackage{tcolorbox}
\usepackage{manfnt}
\usepackage{lipsum}
\usepackage[ruled,linesnumbered]{algorithm2e}
\usepackage{tabularx}
\usepackage{nicematrix}
\usepackage{braket}
\usepackage{bm}
\usepackage[normalem]{ulem}
\usepackage{tikz}
\usepackage{here}
\usepackage{amsthm}

\usepackage{amsthm}

\usepackage{hyperref}
\usepackage{cleveref}

\theoremstyle{plain} 
\newtheorem{theorem}{Theorem}

\theoremstyle{definition} 
\newtheorem{definition}[theorem]{Definition}
 
\newtheorem*{remark}{Remark}
\newtheorem{exmp}{Example}
\makeatother

\theoremstyle{plain}

\newtheorem{lemma}[theorem]{Lemma}
 
\newtheorem{Proposition}[theorem]{Proposition}

\newtheorem{assumption}[theorem]{Assumption}

\definecolor{YSblue}{RGB}{65,105,225}

\definecolor{GPTcolor}{RGB}{0,50,140}

\definecolor{BSorange}{RGB}{140,50,0}

\definecolor{ZWcolor}{RGB}{140,0,50}

\definecolor{ZLcolor}{RGB}{140,0,90}

\newcommand{\calA}{{\mathcal A}}

\newcommand{\calE}{{\mathcal E}}

\newcommand{\calH}{{\mathcal H}}

\newcommand{\calN}{{\mathcal N}}

\newcommand{\calR}{{\mathcal R}}

\newcommand{\mfa}{{\mathfrak a}}
\newcommand{\mfb}{{\mathfrak b}}
\newcommand{\mfc}{{\mathfrak c}}

 \hypersetup{colorlinks=true, 
  breaklinks=true,
 linkcolor=green!50!black,
 citecolor=red!30!orange!90!black,
 filecolor=OliveGreen, 
 urlcolor=blue!60!cyan!80!green!96!black,
 filebordercolor={.8 .8 1}, 
 urlbordercolor={.8 .8 0}}

\definecolor{prxblue}{RGB}{10, 36, 106}

\begin{document} 

\title{\textbf{ Approximate Quantum Error Correction at Chiral Topological Edges}}

\def\UIUCPhysics{Department of Physics, Grainger College of Engineering, University of Illinois at Urbana-Champaign, Urbana, Illinois 61801, USA}
\def\UIUCicmt{The Anthony J. Leggett Institute for Condensed Matter Theory, University of Illinois at Urbana-Champaign, Urbana, Illinois 61801, USA}
\def\KIAS{Korea Institute for Advanced Study, Seoul 02455, South Korea}
\def\UIUCiquist{Illinois Quantum Information Science and Technology Center, University of Illinois at Urbana-Champaign, Urbana, Illinois 61801, USA}


\author{Yuntai Song}
\affiliation{\UIUCicmt}
\affiliation{\UIUCPhysics}
\author{Zejun Liu}
\affiliation{\UIUCicmt}
\affiliation{\UIUCPhysics}
\author{Zhencheng Wang}
\affiliation{\UIUCPhysics}
\author{Jong Yeon Lee}
\affiliation{\UIUCicmt}
\affiliation{\UIUCPhysics}
\affiliation{\UIUCiquist}
\affiliation{\KIAS}
\author{Bowen Shi}
\affiliation{\UIUCicmt}
\affiliation{\UIUCPhysics}
\affiliation{\UIUCiquist}

\date{\today} 

\begin{abstract}

Topologically ordered phases naturally realize quantum error correction through nonlocal encoding of quantum information. More recently, conformal field theories have been shown to realize approximate quantum error-correcting codes, but such constructions generally require fine tuning to criticality. Here we introduce a family of approximate quantum error-correcting codes realized by the chiral edges of two-dimensional topologically ordered phases. The proposed encoding combines the robustness of a gapped topological bulk with the flexibility of gapless edge conformal field theories.
To characterize its robustness, we study coherent-information loss under local erasure. We derive an exact expression relating coherent-information loss to relative entropy, reducing the recoverability problem to universal properties of the edge theory. This leads to power-law scaling of coherent-information loss with the size of the erased region. 
We further show that, for geometrically local erasures near one edge, the two-dimensional chiral edge code is at least as robust as the dimensionally reduced CFT code, and is strictly more robust in several representative examples.
For Abelian code subspaces, we further construct a power-law-range recovery map supported on the erased region together with a power-law-range buffer; this recovery map depends only on the code subspace, not on the unknown encoded state. We provide numerical calculations for lattice realizations of compact free boson and Ising CFT examples that support the theoretical predictions of the power-law exponents.
\end{abstract}

\maketitle
\begingroup
\makeatletter
\renewcommand*\l@subsubsection[2]{}
\makeatother
\endgroup

\section{Introduction}


Quantum error-correcting (QEC) codes protect logical information by making it inaccessible to sufficiently local physical degrees of freedom. A robust quantum memory should protect the encoded state against errors affecting a small number of qubits. For an exact QEC code, this robustness is quantified by \emph{code distance}: all erasures below a certain size are perfectly correctable~\cite{PhysRevA.52.R2493,1997PhRvA..55..900K}. Topologically ordered phases, such as a toric code, naturally realize such robustness, as local observables cannot distinguish different topological sectors up to corrections exponentially small in system sizes~\cite{Wen1990,PhysRevB.41.9377,Kitaev_2003,Dennis_2002,RevModPhys.87.307}. From a condensed-matter perspective, such protection is tied to the robustness of gapped topological phases under any local perturbations~\cite{HastingsWen_2005,ChenGuWen_2010,BravyiHastingsMichalakis_2010,PhysRevX.2.021004,PhysRevB.104.235151,placke2024topologicalquantumspinglass,lavasani2024stabilityklocalquantumphases,deroeck2024ldpcstabilizercodesgapped}.

\begin{figure*}[!t]
    \centering
    \includegraphics[width=0.92\linewidth]{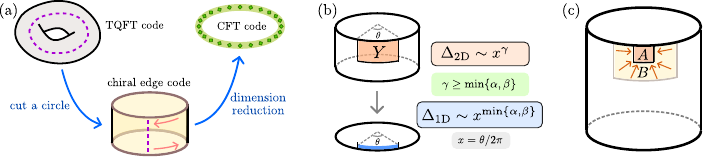}
    \caption{{\bf Summary of the main results.} (a) Relation between three many-body routes to quantum memory. 
    The chiral edge code proposed here (Def.~\ref{def:chiral-edge-code}) interpolates between a TQFT code and a CFT code: it inherits topological protection from the gapped bulk and CFT-like local distinguishability from the gapless edge. 
     (b) Coherent information loss (Def.~\ref{def:Ic-loss}) for 2D chiral edge code and 1D CFT code counterpart. As a main result, we show that the 2D chiral edge code is at least as robust as the dimensionally reduced counterpart under local erasure, with strict enhancement in representative examples (Prop.~\ref{prop:powers} and Sec.~\ref{sec:power_law_numerics}). (c) A local erasure on region $A$ near the edge can be recovered using a channel supported on $A$ together with a power-law-range buffer $B$ when the anyon sectors in the code subspace are Abelian (Thm.~\ref{thm:quasi-local-recovery}). 
    }
    \label{fig:summary}
\end{figure*}

In fact, the notion of exact robustness mostly applies in the limit of zero correlation length. In order to properly discuss such robustness in the \emph{phases of matter}, the idea of exact QEC should be extended beyond, namely approximate quantum error-correcting (AQEC) codes: a small amount of logical information is allowed to be probed locally as long as the amount vanishes in an appropriate limit~\cite{Leung_1997,Schumacher2002,BarnumKnill2002,Kretschmann2008,NgMandayam2010,PhysRevLett.104.120501,2017Quant...1....4F,Yi_2024}. This viewpoint is natural in quantum many-body physics as well, where low-energy states are rarely exactly locally indistinguishable. A useful way to characterize the robustness of an AQEC code is through local distinguishability: how much information about the encoded state can be extracted from observables supported on a small physical region. Several diagnostics of local distinguishability have been introduced and analyzed~\cite{2017Quant...1....4F,Bentsen_2024,Yi_2024,2025arXiv251004453Y}. These quantities often exhibit characteristic scaling with the size of the erased region. When such scaling can be derived or bounded, it yields quantitative constraints on code performance, including the accuracy of approximate recovery~\cite{2020PhRvX..10d1018F,Zhou2020,Bentsen_2024,2025arXiv251004453Y}, the spatial support required for a recovery channel~\cite{2017Quant...1....4F}, and the circuit complexity of the code states~\cite{Yi_2024}. While AQEC codes from gapped states in the same phase as the toric code provide an example with exponential decay of local distinguishability diagnostics, critical systems with vanishing energy gap may provide examples with power law decay diagnostics.


A canonical example of AQEC in a gapless system is provided by conformal field theory (CFT). Early connections between critical systems and quantum error correction appeared in both holographic settings~\cite{Almheiri_2014,Yoshida2015} and studies of the critical Ising model, where nontrivial recovery from local erasure was demonstrated by interpreting its Gibbs state as a quantum source-channel code~\cite{Pastawski_2017}. More recently, low-energy subspaces of $1+1$D CFTs were shown to form AQEC codes whose recovery properties are controlled by universal CFT data~\cite{Sang2024AQECC}. Related approximate error-correcting structures have also been identified in chaotic quantum systems~\cite{BaoCheng2019,2019PhRvL.123k0502B,Bentsen_2024,JWKim2024}.

The gapless nature of a CFT leads to a form of protection qualitatively different from that of a gapped topological phase. Because a CFT has no finite correlation length, a finite interval is generally not exactly blind to the encoded logical state. Instead, its local distinguishability typically vanishes as a power of the ratio between the interval size and the total system size, rather than exponentially. Consequently, the Knill--Laflamme conditions~\cite{1997PhRvA..55..900K} are satisfied only asymptotically, with an accuracy governed by universal CFT data~\cite{Zou2019,Yi_2024,Sang2024AQECC}.

Despite their universal error-correcting properties, CFT codes face an important limitation as physical quantum memories: the underlying gapless theory is realized only by tuning a lattice model to a critical point. Exact symmetries may forbid some relevant perturbations and reduce the amount of fine-tuning required~\cite{1987PhRvB..36.5291A,2017PhRvL.118b1601F,2019PhRvL.123r0201Y,2023PhRvD.107l5025L}, but the gaplessness is generally not robust to arbitrary local perturbations. This motivates the question addressed in this work: can conformal degrees of freedom furnish approximate quantum error correction when their existence is guaranteed by a stable phase of matter?

Chiral topological order provides a natural setting. Its bulk is gapped and supports anyonic superselection sectors, while its physical edge is gapless (ungappable) because of the non-vanishing chiral central charge~\cite{Wen1995-review,Kitaev2005}. Fractional quantum Hall states and chiral spin liquids are canonical examples~\cite{Laughlin1983,Kalmeyer-Laughlin1987,Moore1991,Nielsen2012,Eck2024chiral}. For the completely chiral edges considered here, the low-energy edge theory is described by a chiral CFT. Unlike a critical one-dimensional lattice model, the existence of the edge mode is enforced by the bulk topology rather than by tuning a Hamiltonian to a critical point. Thus a chiral topological phase combines two ingredients that are usually separated: a stable gapped bulk and universal gapless conformal edge degrees of freedom.

We use this bulk-edge combination to define a class of AQEC codes, which we call \emph{chiral edge codes} (Def.~\ref{def:chiral-edge-code}). Place a chiral topological order on a cylinder, the two physical edges carry counter-propagating chiral edge modes separated by a gapped bulk. Among the low-energy states without bulk excitations, we focus on \emph{cylinder primary states} (see Eq.~\eqref{eq:primaryLR}) labeled by a bulk anyon sector and by a choice of primary state on each physical edge, as explained in Sec.~\ref{sec:theory:chiral}. The choice of code subspace is given by selecting a finite set of anyon sectors, choosing one cylinder primary state in each selected sector, and taking their span. 

The construction combines the topological-sector structure of a topological code with the power-law local distinguishability of a 1D CFT code, as illustrated in Fig.~\ref{fig:summary}(a). A noncontractible bulk annulus can resolve the encoded anyon sector (see Thm.~\ref{thm:Delta_as_avg_relent_relent}), while a
contractible region detached from the edge is locally indistinguishable, up to a finite-size correction. However, a local region on physical edge can distinguish different cylinder primary states at finite-size, with the corresponding information leakage vanishing only in power-law. Chiral edge codes therefore exhibit a hybrid form of protection: algebraic suppression of local distinguishability along the gapless edges, together with the stronger exponential suppression in the gapped bulk.



The central questions for chiral edge codes discussed in this work are how to quantify its robustness against local noise and whether the resulting recovery channel can itself be supported near the noisy region. In the next section, we summarize our answers to these two questions. We first characterize local information leakage by \emph{coherent information loss} as in Def.~\ref{def:Ic-loss} and relate it to UV-finite relative entropies that can be studied through dimensional reduction (Assumption~\ref{asmp:UVfinite_entanglement_map_columns}) for chiral edge codes. We then establish a
power-law-range recovery map for Abelian code subspaces; see Sec.~\ref{sec:quasi-local-R}.


\section{Summary of Main Results}

We now summarize the main results about chiral edge codes and the assumptions entering them. Our primary local diagnostic is the \emph{coherent-information loss} \cite{Schumacher2002} under erasure. Given a code subspace \(\mathbb{V}\), we introduce a reference system \(R\) to purify the maximally mixed code state on $\mathbb{V}$. For a physical region \(A \subset Q\) (see Fig.~\ref{fig:Ic-A}), we define the coherent information loss $\Delta(A;\mathbb{V})$ as the mutual information between $A$ and $R$ (Def.~\ref{def:Ic-loss}). It quantifies the logical information leaked to the erased region. Moreover, it upper bounds the coherent-information deficit
generated by any noise channel supported on $A$, with equality for
complete erasure, and therefore gives a channel-independent diagnostic for noise on a given support. It vanishes for exactly correctable erasures, while for an AQEC code it may remain nonzero at
finite size. The relevant question is then how it scales with the
geometry of the erased region and with the overall system size.

\vspace{5pt} \noindent {\bf Relative entropy expression.}
The key simplification for chiral edge codes is provided by
topological-sector orthogonality. If the complement of the erased region contains a noncontractible bulk annulus, that annulus resolves the anyon sector up to the finite-correlation-length correction~\cite{Shi2019fusion}. Consequently, this diagnostic can be converted into a single average of relative entropies on the erased region (Thm.~\ref{thm:Delta_as_avg_relent_relent}). 
This expresses coherent information loss in terms of a UV-finite quantity. For full-column regions, our dimensional-reduction assumption (Assumption~\ref{asmp:UVfinite_entanglement_map_columns}) identifies this relative-entropy average with the corresponding quantity in the associated one-dimensional CFT, allowing CFT methods to determine its scaling.

\vspace{5pt} \noindent {\bf Power-law exponents.}
Our analysis of the coherent information loss leads to three power law exponents $\alpha, \beta, \gamma$ which we define below; see Fig.~\ref{fig:summary}(b) for a summary.
For an erasure $Y$ near a single edge of the two-dimensional system, the coherent-information loss scales as $ \Delta_{\rm 2D} \sim x^{\gamma}$. By contrast, the dimensionally reduced one-dimensional CFT code has $\Delta_{\rm 1D} \sim x^{\min\{\alpha,\beta\}}$, for an erased interval of the same angular size $x \ll 1$. The exponents \(\alpha\) and \(\beta\) are defined by relative entropies of small full-column regions $X$ and their complements on the cylinder, and therefore admit a purely one-dimensional CFT interpretation under dimensional-reduction assumption (Assumption~\ref{asmp:UVfinite_entanglement_map_columns}), whereas \(\gamma\) is intrinsic to a genuinely two-dimensional region localized near a single edge.

\vspace{5pt} \noindent {\bf Enhanced robustness.}  
We show that the chiral edge code has an enhanced robustness compared with the 1D CFT code obtained by its dimensional reduction.
Our main result (Prop.~\ref{prop:powers}) establishes the hierarchy ($\gamma \ge \alpha\geq \min\{\alpha,\beta\}$), which implies that the two-dimensional encoding is no worse against geometrically local erasures than its dimensionally reduced 1D counterpart. The hierarchy is strict in several of the examples studied
in Sec.~\ref{sec:power_law_numerics}. The physical reason is that dimensional reduction treats a full column of the cylinder as a local interval, while a local two-dimensional error near one edge cannot simultaneously access the opposite edge through the gapped bulk.

\vspace{5pt} \noindent {\bf Locality of Recovery.}  
Small coherent information loss implies the existence of an approximate recovery channel~\cite{Schumacher2002,PhysRevLett.104.120501}, but does not
by itself constrain its spatial support. For code subspaces containing only Abelian anyon sectors, we combine full-boundary entanglement bootstrap~\cite{Chiral-vira2024} 
conditions with the decay of mutual information between separated intervals in the dimensionally reduced CFT. Under these conditions, an $O(1)$-sized erasure adjacent to one edge can be approximately recovered by a channel supported on the erased region $A$ together with a buffer $B$ whose length grows subextensively with the cylinder's circumference; see Fig.~\ref{fig:summary}(c) and Sec.~\ref{sec:quasi-local-R}.
The recovery map can be chosen as a universal twirled Petz map constructed from the maximally mixed state on the code subspace~\cite{Fawzi2015,JRSWW-universal-recovery}; it therefore depends on the code subspace and the erased region, but not on the unknown encoded state or on the particular local noise channel supported on $A$.


\vspace{5pt} \noindent {\bf Chiral semion and Ising codes.} We examine the predicted power-law behavior in two representative examples: the chiral semion code, realized by bosonic Laughlin-type states and dimensionally reduced to the compactified free-boson CFT, and the chiral Ising code, whose dimensional reduction gives the Ising CFT. For the corresponding 1D CFT codes, we extract the exponents $\alpha$ and $\beta$ numerically and compare them with analytical CFT predictions for the relative entropies. For the chiral semion code, we additionally estimate the genuinely two-dimensional exponent $\gamma$ directly using bosonic Laughlin-state wave functions on finite cylinders. The results support the predicted hierarchy $\gamma \geq \alpha \geq \min \{\alpha, \beta\}$ with strict separation
for several code subspaces. Finally, for the Abelian code subspaces, we
estimate the CFT exponents that control the buffer range and the
asymptotic error in the power-law-range recovery theorem.

\vspace{5pt} \noindent {\bf Organization.} The remainder of the paper is organized as follows. Section~\ref{sec:background} develops the information-theoretic
framework, introducing coherent-information loss and establishing the
bound for arbitrary noise channels with a local support. Section~\ref{sec:Chiral edge of 2D topological order: theory and examples} defines chiral edge codes and explains their bulk-edge structure and dimensional-reduction correspondence. Section~\ref{sec:coherent_loss_vs_relent} derives the relative-entropy formula and the resulting robustness hierarchy of power laws, while Section~\ref{sec:power_law_numerics} examines the chiral semion and chiral Ising examples. Section~\ref{sec:quasi-local-R} establishes the power-law-range recovery theorem, and Section~\ref{sec:discussion} concludes with implications and open directions.

\section{Background}\label{sec:background}

In this section, we provide necessary background, including the coherent information in the study of error correcting properties. We further provide a general bound on the coherent information loss under local error channels, which will be useful in later sections.

\subsection{Coherent information}
\label{subsec:coherent_information}
 
Consider a quantum code on a physical system $Q$, and information stored in a code subspace $\mathbb{V}=\text{span}\{|\psi_Q^i\rangle\}_{i=1}^D$, where $\langle \psi_{Q}^i|\psi_Q^j\rangle = \delta_{i,j}$, and $D$ is the number of logical states in the code subspace.
Suppose we want to study the quantum error correcting property, which is the ability to preserve quantum information under decoherence. It is important to understand if there is a decoding channel that recovers the original state. Let
\begin{equation}
    \left|\psi_{R Q}\right\rangle:=\frac{1}{\sqrt{D}} \sum_{i=1}^D|\psi_Q^i\rangle\otimes\left|i_R\right\rangle
    \label{eq:maximal_entangled}
\end{equation}
be the maximally entangled state between the code words and the reference qubit(s) $R$, where $|i_R\rangle$ is a computational basis of $R$. Let $\calN_Q$ be the error channel, and $\calR$ be the recovery channel. Then the quantity of interest will be the entanglement fidelity~\cite{PhysRevA.54.2629}  
\begin{equation}
    F_{e}:=\max_{\calR} \bra{\psi_{RQ}}\operatorname{id}_R\otimes(\mathcal{R} \circ \calN_{Q})(\ket{\psi_{RQ}}\bra{\psi_{RQ}})\ket{\psi_{RQ}}
    \label{eq:ent_fidelity}
\end{equation}  
Consider the (von Neumann) coherent information 
\begin{equation}
    I_c(\rho_{QR}):=S(\rho_Q)-S(\rho_{QR}),
\end{equation}
where $S(\rho_A)\equiv -\Tr(\rho_A\log \rho_A)$ is the von Neumann entropy. A lower bound of $F_e$ can be obtained as~\cite{Schumacher2002} $F_e \geq 1-2\sqrt{\varepsilon}$, in which $\varepsilon\rightarrow 0$ is the \emph{deficit} of coherent information near its maximum $I_c=\log D$ with code subspace $\mathbb{V}$,  
\begin{equation}
\begin{aligned}\label{eq:Ic-drop-def}
    \varepsilon (Q;\mathbb{V}) &:= I_c (|\psi_{QR}\rangle) - I_c(\rho'_{QR})  
\end{aligned}  
\end{equation} 
where $\rho'_{QR}= (\operatorname{id}_R\otimes \calN_{Q})(\ket{\psi_{RQ}}\bra{\psi_{RQ}})$. Therefore, one way to characterize the loss of quantum information under the noisy channel is to bound the coherent information deficit $\varepsilon$.

Note that $\varepsilon \ge 0$ as one can rewrite
\begin{equation}
    \varepsilon (Q;\mathbb{V})=I(Q:R)_{|\psi_{QR}\rangle} -  I(Q:R)_{\rho'_{QR}},
\end{equation}
where $I(A:B)_\rho:= S(\rho_A) + S(\rho_B) - S(\rho_{AB})$ is the mutual information.
The non-negativity of $\varepsilon$ then follows from the monotonicity of mutual information under a quantum channel acting on one party.

\subsection{Coherent information loss on local regions} 

We shall be interested in the decoherence caused by local channels and the related coherent information loss. By a local channel we mean a quantum channel supported on a strict subset of the physical systems $A\subset Q$, where $A$ is geometrically local. The environment typically interacts with the quantum memory in a geometrically local way, and that is the main reason we care about local channels. 

First, we observe a general property of the coherent information loss for channels supported on a subsystem $A \subset Q$, applicable to any local channels and code subspace $\mathbb{V}$.

\begin{Proposition}[Bound on coherent information deficit]\label{prop:bound-1}
Let $A\subset Q$ be a subsystem, and suppose the decoherence is described by a local channel $\calN_{A}$. The deficit of coherent information (defined in Eq.~\eqref{eq:Ic-drop-def}), which we call $\varepsilon(A;\mathbb{V})$, is upper bounded by the mutual information as
\begin{equation}
    \varepsilon (A;\mathbb{V}) \le I(A:R)_{|\psi_{QR}\rangle}.
\end{equation} 
\end{Proposition}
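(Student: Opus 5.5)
The argument will work entirely at the level of mutual information, using the rewriting of $\varepsilon$ given above. Let $Q = A\bar A$, with $\bar A := Q\setminus A$. For a channel $\calN_A$ supported on $A$, set $\rho'_{QR} = (\operatorname{id}_{R\bar A}\otimes\calN_A)(\ket{\psi_{RQ}}\bra{\psi_{RQ}})$. Then
\begin{equation}
\varepsilon(A;\mathbb{V}) = I(A\bar A:R)_{\psi} - I(A\bar A:R)_{\rho'} .
\end{equation}
First expand both terms with the chain rule for mutual information, $I(A\bar A:R) = I(\bar A:R) + I(A:R|\bar A)$. The channel acts only on $A$, so the reduced state on $\bar A R$ is unchanged. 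Hence $I(\bar A:R)_{\rho'} = I(\bar A:R)_{\psi}$, and these terms cancel, leaving
\begin{equation}
\varepsilon(A;\mathbb{V}) = I(A:R|\bar A)_{\psi} - I(A:R|\bar A)_{\rho'} \le I(A:R|\bar A)_{\psi},
\end{equation}
where the inequality uses nonnegativity of conditional mutual information, i.e.\ strong subadditivity, for $\rho'$.

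Second, evaluate $I(A:R|\bar A)_\psi$ using purity of $\ket{\psi_{RQ}}$ on $RA\bar A$. Write $I(A:R|\bar A) = S(A\bar A) + S(R\bar A) - S(\bar A) - S(RA\bar A)$. Purity gives $S(RA\bar A)=0$, $S(A\bar A)=S(R)$, and $S(R\bar A)=S(A)$. This yields
\begin{equation}
I(A:R|\bar A)_{\psi} = S(R) + S(A) - S(\bar A) = S(R)+S(A)-S(RA) = I(A:R)_{\psi},
\end{equation}
using $S(\bar A)=S(RA)$. This is the standard purity identity $I(A:R|\bar A) = I(A:R)$ for a pure tripartite state. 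Combining the two displays gives $\varepsilon(A;\mathbb{V})\le I(A:R)_{\psi}$.

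Finally, the bound is tight for complete erasure. If $\calN_A$ replaces $A$ by a fixed state, then $\rho'$ is a product state across $A$ and $\bar A R$. So $I(A:R|\bar A)_{\rho'}=0$, and the inequality in the first step is an equality.

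There is no serious obstacle; the only care needed is in the first step. One must state that $\rho'_{\bar A R}=\psi_{\bar A R}$ because the channel is trace-preserving and supported on $A$, and one must use strong subadditivity rather than plain data processing on $A$. The monotonicity argument quoted for $\varepsilon\ge0$ alone does not give the bound by $I(A:R)$.
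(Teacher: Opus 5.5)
Your proposal is correct and follows the paper's proof. Like the paper, you rewrite $\varepsilon$ as a difference of $I(A:R|\bar A)$ terms (the $\bar A R$ marginal is unchanged), drop the post-channel term by strong subadditivity, and use purity of $|\psi_{RQ}\rangle$ to get $I(A:R|\bar A)_\psi = I(A:R)_\psi$; your tightness remark for complete erasure also matches the paper's footnote.
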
 
\begin{proof}
    Let $B = Q \setminus A$. We see
\begin{equation}
\begin{aligned}
    \varepsilon (A;\mathbb{V}) &= I(A:R|B)_{|\psi_{QR}\rangle} - I(A:R|B)_{\rho'}  \\
    &\le I(A:R|B)_{|\psi_{QR}\rangle} \\
    &= I(A:R)_{|\psi_{QR}\rangle}.
    \end{aligned}
\end{equation}
The equality in the first line follows from the fact that $|\psi_{QR}\rangle$ and $\calN_{A}(|\psi_{QR}\rangle \langle \psi_{QR}|)$ have identical reduced density matrices on $BR$.
The inequality in the second line follows from strong subadditivity. The last line follows from the purity of $|\psi_{AB R}\rangle$.
\end{proof}

The nice thing about the bound is that only the original state $|\psi_{QR}\rangle$ is needed. We avoid talking about the dependence of the local channel (i.e., error type). The bound is saturated for the strongest local channel, which means a complete corruption of certain qubit(s)~\footnote{One realization of the strongest channel is such that it replaces the local density by a product, e.g. $\rho'_{QR} = \lambda_{A} \otimes \rho_{BR}$, where $\lambda_{A}$ is an arbitrary density matrix and $\rho_{B R} = \Tr_{A} |\psi_{QR}\rangle \langle \psi_{QR}|$.}.  

\begin{definition}[Coherent information loss]\label{def:Ic-loss}
    We introduce the ``coherent information loss'' as the change of coherent information under an erasure channel on $A \subset Q$, as in Fig.~\ref{fig:Ic-A}:  
\begin{equation}\label{eq:upper_bound_CI_drop}
\begin{aligned}
     \Delta(A;\mathbb{V}) & := I_c \large(|\psi_{QR}\rangle) - I_c(\Tr_A(|\psi_{QR}\rangle\langle \psi_{QR}|) \large) \\
     & = I(A:R)_{|\psi_{QR}\rangle}.    
\end{aligned}
\end{equation}

\end{definition}

The bound (Prop.~\ref{prop:bound-1}) is simply $\Delta(A;\mathbb{V}) \ge \varepsilon(A;\mathbb{V})$ for any quantum channel on $A$.

\begin{figure}
    \centering
    \includegraphics[width=0.56\linewidth]{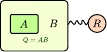}
    \caption{Regions related to the definition of coherent information loss. $Q=AB$ is the physical system of the quantum code. $R$ is the reference for purification purposes.}
    \label{fig:Ic-A}
\end{figure}

\begin{lemma}\label{lemma:delta} 
The coherent information loss for any $A \subset Q$ and code subspace $\mathbb{V}=\text{span}\{|\psi_Q^i\rangle\}_{i=1}^D$ has an alternative expression 
\begin{equation}
     \Delta(A;\mathbb{V})
     =
     \bigl(S_{A}+S_Q-S_{B}\bigr)_{\bar{\rho}_{Q}},
\label{eq:drop_from_rel_ent_relent}
\end{equation} 
where $\bar{\rho}_Q=\frac{1}{D}\sum_i \ket{\psi^i_Q}\bra{\psi^i_Q}$ is the maximum entropy state of the code subspace.
\end{lemma}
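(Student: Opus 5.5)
Starting from Def.~\ref{def:Ic-loss}, $\Delta(A;\mathbb{V}) = I(A:R)_{|\psi_{QR}\rangle} = S_A + S_R - S_{AR}$, where all entropies are evaluated on the pure state $|\psi_{QR}\rangle = |\psi_{ABR}\rangle$ of Eq.~\eqref{eq:maximal_entangled}, with $B = Q\setminus A$. The plan is to rewrite each of the three terms as an entropy of a reduced state of $\bar{\rho}_Q$.

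\emph{Step 1: $A$, $Q$ and $B$ marginals.} Tracing out $R$ from $|\psi_{QR}\rangle\langle\psi_{QR}|$ gives
\begin{equation}
\rho_Q=\frac{1}{D}\sum_{i,j}|\psi^i_Q\rangle\langle\psi^j_Q|\,\langle j_R|i_R\rangle=\bar{\rho}_Q .
\end{equation}
The reduced states on $A$ and on $B$ are therefore the partial traces of $\bar{\rho}_Q$. Hence $S_A$ in $I(A:R)$ already equals $(S_A)_{\bar{\rho}_Q}$.

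\emph{Step 2: the remaining terms.} Because $|\psi_{ABR}\rangle$ is pure, complementary subsystems have equal entropies. This gives $S_R = S_{AB} = S_Q$ and $S_{AR} = S_B$. One can also check directly that $S_R = \log D$, since $\rho_R$ is maximally mixed by orthonormality of the $|\psi_Q^i\rangle$, and that $S_Q(\bar{\rho}_Q) = \log D$, which is consistent.

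\emph{Step 3: combine.} Substituting Step 2 into $I(A:R)=S_A+S_R-S_{AR}$ gives
\begin{equation}
\Delta = (S_A + S_Q - S_B)_{\bar{\rho}_Q},
\end{equation}
which is Eq.~\eqref{eq:drop_from_rel_ent_relent}.

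The only point needing care is that the $S_B$ appearing here is the entropy of the marginal of the pure state $|\psi_{QR}\rangle$ on $B$, and that this marginal coincides with $\operatorname{Tr}_A\bar{\rho}_Q$. Step 1 settles this, since the marginal on $Q$ is exactly $\bar{\rho}_Q$. The rest of the argument is the purity relation $S_X = S_{\bar{X}}$ together with this identification.
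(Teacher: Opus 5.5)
Your proof is correct and follows essentially the same route as the paper. Both write $\Delta = I(A:R)$ on the pure state $|\psi_{QR}\rangle$, use purity to replace $S_R$ by $S_Q$ and $S_{AR}$ by $S_B$, and then observe that the marginal of $|\psi_{QR}\rangle$ on $Q$ is exactly $\bar{\rho}_Q$.
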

\begin{proof}
    It follows from Eq.~\eqref{eq:upper_bound_CI_drop} and the purity of $|\psi_{QR}\rangle$ that $\Delta(A;\mathbb{V}) =(S_{A}+S_Q-S_{B})_{|\psi_{QR}\rangle}$. Then, we notice that the reduced density matrix of $|\psi_{QR}\rangle$ on the physical system $Q$ is $\bar{\rho}_Q$.
\end{proof}
 
The basis-invariant nature of $\Delta(A;\mathbb{V}) $ reflects the fact that local robustness is ultimately a code subspace question: one must control how well a small physical region can distinguish an \emph{arbitrary} encoded state from the maximally mixed code state, not merely how it distinguishes a preferred basis of codewords.

Below we describe a few examples of the coherent information loss $\Delta(A;\mathbb{V}) $ associated with the erasure of local region $A$ and a code subspace $\mathbb{V}$ for different kinds of quantum memory. While some examples are for illustration purposes, we provide Example~\ref{exmp:AQECC} specifically to explain why such quantities are well motivated for AQEC codes.

\begin{exmp}[Stabilizer code (QEC codes)]
\label{exmp:stabilizer_code}
Consider a $[[n,k,d]]$ qubit stabilizer code. This means we have a code subspace spanned by an orthonormal set of states
$\mathbb{V}^{\rm stabilizer} = \text{span}\{|\psi^i\rangle\}_{i=1}^D$, where $D= 2^k$.
Let $A\subset Q$ be a set of physical qubits
with $|A|\le d-1$. By the Knill--Laflamme condition, every operator $O_A$ supported on $A$ obeys~\cite{1997PhRvA..55..900K,1997PhDT.......232G}
\begin{equation}
    \langle \psi^i|O_A|\psi^j\rangle=c(O_A)\,\delta_{i,j},
\end{equation}
in which $c(O_A)$ is a constant that depends on the operator $O_A$. With simple algebra, one can show that the erased region $A$ is uncorrelated with the reference system $R$. Thus, an exact relation holds
\begin{equation}
\Delta(A;\mathbb{V}^{\rm stabilizer})=I(A:R)_{|\psi_{QR}\rangle}=0 .
\end{equation}

\end{exmp}

\begin{exmp}[TQFT code (QEC codes)]
\label{exmp:TQFT_Code}
Consider a topologically ordered system on the torus. Let the code subspace be a set of degenerate ground states
\begin{equation}
    \mathbb{V}^{\text{TQFT}}(\mathbb{T}^2) =\text{span}\{ |\psi^\mfa\rangle\},
\end{equation}
where $|\psi^\mfa\rangle$ is a ground state on the torus corresponding to an anyon, and $\langle \psi^\mfa | \psi^\mfb\rangle = \delta_{\mfa,\mfb}$. We shall refer to this code as \emph{TQFT code}.
This code is a quantum error correcting code~\cite{Kitaev_2003,Dennis_2002}. We can verify this by showing that coherent information loss vanishes for any local errors.

Let $A$ be a local disk, as shown in Fig.~\ref{fig:T2-Q1}, and let $B= Q\setminus A$. By the property of the topological order ground state
\begin{equation}
\begin{aligned}
    \Delta(A; \mathbb{V}^{\text{TQFT}}(\mathbb{T}^2)) 
    &= ( S_{A} + S_Q - S_{B})_{\bar{\rho}} \\
    & \le ( S_{A} + S_{A B_-} - S_{B_-})_{\bar{\rho}} \\
    & = ( S_{A} + S_{A B_-} - S_{B_-})_{|\psi^i\rangle}\\
    & =0,
\end{aligned}
\end{equation}
where $\bar{\rho}_Q=\frac{1}{D}\sum_\mfa \ket{\psi^\mfa}\bra{\psi^\mfa}$. The first line is from Lemma~\ref{lemma:delta}; the 2nd line follows from the strong subadditivity, using $B_- \subset B$ as in Fig.~\ref{fig:T2-Q1}(b). The third line uses the fact that the ground states of topologically ordered systems are locally indistinguishable. The last line follows from the entanglement area law of gapped ground states; precisely speaking, the condition $(S_{A} + S_{A B_-} - S_{B_-})_{|\psi^i\rangle} =0$ is the entanglement bootstrap axiom {\bf A0}~\cite{Shi2019fusion}. 
For chiral topological order, the area law and the local indistinguishability should both be approximate on the ground states. Such errors are expected to decay exponentially with the size of regions. Thus, rigorously speaking, a TQFT code is an AQEC code whose local distinguishability decays exponentially. We neglect such errors in the rest of the work.

Consequently, the coherent information cannot decrease under any local decoherence. This means that local erasure errors of the TQFT code can be recovered by a certain quantum channel acting on the system $Q$.
Importantly, the topology of $A$ is important, and the erasure on it can be perfectly correctable even if its size $|A|$ is larger than the code distance, as long as it is a disk.

In fact, following from the same entropy analysis, one can show the existence of a \emph{local} recovery channel $\calR_{AB_-}$ that can recover any local channel $\calN_A$, that is
\begin{equation}\label{eq:local-recovery}
    \rho_Q = \calR_{AB_-}\circ \calN_A(\rho_Q)
\end{equation} 
for any state $\rho_Q$ in the code subspace.

\begin{figure}[H]
    \centering
    \includegraphics[width=0.89\linewidth]{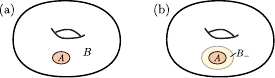}
    \caption{Local decoherence of TQFT code can be fully recovered. The physical system $Q$ is a torus. $A$ is a local disk $B= Q\setminus A$. The region $B_- \subset B$ is an annulus surrounding $A$, and thus $AB_-$ is a local disk that thickens $A$ by a few correlation lengths.}
    \label{fig:T2-Q1}
\end{figure}

The toric code~\cite{Kitaev_2003,Dennis_2002} is one of the lattice model examples of TQFT construction. On a torus, its four-dimensional ground-state subspace admits a minimally entangled basis
\begin{equation}
\mathbb{V}^{\rm TC}=\operatorname{span}\left\{\left|\psi^{\mathfrak{a}}\right\rangle: \mathfrak{a} \in\{1, e, m, \epsilon\}\right\}
\end{equation}
where $\mathfrak{a}$ labels the anyon flux through a chosen noncontractible cycle.

More generally, for a local Hamiltonian in the same gapped topological phase, connected to the fixed-point model by a gapped path and hence by quasi-adiabatic continuation~\cite{HastingsWen_2005,ChenGuWen_2010}, the exact local indistinguishability is replaced by an exponentially accurate one, as guaranteed by the stability of topological quantum order under local perturbations~\cite{BravyiHastingsMichalakis_2010}.
\end{exmp}

\begin{exmp}[AQEC codes: local diagnostics and local recovery]\label{exmp:AQECC}
Approximate quantum error correction can be quantified by several related criteria, including entanglement fidelity, coherent-information loss, information--disturbance tradeoffs, and recovery-map conditions~\cite{Leung_1997,Schumacher2002,Kretschmann2008,PhysRevLett.104.120501,BarnumKnill2002,NgMandayam2010}. For a certain quantum memory, one is often interested in a more refined question: how much logical information can be learned from a small physical region? 

For exact QEC codes, this question is answered sharply by the code distance $d$: every erased region $A$ for which $|A|<d$ is perfectly correctable. For an AQEC code, the corresponding statement is no longer sharp; instead, one obtains a tradeoff between the size and geometry of the region and the allowed recovery error. Several previous works have developed local diagnostics for quantifying this kind of tradeoff. For example, local approximate correctability and approximate code distance were formulated for lattice AQEC codes using Bures-distance recovery after erasure of a region~\cite{2017Quant...1....4F}; robustness is characterized by a local information-theoretic quantity called ``subsystem variance'' in Ref.~\cite{Yi_2024,2025arXiv251004453Y}; approximate code distance is defined by mutual information between a small physical region and the reference system in~\cite{Bentsen_2024}.

The common argument is that a good approximate memory should make sufficiently small subsystems nearly decoupled from the logical reference. This is the role played in this paper by the coherent-information loss $\Delta(A;\mathbb{V})$ under local erasure on $A$, which measures how much information about the reference is visible to the erased region. For exact QEC codes, \(\Delta(A;\mathbb V)=0\) for every correctable erasure region. 
For an AQEC code, \(\Delta(A;\mathbb V)\) may be nonzero at finite size, and the meaningful quantity is its scaling with the size and geometry of \(A\). 
Moreover, by Prop.~\ref{prop:bound-1}, \(\Delta(A;\mathbb V)\) upper bounds the coherent-information deficit of any quantum channel supported on \(A\), so it gives a channel-independent local diagnostic for local noise.

There is a second, stronger question: even if small information leakage guarantees the existence of some recovery map, must that recovery be geometrically local? 

General recovery theorems, including Petz-type and rotated-Petz recovery maps, usually provide an abstract recovery operation and do not by themselves bound its spatial support~\cite{PhysRevLett.104.120501,Fawzi2015,JRSWW-universal-recovery}. The local-recovery refinement asks whether erasure of \(A\) can be corrected by a channel supported only near \(A\), for example within an \(\ell\)-neighborhood~\cite{2017Quant...1....4F}. This locality of the recovery channel is nontrivial and is the question addressed in Sec.~\ref{sec:quasi-local-R}. 
For Abelian chiral edge code subspaces, we construct a power-law-range recovery map for local disk erasures at one of the physical edges, supported on the erased region together with a subextensive buffer.

\end{exmp}

\subsection{Effective code distance}
 
For an exact quantum error-correcting code, the code distance $d$ provides a sharp characterization of robustness: every erasure acting on fewer than $d$ physical qubits is perfectly correctable, while larger erasures need not be. Approximate quantum error-correcting codes generally do not possess such a threshold. The coherent-information loss may already be nonzero for arbitrarily small erased regions, even though it vanishes in the thermodynamic limit. Consequently, robustness is better described by how rapidly the coherent-information loss grows with the size of the erased region rather than by a single integer-valued distance. 

Motivated by this observation, we consider an effective code distance based on the coherent-information loss. Let $\Delta(A;\mathbb{V})$ denote the coherent-information loss of a code subspace $\mathbb{V}$ under the erasure of a region $A$. Given a tolerance $\delta>0$, we define the effective code distance~\cite{Bentsen_2024}
\begin{equation}
    d_{\rm eff}(\delta):=\min \{|A|:\Delta(A; \mathbb{V} ) \ge \delta\},
\end{equation}
where $|A|$ denotes the size of the erased region. Equivalently, $d_{\rm eff}(\delta)$ is the smallest erasure size for which the coherent information loss remains above the tolerance.  

This notion is closely related to recent proposals for extending the concept of code distance to approximate quantum error-correcting codes, where recoverability is characterized by information-theoretic quantities rather than exact correctability; see Ref.~\cite{2017Quant...1....4F,Bentsen_2024,Yi_2024,2025arXiv251004453Y}. 
In this paper, we use coherent information loss as our primary diagnostic for local recoverability.

A closely related effective-distance viewpoint will be useful in Sec.~\ref{subsec:theoretical-summary}, where the relevant notion of size is the angular length of an erased region along the cylinder rather than simply the number of erased microscopic degrees of freedom. There, the scaling of coherent-information loss with this geometric size will provide a quantitative way to compare the robustness of chiral edge codes with their dimensionally reduced CFT counterparts.

\section{Chiral edge code} 
\label{sec:Chiral edge of 2D topological order: theory and examples}

In this section, we put forward our proposal for AQEC codes utilizing the chiral edges of topological order. In order to do so, we first review the relevant notations in the modern theory of topological order and its gapless edges (Sec.~\ref{sec:theory:chiral}). We then formally introduce the chiral edge code in Sec.~\ref{sec:chiral-edge-code}; see Def.~\ref{def:chiral-edge-code}.
For instance, we can make a code subspace using a few low lying cylinder primary states $|\mfa, \phi^I_\mfa(L), \phi^J_{\bar{\mfa}}(R)\rangle$ corresponding to anyon $\mfa$ and the associated edge Virasoro primary fields $\phi^I_\mfa(L)$ and $\phi^J_{\bar{\mfa}}(R)$. Explicit examples of chiral topological order and their edge states can be found in Sec.~\ref{sec:examples-chiral-states}, which can be used to build a variety of explicit code subspaces.

\begin{figure}[h]
    \centering
\includegraphics[width=0.92\linewidth]{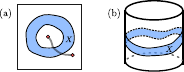}
    \caption{ Anyon charges detected by an annulus surrounding it, using the information convex set. (a) An annulus embedded in a disk. One should picture an anyon, or, more generally, a localized cluster of excitations with total topological charge $\mfa\in\calA$ sitting inside the disk enclosed by annulus of $X$. (b) The same intuition works for a noncontractible annulus on the cylinder. The annulus $X$ itself contains no anyon core.}  
    \label{fig:ICS-annulus}
\end{figure}

 \subsection{Chiral edge of topological order: the theory}\label{sec:theory:chiral}

We review the modern theory for topological order in 2D and its gapless chiral edges~\cite{Wen1995-review,Kitaev2005} to set up the relevant background and notation. 
We note that Fractional quantum Hall (FQH) systems are special types of symmetry enriched topological orders with $U(1)$ charge conservation symmetries.

\emph{The bulk:} Anyons in 2D gapped bosonic systems are expected to be classified by the unitary modular tensor category, and the chiral central charge $c_-$~\cite{Kitaev2005}. Let the set of anyons be a finite set
\begin{equation}
    \calA = \{ 1, \mfa, \mfb, \mfc, \cdots\}
\end{equation} 
with the fusion rules $\mfa\times \mfb = \sum_\mfc N_{\mfa\mfb}^\mfc \mfc$. $\{N_{\mfa\mfb}^\mfc\}$ are non-negative integers called the fusion multiplicities, and the fusion with the vacuum is trivial: $\mfa \times 1 = \mfa, \forall \mfa \in \calA$. 

The labels $\mfa\in\calA$ should be understood as \emph{anyon charge} labels, i.e.\ labels for the superselection sectors of bulk anyons. Quantum dimensions of anyons $\{d_\mfa\}$ is a set of positive numbers ($d_\mfa\ge 1$) satisfying
\begin{equation}
    d_\mfa \times d_\mfb = \sum_\mfc N_{\mfa\mfb}^\mfc d_\mfc,
\end{equation}
and are determined uniquely by this formula. An anyon is Abelian if $d_\mfa = 1$, and it is non-Abelian if $d_\mfa > 1$.

We also need an information-theoretic way to recognize these topological charges directly from local reduced density matrices based on entanglement bootstrap~\cite{Shi2019fusion}. 
Consider an annulus $X$ within the bulk of the topological order. Such an annulus is wider than the bulk correlation length and is away from the edge. Consider a convex set of states on the annulus locally indistinguishable from the ground state, known as the information convex set~\cite{Shi2019fusion,ShiLu-information-convex,Shi2018ICS}. Such a convex set must form a simplex, due to the area law constraints, according to the entanglement bootstrap~\cite{Shi2019fusion}. The extreme points are in one-to-one correspondence with the anyon types. Explicitly, 
\begin{equation}
    \Sigma(X) = \{ \rho_X = \sum_\mfa  p_\mfa\rho^\mfa_X| \rho^\mfa_X \text{ are extreme points} \}
\end{equation}
with a set of mutually orthogonal extreme states $\{\rho^\mfa\}_{\mfa \in \calA}$. By the orthogonality, we mean the vanishing of fidelity ($F(\rho,\sigma):= \|\sqrt{\rho}\sqrt{\sigma}\|_1^2$) 
\begin{equation}\label{eq:sector_orthogonality_annulus}  F(\rho^\mfa_X,\rho^\mfb_X)=\delta_{\mfa,\mfb}.
\end{equation}
Importantly, any state on the cylinder with no bulk excitations reduces to the bulk annulus $X$, as Fig.~\ref{fig:ICS-annulus}(b) would prepare a state in $\Sigma(X)$. This establishes a way to detect the anyon charge purely
from the bulk, without referring to the edge. For lattice systems realizing chiral topological orders, we expect the orthogonality is approximate $F(\rho^\mfa_X,\rho^\mfb_X) \approx \delta_{\mfa,\mfb}$, where the error is small when the thickness is larger compared with the bulk correlation length. We shall neglect such errors for our applications.


The chiral central charge $c_-$ is the coefficient that determines the thermal Hall conductance of a chiral gapped system~\cite{Kane1997,Kitaev2005}.
The value of $c_-$ is determined by the bulk topological order. When $c_-$ is nonzero, there is an unequal number of left movers and right movers on an edge. In particular, when $c_- \neq 0$, the edge must be gapless.

\emph{The chiral edge:} Consider a chiral topological order on a cylinder, as in Fig.~\ref{fig:psi-a}. Low-energy states without bulk excitations are described by the gapless modes propagating along the two physical edges~\cite{Wen1995-review}. The bulk topological sector is labeled by an anyon type $\mfa$, while each edge is characterized by a primary state of the corresponding edge conformal field theory. Suppose that the lower edge only has right movers, and the upper edge only has left movers; this necessarily means that the chiral central charge $c_-$ is nonzero~\cite{Wen1995-review,Kitaev2005}. We denote such a cylinder state by
\begin{equation}\label{eq:primaryLR}
    |\mfa, \phi^I_\mfa(L), \phi^J_{\bar{\mfa}}(R)\rangle, \quad \mfa \in \calA.
\end{equation}
Here $L$ and $R$ label the two edges of the cylinder, $\mfa$ specifies the bulk anyon sector, and $I,J$ distinguish different primary states associated with that sector. The symbols $\{\phi^I_\mfa\}$ should be explained as primary fields of CFT from a dimensional reduction point of view. Importantly, these state vectors are mutually orthonormal
\begin{equation}
   \langle \mfa, \phi^I_\mfa(L), \phi^J_{\bar{\mfa}}(R) |\mfa', \phi^{I'}_{\mfa'}(L), \phi^{J'}_{\bar{\mfa}'}(R)\rangle = \delta_{\mfa,\mfa'}  \delta_{I,I'} \delta_{J,J'}. \nonumber
\end{equation} 
Throughout this work, these cylinder primary states will serve as the building blocks of our code subspaces.

\begin{figure}[H]
    \centering
    \includegraphics[width=0.85\linewidth]{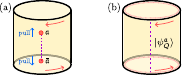}
    \caption{Prepare a state in the code subspace of the chiral edge code.
    $|\psi^\mfa_Q\rangle$ is the short-hand notation for $|\mfa,\phi^I_\mfa(L),\phi^I_{\bar{\mfa}}(R)\rangle$ of Eq.~\eqref{eq:primaryLR}. The cylinder should be sufficiently wide that its height is much greater than the bulk correlation length. (a) Create an anyon pair $(\mfa, \bar{\mfa})$ in the bulk and pull the anyons towards the opposite edges. (b) Spread the edge excitation uniformly on the edge and extremize the energy to obtain a certain primary state.  }
    \label{fig:psi-a}
\end{figure}

Here are a few more words on the edge states and the nature of chiral topological order. Throughout the manuscript, we assume that the properties of the edge states are matched to certain 1D CFTs through a dimensional reduction, detailed in Sec.~\ref{subsec:dimension-reduction} below. In particular, the assumption that we have a unique ground state excludes the more subtle types of edges obtained by stacking a topological order with a gapped edge on top of a chiral topological order. We further assume that the upper and lower edges are of the same type for simplicity.

While we do not need more details of the edge theory, we summarize a few recent developments for interested readers. The edges have only right movers but no left movers; such edges are \emph{completely chiral} \cite{Kong2017edge,Kong2019:part1,Kong2019:part2}; see also \cite{Chiral-vira2024,cross-ratio2024}. They are among the simplest edges of topological orders. Local perturbation should not gap out any edge mode. Topologically ordered systems can have ungappable edges even when $c_-=0$; this can happen due to some extra obstruction related to higher central charges~\cite{PhysRevX.3.021009,Ng2020-higher,Kaidi2021}. While 1D CFT can be either rational or irrational, the 1D CFT obtained by the dimensional reduction of a purely chiral edge is expected to be a rational CFT (RCFT). A single edge is closely related to the holomorphic (or anti-holomorphic) part of CFT, whose mathematical backbone is the concept of vertex operator algebra~\cite{1993hep.th....1009H,2005PNAS..102.5352H,Kong2017edge,Sopenko2023}.

\subsection{Quantum memory with the chiral edges}\label{sec:chiral-edge-code}

We utilize the states $|\mfa, \phi^I_\mfa(L), \phi^J_{\bar{\mfa}}(R)\rangle$ on a cylinder geometry of a topological order with chiral gapless edges to encode information. We formally introduce the chiral edge code. 

\begin{definition}[Chiral edge code]\label{def:chiral-edge-code}
For a chiral topological order on a cylinder with a pair of chiral edges, we choose a subset of anyon types \(\mathcal A_{\rm code}\subseteq \mathcal A\). For each anyon type, \(\mfa\in\mathcal A_{\rm code}\) we fix one cylinder primary state $\ket{\psi_Q^\mfa}
  :=|\mfa, \phi^I_\mfa(L), \phi^J_{\bar{\mfa}}(R)\rangle$.
Namely, we take one primary state for each anyon sector, and let 
\begin{equation}
\mathbb{V}^{\chi} \bigl(\cylindericon[1.2]\bigr)
:=
  \mathrm{span} \left\{
  \ket{\psi_Q^\mfa}:\mfa\in\mathcal A_{\rm code}
  \right\},
  \label{eq:chiCFT_codespace_primary_relent}
\end{equation}
with $D:=\dim  \mathbb{V}^{\chi} \bigl(\cylindericon[1.2]\bigr)=|\mathcal A_{\rm code}|.$ 
 We call this choice of code subspace a chiral edge code.
\end{definition}

We shall analyze the robust nature of such a quantum memory in later sections. As we shall see, the requirement that we have at most one state for each anyon sector is crucial to some of our analysis. The idea easily generalizes to broader setups, e.g. by replacing some of the primaries with descendants.

\subsection{Examples of chiral edge codes}\label{sec:examples-chiral-states}

We provide two examples of chiral edge codes. They make use of two different chiral topological orders.

\begin{exmp}[Chiral Ising code]
\label{exmp:chiral_Ising}

We use the Ising anyon topological order~\cite{Kitaev2005}. There are three anyon types $\calA_{\rm Ising} =\{ 1, \sigma, \epsilon \}$, with quantum dimensions $d_1=1, d_\sigma = \sqrt{2}$, $d_\epsilon=1$, and fusion rule $\sigma \times \sigma = 1 + \epsilon$, $\epsilon \times \epsilon =1$ and $\epsilon \times \sigma = \sigma$. The chiral central charge is $c=1/2$. The chiral edge of it should be described by the holomorphic (anti-holomorphic) part of an Ising CFT. On a cylinder, the entire set of primary states is
\begin{equation}
    |\mfa, \phi_\mfa(L),\phi_{\bar{\mfa}}(R)\rangle, \quad
    \mfa \in \calA_{\rm Ising}.
\end{equation}
In total, there are three of them.
Because there is a unique $\phi_\mfa$ for any $\mfa$ for this example, we shall simplify the notation 
as
\begin{equation}
    |\psi^{\mfa}\rangle := |\mfa, \phi_\mfa(L),\phi_{\bar{\mfa}}(R)\rangle,  \quad
    \mfa \in \calA_{\rm Ising}.
\end{equation}
The chiral Ising code has the following choices of code subspaces
\begin{equation}
    \begin{aligned}
        &\text{span}\{ |\psi^1\rangle,|\psi^\sigma\rangle  \}, \\
         &\text{span}\{ |\psi^1\rangle,|\psi^\epsilon\rangle  \}, \\
          &\text{span}\{ |\psi^\sigma\rangle,|\psi^\epsilon\rangle  \}, \\
          &\text{span}\{ |\psi^1\rangle,|\psi^\sigma\rangle,|\psi^\epsilon\rangle  \}. 
    \end{aligned}
\end{equation}
Three of the choices are 2-dimensional, and one is 3-dimensional.
\end{exmp}

\begin{exmp}[Chiral semion code]
\label{exmp:chiral_Semion}

The chiral semion topological order has two anyon types $\calA_{\rm semion} =\{1, s\}$. It is associated with the smallest nontrivial modular tensor category~\cite{Rowell2007}.
Here, $s$ is the semion, with Abelian fusion rule $s \times s =1$, $d_s=1$ and its topological spin $\theta_s = i$. The chiral central charge is $c_-=1$. The set of edge primary states is
\begin{itemize}[leftmargin=15pt]
    \item for the sector of trivial anyon $1\in \calA_{\rm semion}$,
    \begin{equation}
    |\psi_1^{I,J}\rangle:=|1, \phi^I_1(L), \phi_1^J(R)\rangle,
\end{equation}
where $I, J$ take values on integer lattices $\mathbb{Z}$. 
    \item for the sector of semion $s\in \calA_{\rm semion}$,
    \begin{equation}
     |\psi_s^{I,J}\rangle:= |s, \phi^I_s(L), \phi_{{s}}^J(R)\rangle,
\end{equation}
where $I, J$ take values on a shifted integer lattices $\mathbb{Z}+ \frac{1}{2}$.
\end{itemize}
Note that there are multiple (in fact, an infinite number) of cylinder primary states that correspond to any chosen anyon type. When there is $U(1)$ charge conservation symmetry, it can be realized by the $\nu = 1/2$ bosonic Laughlin state. In that context, the values $I$ and $J$ represent the net $U(1)$ charge of the two edges $(q_L,q_R) = (I,J)$.

To obtain a code space, it is necessary to have a state in the vacuum sector ($\mfa=1$) and a state in the semion sector ($\mfa=s$): 
\begin{equation}
    \mathbb{V}^{\chi} \bigl(\cylindericon[1.2]\bigr)
:= \text{span}\{   |\psi_1^{I,J}\rangle,   |\psi_s^{I',J'}\rangle \}.
\end{equation}
Note that we still have the freedom to choose $I,J \in \mathbb{Z}$ and $I',J' \in \mathbb{Z}+1/2$. For each such choice, we obtain a code subspace.

For later convenience, we introduce another shorthand notation for some of the frequently used code states as
\begin{equation}
\begin{aligned}
        |0, 0\rangle &= |\psi_{1}^{0,0}\rangle \\
        |s, s\rangle &= |\psi_{s}^{\frac{1}{2},\frac{1}{2}}\rangle\\
        |3s,s\rangle &= |\psi_{s}^{\frac{3}{2},\frac{1}{2}}\rangle\\
        |s, t\rangle &= |\psi_{s}^{\frac{1}{2},-\frac{1}{2}}\rangle.
\end{aligned}
\end{equation}
 \end{exmp}

\subsection{Dimensional reduction}\label{subsec:dimension-reduction} 

We now describe the dimensional-reduction correspondence between chiral topological order on the cylinder and one-dimensional rational conformal field theory (RCFT). It provides a way to compute certain cutoff-independent entanglement quantities.

The correspondence has two ingredients: First, there is a geometric projection
from a 2D cylinder to a 1D circle as illustrated in Fig.~\ref{fig:Dimension-Reduction}. Second, cylinder primary states are matched with primary states of the associated RCFT. We only assume that this correspondence preserves UV-finite entanglement quantities, such as the relative entropy and the mutual information between separated regions. We do not assume equality of microscopic reduced density matrices, entanglement spectra, or cutoff-dependent entropies. The assumption is used only for relative entropies and mutual information between separated full-column regions.

\begin{figure}[H]
\centering
\includegraphics[width=0.85\linewidth]{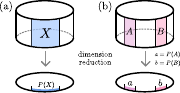}
\caption{{\bf Dimensional reduction.} It relates the chiral topological order on a cylinder and a certain RCFT on a circle. (a) A full column $X$ is projected to an interval $P(X)$. (b) Two full columns, $A$ and $B$, are projected onto disjoint intervals $a$ and $b$. We shall encounter regions of the cylinder that cannot be associated with the dimensional reduction; see Fig.~\ref{fig:Srel}.}
\label{fig:Dimension-Reduction}
\end{figure}

\emph{(i) Geometric projection.} Let the cylinder be $
M = S^1\times [0,L_y]$, with coordinates \((u,y)\), where \(u\in[0,2\pi)\) is the angular coordinate around the cylinder and \(y\in[0,L_y]\) is the coordinate across the cylinder. For an arc \(\mathcal W_{j}\in (u_j-\frac{\theta_j}{2},u_j+\frac{\theta_j}{2}]\subset S^1\) with angle width $\theta_j$, we define the corresponding \emph{full column}
in the cylinder by
\begin{equation}
  X_{j}
  :=
  \{(u,y)\in M:\ u\in\mathcal W_{j},\ 0\le y\le L_y\}.
  \label{eq:Qi_full_column}
\end{equation}
The dimensional-reduction map projects this full column to the interval with the same angular support in the 1D circle:
\begin{equation}
    P\left(X_{j}\right):= \mathcal{W}_{j}\subset S^1.
\end{equation}
as shown in Fig.~\ref{fig:Dimension-Reduction} (a). For a union of several disjoint arcs, this map is still applicable (see Fig.~\ref{fig:Dimension-Reduction} (b)).

\emph{(ii) Primary-state and UV-finite entanglement quantities correspondence.} 
We shall be interested in cylinder primary states
\begin{equation}
    \ket{\psi^{\mfa,I,J}}
    :=
    \ket{\mfa,\phi_\mfa^{I}(L),\phi_{\bar \mfa}^{J}(R)},
    \quad \mfa\in\mathcal A.
    \label{eq:DR_cylinder_primary}
\end{equation}
for the purpose of studying chiral edge code as shown in~\eqref{eq:chiCFT_codespace_primary_relent}. Here \(\mfa\) labels the bulk anyon sector, while \(I,J\) specify the chosen cylinder primary states on the two physical edges. 
By the dimensional reduction, we mean a linear map from cylinder primary state to a primary state of the 1D RCFT:
\begin{equation}
   \Gamma:\quad  \ket{\psi^{\mfa,I,J}}
    \quad\to\quad
    \ket{\varphi^{\mfa,I,J}},
    \label{eq:DR_primary_correspondence}
\end{equation}
which preserves the inner product and the following matching assumption (Assumption~\ref{asmp:UVfinite_entanglement_map_columns}).
Here, the labels $\mfa,I,J$ collectively determine a Virasoro primary $\ket{\varphi^{\mfa,I,J}}$ of the 1D RCFT state inherited from the cylinder primary state.   


We note that our notion of dimensional reduction is closely related to ideas in existing literature about the correspondence between 1+1D CFT and chiral edges, such as~\cite{Qi2012,Chiral-vira2024,cross-ratio2024}. For the purpose of studying entanglement quantities, we shall need the following:


\begin{assumption}[UV-finite entanglement quantities matching]
\label{asmp:UVfinite_entanglement_map_columns}
Consider states $\{ |\Psi_k\rangle \}_{k=1}^m$ on the cylinder of the form $\sum_{\mfa,I,J} c_{\mfa,I,J}|\psi^{\mfa,I,J}\rangle$. Any UV-finite entanglement quantity computed with this set of states and a set of columns $\{X_j\}_{j=1}^{n}$ must be identical to the same quantity computed for states $\{ \Gamma|\Psi_k\rangle \}_{k=1}^m$ on intervals $\{P(X_j)\}_{j=1}^{n}$.
\end{assumption}
 
We emphasize that we do not assume the matching of cutoff-dependent quantities, such as the von Neumann entropy of a single region or the mutual information of adjacent intervals. The following are examples of the identification of UV-finite entanglement quantities suggested by Assumption~\ref{asmp:UVfinite_entanglement_map_columns}. We shall use them explicitly in the study of chiral edge code.

\emph{Relative entropy.} The relative entropy is defined given two states $\{ |\Phi_1\rangle, |\Phi_2\rangle\}$ and a region $Z =X_1 \cup X_2  \cup \cdots$ formed by a union of full columns. It is UV finite, and according to Assumption~\ref{asmp:UVfinite_entanglement_map_columns}, 
\begin{equation}
S\left(\rho_Z^{\ket{\Psi_1}}\middle\|\rho_Z^{\ket{\Psi_2}}\right)=S\left(\rho_{P(Z)}^{\Gamma\ket{\Psi_1}}\middle\|\rho_{P(Z)}^{\Gamma\ket{\Psi_2}}\right),
\end{equation}

\emph{Mutual information of disjoint full columns.}
For two disjoint full columns $A,B$ (as in Fig.~\ref{fig:Dimension-Reduction}(b)), the mutual information is defined given a single state $|\Psi\rangle$, and is UV finite. According to Assumption~\ref{asmp:UVfinite_entanglement_map_columns}, we have
\begin{equation}
  I\left(A:B\right)_{\ket{\Psi}}
  \;=\;
  I\left(P(A):P(B)\right)_{\Gamma\ket{\Psi}}.
  \label{eq:MI_column_match}
\end{equation}  
No such identification is assumed when $A$ and $B$ are adjacent, where mutual information is UV-sensitive.

\subsubsection{Examples of dimensional reduction}

Applying the idea of dimensional reduction to code subspaces of a chiral edge code, we obtain corresponding code subspaces of a 1D CFT code.
\begin{equation}
\mathbb V^{\rm CFT} = \Gamma(\mathbb{V}^\chi \bigl(\cylindericon[1.2]\bigr)).
\end{equation}
We give two concrete examples of dimensional reduction.
Later, we shall compare the information robustness of the chiral edge code with that of its 1D counterpart.

\begin{exmp}[Ising anyon versus Ising chain]\label{exmp:1D-Ising}
The dimensional reduction of the Ising anyon topological order (Example~\ref{exmp:chiral_Ising}) corresponds to a 1D RCFT with central charge $c=1/2$. Such a CFT is necessarily the Ising minimal model. On the lattice, such a CFT can be realized by a transverse-field Ising chain at its critical point. The Ising CFT has three primary states $|\varphi^1\rangle$, $|\varphi^\sigma\rangle$ and $|\varphi^\epsilon\rangle$. They correspond to the lowest-energy states of the cylinder and correspond to the three anyon sectors: vacuum, $\sigma$ and $\epsilon$ of the Ising topological order.
\end{exmp}

\begin{exmp}[Chiral semion versus compactified free boson RCFT]\label{exmp:1D-semion}
The dimensional reduction of chiral semion topological order (Example~\ref{exmp:chiral_Semion}) is the compactified free boson CFT with $c=1$ at compactification radius $R=\sqrt{2}$. The lowest primary state of this CFT has $(h,\bar{h})$ being $(0,0)$ and $ (1/4,1/4)$, where $h$ and $\bar{h}$ are the holomorphic and anti-holomorphic conformal weights.
These CFT primary states are the dimensional reduction of cylinder primary states of the trivial anyon $|\varphi_{1}^{0,0}\rangle$ and 4 states with the semion superselection sectors
\begin{equation}
    |\varphi_{s}^{\frac{1}{2},\frac{1}{2}}\rangle, \quad |\varphi_{s}^{\frac{1}{2},-\frac{1}{2}}\rangle, \quad |\varphi_{s}^{-\frac{1}{2},\frac{1}{2}}\rangle, \quad |\varphi_{s}^{-\frac{1}{2},-\frac{1}{2}}\rangle,
\end{equation} 
for which $(h,\bar{h}) = (1/4, 1/4)$, 
and with changes on each edge being $\pm 1/2$.

\end{exmp}

\section{Power laws and the robustness of chiral edge code} 
\label{sec:coherent_loss_vs_relent}

We have introduced the chiral edge code in Def.~\ref{def:chiral-edge-code}. In this section, we present the theoretical analysis of its coherent-information loss due to the erasure of a local region. We contrast this behavior of the 2D chiral edge code with the 1D CFT code obtained by dimensional reduction. The crucial finding can be stated in terms of a set of power-law exponents $\alpha$, $\beta$, and $\gamma$ that we define in terms of the relative entropy between the 1D CFT and the 2D chiral edge states.  

This section is organized as follows. First, we introduce three power-law exponents of relative entropy in Sec.~\ref{sec:S-rel-power}. In Sec.~\ref{subsec:chi_CFT_code_relent}, we explain an exact relation between relative entropy and the coherent information loss. In Sec.~\ref{subsec:CFT-code_relent}, we explain how these exponents are related to the coherent information loss in 2D chiral edge code and its dimensional reduction. In Sec.~\ref{subsec:theoretical-summary}, we provide a physical summary of why such powers capture the robustness of memory.

\begin{figure*}[t]
      \centering
    \includegraphics[width=0.99\linewidth]{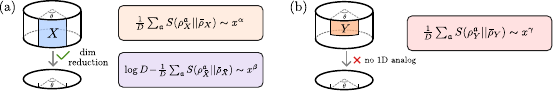}
    \caption{Three power law exponents defined by relative entropy behavior at small $x$, where $x = \theta/2\pi$ vanishes at small region sizes. (a) The exponents $\alpha$ and $\beta$ have a 1D analog by dimensional reduction. Here $X$ is an interval times an arc that touches both edges and has angular size $x$. (b) The exponent $\gamma$ has no analog of dimensional reduction, where the region $Y$ touches only one edge of the cylinder. }
    \label{fig:Srel}
\end{figure*} 

\subsection{Relative entropy and power law}\label{sec:S-rel-power}

The relative entropy is defined given two density matrices \(\rho\) and \(\lambda\) as
\begin{equation}
    S(\rho\|\lambda):=\Tr(\rho\log\rho-\rho\log\lambda).
\end{equation}
This quantity is non-negative, vanishes if and only if \(\rho=\lambda\), and is UV-finite in quantum field theories. 
We consider three kinds of relative entropy behavior for a code subspace of chiral edge code. Two of them ($\alpha,\beta$) have a dimensional reduction picture of a 1D CFT code with a certain code subspace, and one of them ($\gamma$) has no 1D analog, as summarized in Fig.~\ref{fig:Srel}.

To state the precise definition of the powers,
we recall that, in a chiral edge code, each codeword in \(\mathbb{V}^{\chi} \bigl(\cylindericon[1.2]\bigr)\) is a cylinder primary state
\begin{equation}
    \ket{\psi_Q^\mfa}:=
    \ket{\mfa,\phi_\mfa^{I_\mfa}(L),\phi_{\bar \mfa}^{J_\mfa}(R)},
\end{equation}
and corresponds to a distinct anyon type \(\mfa\in\mathcal A_{\rm code}\). For any region \(Y\subseteq Q\), let $\rho_Y^\mfa:=\Tr_{Q\setminus Y}\ket{\psi_Q^\mfa} \bra{\psi_Q^\mfa}$ and $\bar\rho_Q:=\frac1D\sum_{\mfa\in\mathcal A_{\rm code}}
    \ket{\psi_Q^\mfa} \bra{\psi_Q^\mfa}$, with $D := |\calA_{\rm code}|$.  
The defining properties of the exponents are
\begin{eqnarray}
   x^\alpha &\sim& \frac{1}{D}\sum_{\mfa \in \calA_{\rm code}} S(\rho^\mfa_X|| \bar{\rho}_X), \label{eq:alpha-def}\\
   x^\beta &\sim& \log D - \frac{1}{D}\sum_{\mfa \in \calA_{\rm code}} S(\rho^\mfa_{\bar{X}}|| \bar{\rho}_{\bar{X}}), \label{eq:beta-def}\\
   x^\gamma &\sim&  \frac{1}{D}\sum_{\mfa \in \calA_{\rm code}} S(\rho^\mfa_{{Y}}|| \bar{\rho}_{Y}), \label{eq:gamma-def}
\end{eqnarray}
where the regions $X$ and $Y$ are shown in Fig.~\ref{fig:Srel}, ``$\sim$'' refers to the leading power law behavior at small $x$, and the \emph{angular size} $x:=\theta/2\pi$ where $\theta \in [0,2\pi]$ is angle associated with the subsystem. 
We shall relate them to the coherent information loss.

\begin{remark} We provide a few remarks about the exponents:
\begin{enumerate}[leftmargin=13pt]
    \item Quantities in \eqref{eq:alpha-def} and \eqref{eq:beta-def} have dimensional reduction interpretations, and they correspond to UV-finite quantities in 1D CFT, as suggested in Fig.~\ref{fig:Srel}. The reason is that relative entropies are UV finite, and thus the dimensional reduction idea explained in Sec.~\ref{subsec:dimension-reduction} applies. 
    
    \item The exponent $\alpha$ can be computed using 1D CFT techniques.  One method is to do a direct computation using the replica trick~\cite{Lashkari2014,Lashkari2015,Sarosi2016,Sarosi2017,Chowdhury_2022,Lashkari2026} for small intervals. Another more naive way is to derive a certain joint convexity upper bound on $\alpha$. As we show, the value $\alpha$ in the replica computation generally saturates the upper bound. See Appendix~\ref{app:A-alpha}.
 
    \item The justification of the power law related to $\beta$ is suggested by our numerical results (Sec.~\ref{sec:power_law_numerics}), and it is also backed up by the most recent progress on replica computation in conformal field theory, Ref.~\cite{Lashkari2026}, with the relevant finding we summarize in Appendix~\ref{app:A-beta}.  
    \item All powers satisfy $\alpha,\beta,\gamma >0$. This is due to the monotonicity of relative entropy. There are examples $\alpha > \beta$ and other examples with $\alpha < \beta$; see Sec.~\ref{sec:power_law_numerics}. We do not know a theoretical way to compute $\gamma$ exactly, but in Prop.~\ref{prop:powers}, we show $\gamma \ge \alpha$.

    \item We will define closely related quantities $\gamma_{\mfa \mfb}$ ($\alpha_{\mfa \mfb}$) for the purpose of understanding power-law-range recovery map in Sec.~\ref{sec:quasi-local-R} and Appendix~\ref{app:gamma-computation}. 
\end{enumerate}
    
\end{remark}

\subsection{Coherent-information loss versus relative entropy}
\label{subsec:chi_CFT_code_relent}

We now establish a concrete relation between the coherent information loss and the relative entropy in the context of chiral edge code. We first present a lemma.

\begin{lemma}[Coherent-information loss as relative entropy]
\label{lemma:Delta_Srel-general}
Consider a quantum code with code subspace \(\mathbb{V}=\mathrm{span}\{\ket{\psi_Q^i}\}_{i=1}^D\). Let \(A\subset Q\) be a physical region and \(B=Q\setminus A\). Then
\begin{equation}
\Delta(A;\mathbb{V})
=
\log D+
\frac1D\sum_{i=1}^D
\left[
S \left(\rho_A^i\middle\|\bar\rho_A\right)
-
S \left(\rho_B^i\middle\|\bar\rho_B\right)
\right]
\label{eq:loss_on_chain_relent}
\end{equation}
where \(\bar\rho_Q=\frac1D\sum_i\ket{\psi_Q^i} \bra{\psi_Q^i}\).
\end{lemma}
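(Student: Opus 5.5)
Starting from Lemma~\ref{lemma:delta}, we have $\Delta(A;\mathbb{V})=(S_A+S_Q-S_B)_{\bar\rho_Q}$. The plan is to rewrite each of the three entropies of $\bar\rho$ through the Holevo-type identity for a uniform mixture: for any region $Z$ and $\bar\rho_Z=\frac1D\sum_i\rho_Z^i$,
\begin{equation}
S(\bar\rho_Z)=\frac1D\sum_{i=1}^D S(\rho_Z^i)+\frac1D\sum_{i=1}^D S\left(\rho_Z^i\middle\|\bar\rho_Z\right).
\end{equation}
This follows directly from expanding $S(\rho^i_Z\|\bar\rho_Z)=-S(\rho^i_Z)-\Tr\rho^i_Z\log\bar\rho_Z$ and averaging over $i$, since the averaged second term is $-\Tr\bar\rho_Z\log\bar\rho_Z=S(\bar\rho_Z)$. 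No support issue arises, because $\operatorname{supp}\rho^i_Z\subseteq\operatorname{supp}\bar\rho_Z$, so every relative entropy is finite.

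We apply this identity with $Z=A$ and $Z=B$. For $Z=Q$, we instead use directly that the $\ket{\psi^i_Q}$ are orthonormal, so $\bar\rho_Q$ has spectrum $\{1/D\}$ with multiplicity $D$. This gives $S(\bar\rho_Q)=\log D$. Equivalently, the identity holds with $S(\rho^i_Q)=0$ and $S(\rho^i_Q\|\bar\rho_Q)=\log D$.

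Substituting, we get
\begin{equation}
\Delta=\log D+\frac1D\sum_i\left[S(\rho^i_A)-S(\rho^i_B)\right]+\frac1D\sum_i\left[S(\rho^i_A\|\bar\rho_A)-S(\rho^i_B\|\bar\rho_B)\right].
\end{equation}
The middle bracket vanishes term by term, because each $\ket{\psi^i_Q}$ is pure on $Q=AB$ and hence $S(\rho^i_A)=S(\rho^i_B)$. This yields Eq.~\eqref{eq:loss_on_chain_relent}.

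There is no real obstacle here; the only points needing care are finiteness and supports of the relative entropies, and the use of orthonormality of the codewords for $S_Q$. For infinite-dimensional or field-theoretic settings, one would note that the individual entropies may diverge, but the combination used is a formal identity at the lattice level. The final expression involves only UV-finite relative entropies, which is the point of the rewriting.
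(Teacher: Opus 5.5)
Your proposal is correct and follows essentially the same route as the paper. Both use Lemma~\ref{lemma:delta} with orthonormality to get $S(\bar\rho_A)+\log D-S(\bar\rho_B)$, apply the uniform-ensemble (Holevo) identity on $A$ and $B$, and cancel $S(\rho_A^i)=S(\rho_B^i)$ by purity of each codeword.
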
 

\begin{proof}
Using Lemma~\ref{lemma:delta},
\begin{equation}
\begin{aligned}
\Delta(A;\mathbb{V})
&=(S_A+S_Q-S_B)_{\bar\rho} \\
&=S(\bar\rho_A)+\log D-S(\bar\rho_B).
\end{aligned}
\end{equation}
For any ensemble \(\{p_j,\lambda^j\}\),
\begin{equation}
    \sum_j p_j S \left(\lambda^j\middle\|\sum_k p_k\lambda^k\right)
    =
    S \left(\sum_jp_j\lambda^j\right)-\sum_jp_jS(\lambda^j).
\end{equation}
Applying this identity to the two reduced ensembles on \(A\) and \(B\), and using \(S(\rho_A^i)=S(\rho_B^i)\) for each pure codeword \(\ket{\psi_Q^i}\), gives Eq.~\eqref{eq:loss_on_chain_relent}.
\end{proof}

\begin{tcolorbox}[breakable, enhanced, colback=yellow!5!white,colframe=orange!35!white]
\begin{theorem}
\label{thm:Delta_as_avg_relent_relent}
Consider a chiral edge code on a cylinder \(Q\). Let \(A\) be a local region such that \(B:=Q\setminus A\) contains a noncontractible bulk annulus of the cylinder; e.g., those in Fig.~\ref{fig:cylinder-Q1-relent}. Then
\begin{equation}
\Delta(A;\mathbb{V}^{\chi} \bigl(\cylindericon[1.2]\bigr))
  =
  \frac1D\sum_{\mfa\in\mathcal A_{\rm code}}
  S \left(\rho_A^\mfa\middle\|\bar\rho_A\right).
  \label{eq:Delta_as_avg_relent_relent}
\end{equation}
\end{theorem}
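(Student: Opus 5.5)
Starting from Lemma~\ref{lemma:Delta_Srel-general}, the statement follows once we show that
\begin{equation}
\frac1D\sum_{\mfa\in\mathcal A_{\rm code}} S\left(\rho_B^\mfa\middle\|\bar\rho_B\right)=\log D .
\end{equation}
By the ensemble identity used in the proof of that lemma, this average equals the Holevo quantity $\chi_B := S(\bar\rho_B)-\frac1D\sum_\mfa S(\rho_B^\mfa)$ of the uniform ensemble $\{\frac1D,\rho_B^\mfa\}$. It is bounded above by $\log D$, with equality exactly when the states $\rho_B^\mfa$ have mutually orthogonal supports. So the task reduces to proving that the reduced codewords on $B$ are perfectly distinguishable.

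We do this by passing to the noncontractible bulk annulus $X\subset B$. Each codeword $\ket{\psi_Q^\mfa}$ has no bulk excitations and lies in anyon sector $\mfa$, so its reduction to $X$ is the extreme point $\rho_X^\mfa$ of the information convex set $\Sigma(X)$ (Fig.~\ref{fig:ICS-annulus}(b)). Definition~\ref{def:chiral-edge-code} picks exactly one primary state per anyon type, so distinct codewords carry distinct labels $\mfa\neq\mfb$. By the sector orthogonality~\eqref{eq:sector_orthogonality_annulus}, $F(\rho_X^\mfa,\rho_X^\mfb)=0$.

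We then lift this from $X$ to $B$ using monotonicity of fidelity under partial trace: $F(\rho_B^\mfa,\rho_B^\mfb)\le F(\rho_X^\mfa,\rho_X^\mfb)=0$. Zero fidelity means $\sqrt{\rho_B^\mfa}\sqrt{\rho_B^\mfb}=0$, i.e.\ the supports are orthogonal. For orthogonally supported states, $\bar\rho_B$ is block diagonal, so
\begin{equation}
S\left(\rho_B^\mfa\middle\|\bar\rho_B\right)=\mathrm{Tr}\,\rho_B^\mfa\left(\log\rho_B^\mfa-\log\tfrac1D\rho_B^\mfa\right)=\log D
\end{equation}
for each $\mfa$. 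Substituting into Eq.~\eqref{eq:loss_on_chain_relent}, the $\log D$ terms cancel and we obtain Eq.~\eqref{eq:Delta_as_avg_relent_relent}.

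The one step needing care is the claim that each codeword's reduction to $X$ is precisely the extreme point $\rho_X^\mfa$, rather than some other element of $\Sigma(X)$. This depends on the entanglement-bootstrap structure on the annulus, which the paper already assumes, together with the cylinder primary states having no bulk excitations. Orthogonality on $X$ is only approximate on lattices, with corrections exponentially small in the annulus width; as elsewhere in the paper, we neglect these. Note also that the assumption of one state per sector is essential: two codewords in the same anyon sector would not be separated by $X$.
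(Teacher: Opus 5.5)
Your proposal is correct and follows essentially the same route as the paper. Both arguments use Lemma~\ref{lemma:Delta_Srel-general} to reduce the claim to showing $S(\rho_B^\mfa\|\bar\rho_B)=\log D$, and both obtain orthogonal supports on $B$ from sector orthogonality on the noncontractible annulus $X\subset B$ together with monotonicity of fidelity under partial trace; your Holevo-quantity framing is only a restatement of that same step.
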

\end{tcolorbox}
Note that, the choice of region $A$ in the theorem is quite general and is not limited to a single disk. Some allowed choices are illustrated in Fig.~\ref{fig:cylinder-Q1-relent}.

\begin{figure}[h]
    \centering
    \includegraphics[width=0.99\linewidth]{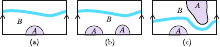}
    \caption{A cylinder $Q$ partition into an erased region $A$ and its complement $B= Q \setminus A$. The complement $B$, in any of the three cases, contains a noncontractible bulk annulus, illustrated as blue strips. The anyon charge of the code words can be measured on $B$.} 
    \label{fig:cylinder-Q1-relent}
\end{figure}

\begin{proof}
By Lemma~\ref{lemma:Delta_Srel-general},
\begin{equation}\label{eq:copy}
\begin{aligned}
&\Delta(A;\mathbb{V}^{\chi} \bigl(\cylindericon[1.2]\bigr))
\\&=
\log D+
\frac1D\sum_{\mfa\in\mathcal A_{\rm code}}
\left[
S \left(\rho_A^\mfa\middle\|\bar\rho_A\right)
-
S \left(\rho_B^\mfa\middle\|\bar\rho_B\right)
\right].
\end{aligned}
\end{equation}
It is known that $B$ contains a noncontractible bulk annulus $X$ shown in Fig.~\ref{fig:ICS-annulus}. By Eq.~\eqref{eq:sector_orthogonality_annulus}, the bulk annulus can detect the anyon sectors of each code word perfectly. Thus, by the monotonicity of fidelity
\begin{equation}
 F(\rho_B^\mfa,\rho_B^\mfb)  \le  F(\rho_X^\mfa,\rho_X^\mfb)=0,
    \quad \mfa \ne \mfb  .
    \label{eq:Fid-orthogonal}
\end{equation}
Since states \(\{\rho_B^\mfa \} \) have mutually orthogonal support, \(\bar\rho_B\) is block diagonal with equal weights. Therefore
\begin{equation}
S \left(\rho_B^\mfa\middle\|\bar\rho_B\right)=\log D, \quad \forall \mfa\in\mathcal A_{\rm code}.
\end{equation}
Plugging this into Eq.~\eqref{eq:copy}, and Eq.~\eqref{eq:Delta_as_avg_relent_relent} follows.
\end{proof}

\subsection{Comparing memory robustness of 1D and 2D with power law exponents}
\label{subsec:CFT-code_relent}

\begin{tcolorbox}[breakable, enhanced, colback=yellow!5!white,colframe=orange!35!white]
\begin{Proposition}\label{prop:powers}
    Consider a 2D chiral edge code with code space $\mathbb{V}^{\chi} \bigl(\cylindericon[1.2]\bigr))$. Let $Y$ be a connected region near an edge as in Fig.~\ref{fig:power-law-region}, for which the angular size $x$ is small. Then,  
    \begin{equation}
        \Delta(Y;\mathbb{V}^{\chi}) \sim  x^\gamma.  
    \end{equation}
    Furthermore, the 1D CFT code subspace $\mathbb{V}^{\rm CFT}$ obtained from the dimension reduction of the chiral edge code satisfies
    \begin{equation}
        \Delta(P(X);\mathbb{V}^{\rm CFT}) \sim  x^{\min\{\alpha,\beta\}},
    \end{equation}
    where $P(X)$ is an interval of the circle that has the same angular size as $Y$ as in Fig.~\ref{fig:power-law-region}. Moreover, $\gamma \ge \alpha$. Consequently,
    \begin{eqnarray}
        \gamma \ge \alpha\geq \min\{\alpha,\beta\}.
    \end{eqnarray}
\end{Proposition}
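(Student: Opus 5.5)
The proof has three parts: the 2D scaling, the 1D scaling, and the comparison $\gamma\ge\alpha$.

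\emph{2D scaling.} Since $Y$ is a small connected region near one edge, its complement $Q\setminus Y$ contains a noncontractible bulk annulus. Theorem~\ref{thm:Delta_as_avg_relent_relent} then applies directly and gives
\begin{equation}
\Delta(Y;\mathbb{V}^{\chi})=\frac1D\sum_{\mfa\in\mathcal A_{\rm code}} S(\rho^\mfa_Y\|\bar\rho_Y).
\end{equation}
By the defining relation \eqref{eq:gamma-def}, the right-hand side scales as $x^\gamma$. This settles the first claim.

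\emph{1D scaling.} In the dimensionally reduced 1D code there is no bulk annulus, so Theorem~\ref{thm:Delta_as_avg_relent_relent} is unavailable. I will instead use the general Lemma~\ref{lemma:Delta_Srel-general} with $A=P(X)$ and $B=P(\bar X)$ on the circle:
\begin{equation}
\Delta(P(X);\mathbb{V}^{\rm CFT})=\frac1D\sum_\mfa S\bigl(\rho^\mfa_{P(X)}\|\bar\rho_{P(X)}\bigr)+\Bigl[\log D-\frac1D\sum_\mfa S\bigl(\rho^\mfa_{P(\bar X)}\|\bar\rho_{P(\bar X)}\bigr)\Bigr].
\end{equation}
The two relative entropies are UV-finite quantities on full columns. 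By Assumption~\ref{asmp:UVfinite_entanglement_map_columns} each equals the corresponding cylinder quantity on $X$ and $\bar X$. By \eqref{eq:alpha-def} and \eqref{eq:beta-def}, the first term therefore behaves as $x^\alpha$ and the bracket as $x^\beta$. Both terms are nonnegative: the first because relative entropy is nonnegative, the bracket by monotonicity together with the bound $S(\rho^\mfa\|\bar\rho)\le\log D$, which holds since $\bar\rho\ge\rho^\mfa/D$. There is therefore no cancellation between them, and their sum is dominated at small $x$ by the smaller exponent, giving $x^{\min\{\alpha,\beta\}}$.

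\emph{Comparison $\gamma\ge\alpha$.} Choose $Y\subseteq X$, where $X$ is the full column with the same angular support as $Y$. Monotonicity of relative entropy under the partial trace $\mathrm{Tr}_{X\setminus Y}$ gives, term by term,
\begin{equation}
S(\rho^\mfa_Y\|\bar\rho_Y)\le S(\rho^\mfa_X\|\bar\rho_X),
\end{equation}
and averaging over $\mfa$ gives $x^\gamma\lesssim x^\alpha$. For small $x$ this forces $\gamma\ge\alpha$, and $\alpha\ge\min\{\alpha,\beta\}$ is trivial.

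The main obstacle is making the scaling statements rigorous as statements about leading exponents. One point is that the monotonicity step needs $Y$ and $X$ to share the same $x$. I will treat $Y$ with $y$-extent fixed (or $x$-independent), so that the containment holds for every small $x$. Since the exponents are defined only via ``$\sim$'', I also expect to argue at the level of leading power laws rather than sharp constants.
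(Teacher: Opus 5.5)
Your proposal is correct and follows essentially the same route as the paper. The first claim uses Thm.~\ref{thm:Delta_as_avg_relent_relent} with the definition of $\gamma$; the second uses Lemma~\ref{lemma:Delta_Srel-general} on $P(X)$ and its complement together with Assumption~\ref{asmp:UVfinite_entanglement_map_columns}; and $\gamma\ge\alpha$ follows from monotonicity of relative entropy under $Y\subset X$. Your added check that both terms are nonnegative (via $\bar\rho\ge\rho^\mfa/D$, so $S(\rho^\mfa\|\bar\rho)\le\log D$) makes explicit the no-cancellation step that the paper uses implicitly when it writes $c_1x^\alpha+c_2x^\beta\sim x^{\min\{\alpha,\beta\}}$.
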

\end{tcolorbox}

\begin{figure}[h]
    \centering
    \includegraphics[width=0.82\linewidth]{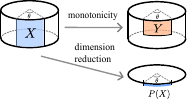} 
    \caption{Regions on the cylinder used in Prop.~\ref{prop:powers} and its proof. \(Y\) is a connected region near one physical edge whose width is $x:= \theta/2\pi$. \(X\) is a full-column strip that has the same width as $Y$, and the dimensional reduction of the region is an interval $P(X)$ on a circle.}
    \label{fig:power-law-region}
\end{figure}

\begin{proof}
The first claim, $\Delta(Y;\mathbb{V}^{\chi}) \sim  x^\gamma$, follows immediately from Thm.~\ref{thm:Delta_as_avg_relent_relent} and the definition of $\gamma$ in Eq.~\eqref{eq:gamma-def}. The second statement follows simply by using Lemma~\ref{lemma:Delta_Srel-general} to rewrite the coherent information loss into three terms, where the two regions ($A$ and $B$ in the lemma) are the interval $P(X)$ and its complement on the circle. 
We apply the dimensional reduction correspondence discussed in Sec.~\ref{subsec:dimension-reduction} to turn the intervals into $X$ and its complement as in Fig.~\ref{fig:power-law-region}. Such a correspondence does not change the relative entropy (Assumption~\ref{asmp:UVfinite_entanglement_map_columns}). Thus, we have
\begin{equation}
    \begin{aligned}
   \Delta(P(X);\mathbb{V}^{\rm CFT}) =& \frac1D\sum_{\mfa\in \calA_{\rm code}}
S \left(\rho_X^\mfa\middle\|\bar\rho_X\right)
\\
&+ \log D 
-\frac1D\sum_{\mfa\in \calA_{\rm code}}
S \left(\rho_{\bar{X}}^\mfa\middle\|\bar\rho_{\bar{X}}\right). 
\end{aligned}
\end{equation}
Then by the definition of $\alpha$ and $\beta$, we have the RHS is $c_1 x^\alpha + c_2 x^\beta \sim x^{ \min\{\alpha,\beta\}}$ at small $x$. 
This completes the proof of the second claim.

The proof of the last claim ($\gamma \ge \alpha$) follows from the monotonicity of relative entropy. Identify $Y$ as a subset of $X$, namely $Y\subset X$. Therefore,
\begin{equation}
    \frac{1}{D}\sum_{\mfa \in \calA_{\rm code}} S(\rho^\mfa_X || \bar{\rho}_X)\ge \frac{1}{D}\sum_{\mfa \in \calA_{\rm code}} S(\rho^\mfa_Y || \bar{\rho}_Y),
\end{equation}
by the monotonicity of relative entropy, where the inequality holds for any $x$. Now, applying to the context of small $x$, we must have 
\begin{equation}
    c_1 x^\alpha \ge c_3 x^\gamma , \quad \forall x \ll 1,
\end{equation}
for positive constants $c_1$ and $c_3$. For this to be true, we must have $\gamma \ge \alpha$. This completes the proof. 
\end{proof}

\subsection{Effective code distance}\label{subsec:theoretical-summary}

Let us unpack the physical meaning of Prop.~\ref{prop:powers}, which tells us about the power law dependence of coherent information loss at small region size $x$.
\begin{equation}
        \Delta(Y;\mathbb{V}^{\chi}) \approx c_3 x^\gamma.  
\end{equation}
Here $\gamma>0$ and $c_3$ is a positive constant, and thus the coherent information loss increases as the region's angular width grows. This is an obvious remark.

More importantly, the larger the power $\gamma$, the more robust the memory. This can be argued by considering an \emph{effective code distance} $d^*$ such that, for erasure angular length $x \le d^*$, the coherent information loss is at most $\delta$, namely
\begin{equation}
    \Delta(Y;\mathbb{V}^{\chi}) \le \delta.
\end{equation}

In other words, $d^*$ as a function of $\delta$ behaves like
\begin{equation}
    d^*(\delta)  = \left(\frac{\delta}{c_3} \right)^{1/\gamma}.
\end{equation}
Suppose one is interested in small $\delta$, then the larger $\gamma$ is, the better the behavior of $d^*$ at small $\delta$. 

This idea of considering effective code distance is naturally motivated by a sequence of recent works on AQEC codes. In particular,~\cite{Yi_2024,2025arXiv251004453Y,Bentsen_2024} considered closely related notions of effective code distances, using a variety of related local quantities as the errors. In fact, the idea of local erasure noise and the destruction of memory dates back to early works on stabilizer codes~\cite{Grassl1997}.

From this it is also easy to understand why the dimensional reduction of the chiral edge code has a smaller power law exponent $\min\{\alpha,\beta\} \le \alpha \le \gamma$, according to
 \begin{equation}
        \Delta(P(X);\mathbb{V}^{\rm CFT}) \sim  x^{\min\{\alpha,\beta\}}.
 \end{equation}
 The intuitive reason is that if we squash the 2D cylinder into a 1D circle, more extended regions of the cylinder (such as $X$ in Fig.~\ref{fig:power-law-region}) will be treated as a local region. Decoherence on such regions can give larger coherent information loss. The effective code distance, for this 1D CFT code obtained by the dimensional reduction, is
 \begin{equation}
     d^*(\delta)  = \left(\frac{\delta}{c'} \right)^{1/\min\{\alpha,\beta\}},
 \end{equation}
 with a certain constant $c'$.
 Whenever $\min\{\alpha,\beta\}$ is strictly smaller than $\gamma$, the 2D chiral edge code has enhanced robustness over the 1D CFT code obtained by the dimensional reduction.  

These exponents will be computed explicitly in explicit models of chiral edge codes; see Sec.~\ref{sec:power_law_numerics}. In particular, in many (but not all) examples we compute, $\min\{\alpha,\beta\} = \beta$ and it is strictly smaller than $\gamma$. For such examples, there is a strictly better information robustness of the chiral edge code over its \(1\)D dimensional reduction.

\section{Power law exponents in examples}
\label{sec:power_law_numerics}

In this section, we give explicit examples of the important power law exponents $\alpha,\beta$ and $\gamma$ defined in the previous section (Sec.~\ref{sec:coherent_loss_vs_relent}).
For $\alpha$ and $\beta$ we provide a numerical computation and contrast it with a theoretical prediction based on Appendix~\ref{app:CFT_relative_entropy} and Ref.~\cite{Lashkari2026}. The numerical data are collected on critical spin chain lattice models, which are known to be dimensional reductions of 2D chiral topological orders. For the exponent $\gamma$, which is currently not available by analytical methods, we provide numerical data for the chiral semion example through a Bosonic Laughlin wavefunction with $\nu=\frac{1}{2}$. For the Ising example, we only claim the general lower bound $\gamma\geq \alpha$.

\begin{figure*}[t]
\centering

\begin{minipage}[c]{0.25\textwidth}
\centering
\resizebox{0.95\linewidth}{!}{%
\includegraphics[width=0.2\textwidth]{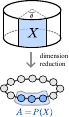}
}
\end{minipage}
\hfill
\begin{minipage}[c]{0.7\textwidth}
\centering

\begin{overpic}[width=0.9\linewidth]{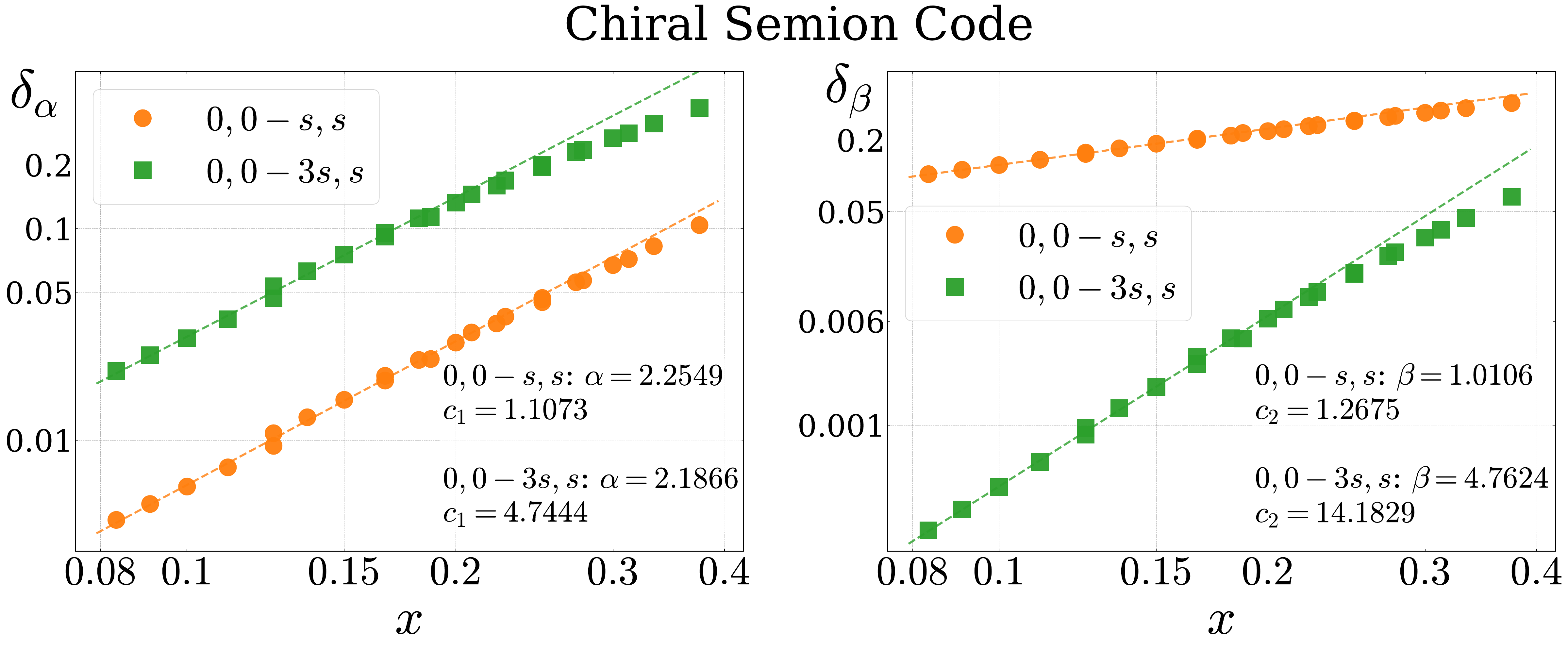}
  \put(2,40){\large\bfseries (a)}
\end{overpic}

\vspace{0.8em}

\begin{overpic}[width=0.9\linewidth]{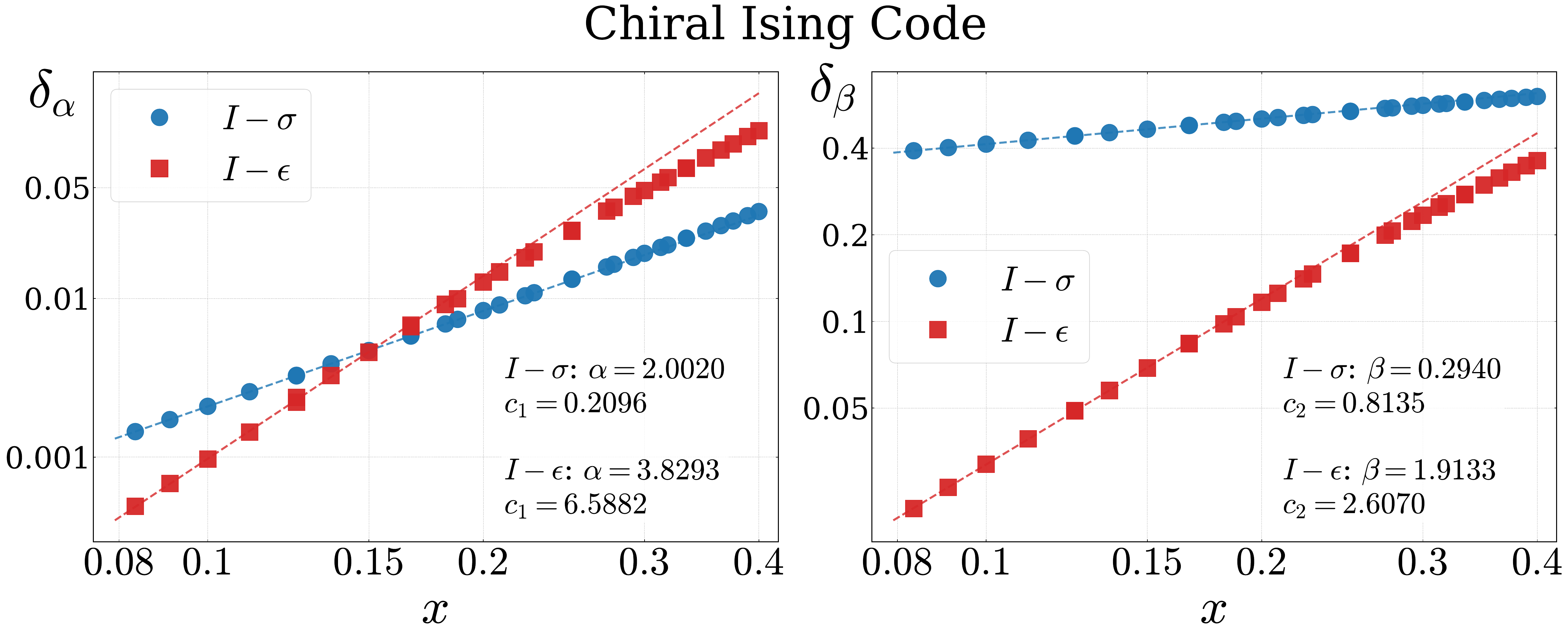}
  \put(2,40){\large\bfseries (b)}
\end{overpic}

\end{minipage}

\vspace{0.8em}

\noindent
\begin{minipage}[t]{0.495\linewidth}
  \vspace{0pt}
  \centering
  \begin{overpic}[width=\linewidth]{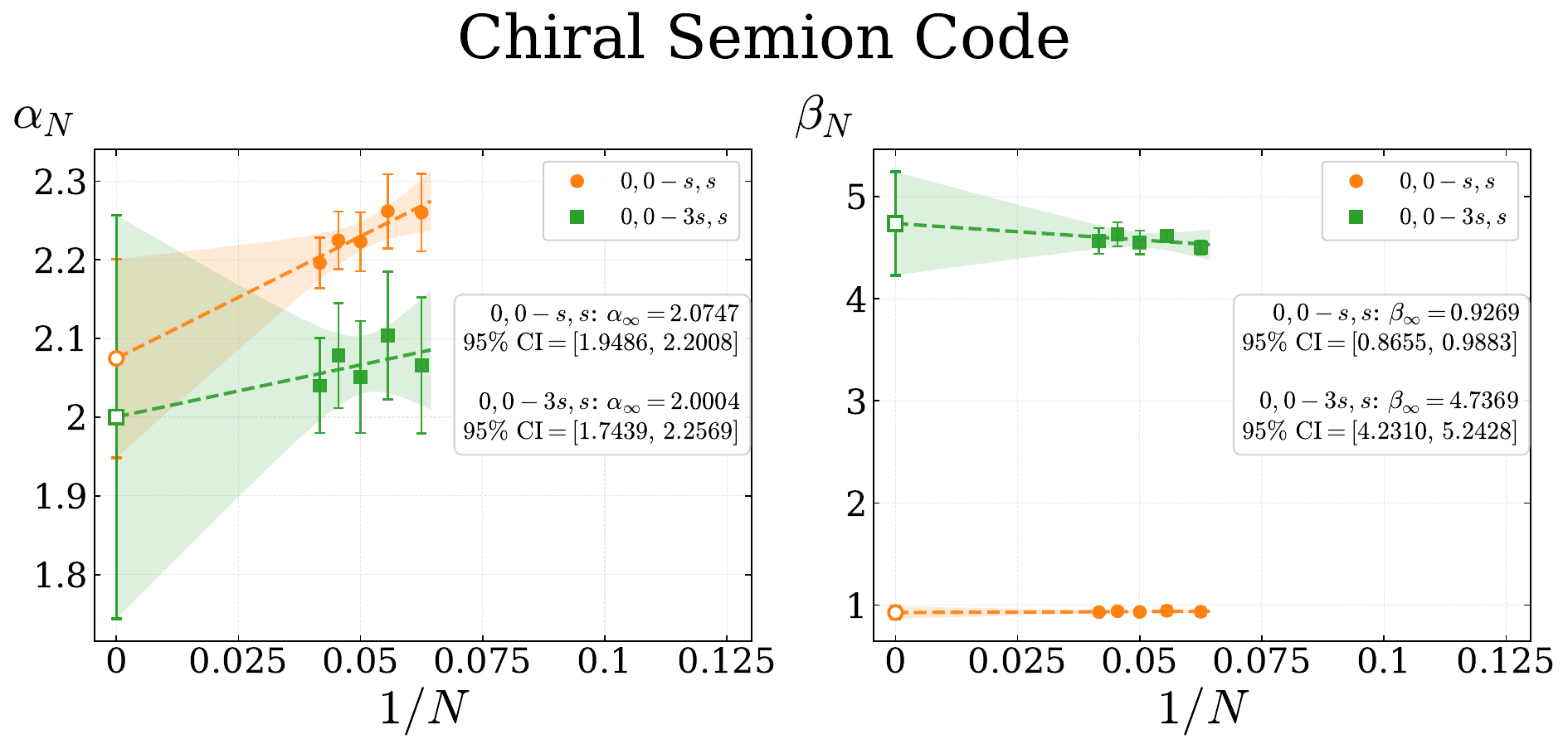}
    \put(1,45){\large\bfseries (c)}
  \end{overpic}
\end{minipage}\hfill%
\begin{minipage}[t]{0.495\linewidth}
  \vspace{0pt}
  \centering
  \begin{overpic}[width=\linewidth]{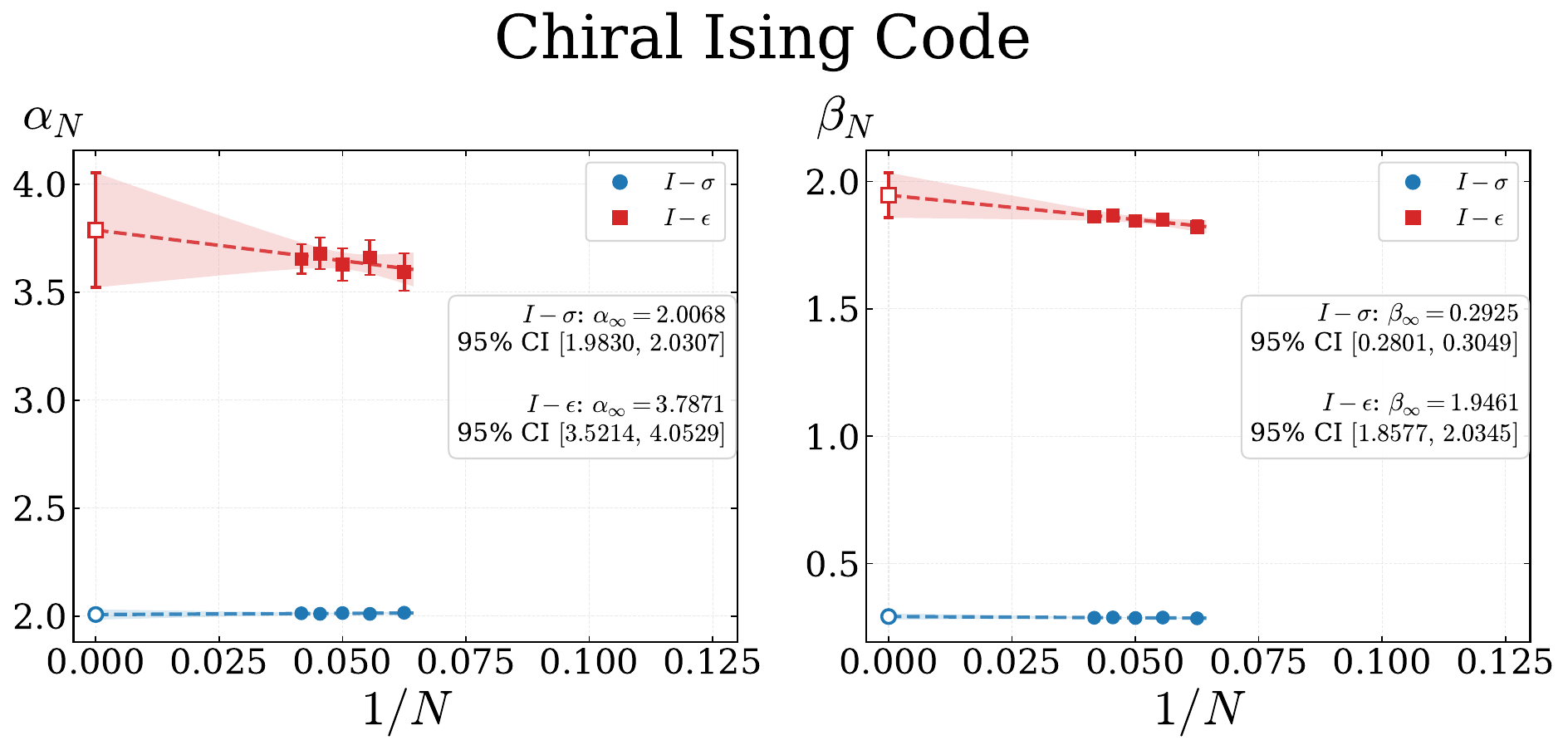}
    \put(1,45){\large\bfseries (d)}
  \end{overpic}
\end{minipage}

\caption{{\bf Fitting $\alpha,\beta$ on a finite 1D circle.}
Numerical scaling relative entropies quantities $\delta_\alpha :=\frac{1}{D} \sum_\mfa S(\rho^\mfa_A||\bar{\rho}_A) \approx c_1 x^\alpha $ and $\delta_\beta :=\log D - \frac{1}{D} \sum_\mfa S(\rho^\mfa_{\bar{A}}||\bar{\rho}_{\bar{A}}) \approx c_2 x^\beta$. Here $A$ is an interval, with width $x$, where $A= P(X)$ in the notion of Fig.~\ref{fig:power-law-region}. The system sizes here are $N=16,18,20,22,24$. 
(a) Verifying the powers of the chiral semion code, with code subspace
\(\mathbb{V}^{\chi}\bigl(\cylindericon[1.2]\bigr)\) as \(\mathrm{span}\{\ket{\psi_1^{0,0}},\ket{\psi_s^{\frac{1}{2},\frac{1}{2}}}\}\) and \(\mathrm{span}\{\ket{\psi_1^{0,0}},\ket{\psi_s^{\frac{3}{2},\frac{1}{2}}}\}\)
explained in Eq.~\eqref{eq:CFT-code-subspace} using a 1D free boson CFT. 
(b) Verifying the powers of chiral Ising code, with two choices of code subspace 
\(\mathbb{V}^{\chi}\bigl(\cylindericon[1.2]\bigr)\) as
\(\mathrm{span}\{\ket{\psi^1},\ket{\psi^\sigma}\}\) and
\(\mathrm{span}\{\ket{\psi^1},\ket{\psi^\epsilon}\}\). We use the dimensional reduction codes states (as 1D Ising CFT primary states) realized by the 1D transverse field Ising chain at the critical point. All the power-law fits are based on the data with $x\leq 0.15$. 
(c) Finite-size extrapolation for the chiral semion codes.
For each size $N$, the data with $x\leq0.25$ are independently fitted
to $\delta_\alpha(N,x)=c_{1,N}x^{\alpha_N},\,\,\delta_\beta(N,x)=c_{2,N}x^{\beta_N}$. The resulting finite-size exponents are extrapolated linearly in $1/N$, using $\alpha_N=\alpha_\infty+\frac{a_\alpha}{N},\,\,
\beta_N=\beta_\infty+\frac{a_\beta}{N}$. The intercepts at $1/N=0$ determine the thermodynamic-limit exponents $\alpha_\infty$ and $\beta_\infty$. 
(d) The same fixed-$N$ fitting and thermodynamic-limit extrapolation
for the chiral Ising codes.  
}
\label{fig:alpha_beta}
\end{figure*}

\subsection{The 2D chiral semion code}
\label{subsec:numerics_compact_boson_beta_relent}

For the 2D chiral semion code, we pick two code subspaces spanned by two states on the cylinder:  
\begin{equation}\label{eq:CFT-code-subspace}
\begin{aligned}
&\mathbb{V}^{\chi} \bigl(\cylindericon[1.2]\bigr)=\mathrm{span}\{\ket{\psi_1^{0,0}},\ket{\psi_s^{\frac{1}{2},\frac{1}{2}}}\}, \quad \text{and} \\
& \mathbb{V}^{\chi} \bigl(\cylindericon[1.2]\bigr)=\mathrm{span}\{\ket{\psi_1^{0,0}},\ket{\psi_s^{\frac{3}{2},\frac{1}{2}}}\}.
\end{aligned}
\end{equation} 
Here the notation of states $\ket{\psi_1^{0,0}}$, $\ket{\psi_s^{\frac{1}{2},\frac{1}{2}}}$ and $\ket{\psi_1^{0,0}},\ket{\psi_s^{\frac{3}{2},\frac{1}{2}}}$ are as explained in 
Example~\ref{exmp:chiral_Semion}. According to Example~\ref{exmp:1D-semion}, the dimension reduction of the chiral semion code is the 1D compact boson RCFT with central charge $c=1$ and compactification radius \(R=\sqrt2\). The three states correspond to the vacuum and primary states with scaling dimension $(1/4, 1/4)$, $(9/4, 1/4)$ of the CFT. Such states can be realized by the lattice wave functions of Ref.~\cite{Nielsen2012,Nielson2014} on a uniform circle.  
Further details are available in Appendix~\ref{app:FQH_wavefunction}.

By dimensional reduction, the powers $\alpha$ and $\beta$ can be computed on the 1D spin chain model. We numerically compute $\alpha$ and $\beta$ on the 1D spin chain with number of qubits $N = 16,18,20,22,24 $ as in Fig.~\ref{fig:alpha_beta}, and find
\begin{equation}
    (\alpha,\beta) \approx 
\begin{cases} 
(2.255,1.011), & \operatorname{span}\{\ket{\psi_1^{0,0}},\ket{\psi_s^{\frac{1}{2},\frac{1}{2}}}\} \\ 
(2.187,4.762), & \operatorname{span}\{\ket{\psi_1^{0,0}},\ket{\psi_s^{\frac{3}{2},\frac{1}{2}}}\}.
\end{cases}
\end{equation}   
Also, we made finite-size extrapolation of the exponents $\alpha$ and $\beta$ as shown in Fig.~\ref{fig:alpha_beta} (c)~\footnote{In panels (c) and (d), the error bars on the filled finite-$N$ symbols represent one standard error of the exponent obtained from the corresponding fixed-$N$ log-log regression. These standard errors quantify the uncertainty of the power-law fits but are not used as weights in the subsequent $1/N$ regression. The open symbols at $1/N=0$ denote the fitted thermodynamic-limit intercepts, and their error bars are two-sided 95\% confidence intervals constructed using Student's $t$ distribution. The shaded regions show the pointwise 95\% confidence bands for the fitted mean linear extrapolations.},
\begin{equation}
    (\alpha_{\infty},\beta_{\infty}) \approx 
\begin{cases} 
(2.075,0.927), & \operatorname{span}\{\ket{\psi_1^{0,0}},\ket{\psi_s^{\frac{1}{2},\frac{1}{2}}}\} \\ 
(2.000,4.737), & \operatorname{span}\{\ket{\psi_1^{0,0}},\ket{\psi_s^{\frac{3}{2},\frac{1}{2}}}\}.
\end{cases}
\end{equation} 

Our numerical finding can also be verified with an analytical replica-trick calculation of relative entropy for CFT: 
\begin{equation}
    (\alpha_{\text{th}},\beta_{\text{th}}) =
\begin{cases} 
(2,1), & \operatorname{span}\{\ket{\psi_1^{0,0}},\ket{\psi_s^{\frac{1}{2},\frac{1}{2}}}\} \\ 
(2,5), & \operatorname{span}\{\ket{\psi_1^{0,0}},\ket{\psi_s^{\frac{3}{2},\frac{1}{2}}}\} ;
\end{cases}
\end{equation}
see Appendix~\ref{app:alpha-beta} and~\cite{Lashkari2026}.
We could see the coherent-information loss on the small interval of 1D CFT is governed by $\min\{\alpha,\beta\} =\beta$ for code subspace $\operatorname{span}\{\ket{\psi_1^{0,0}},\ket{\psi_s^{\frac{1}{2},\frac{1}{2}}}\}$, and $\min\{\alpha,\beta\} =\alpha$ for $\operatorname{span}\{\ket{\psi_1^{0,0}},\ket{\psi_s^{\frac{3}{2},\frac{1}{2}}}\}$.

We further test the exponent \(\gamma\) related to the decoherence of a local region near the edge of the chiral edge code. We compute this by preparing the code states on the cylinder in Fig.~\ref{fig:edge_gamma}. We compute $\gamma$ by directly calculating $\delta_\gamma=\frac{1}{D}\sum_\mfa S(\rho_Y^\mfa\|\bar{\rho}_Y)$.

Such chiral semion wave functions, for both the vacuum and the excited states, are based on the \(\nu=\tfrac12\) analytical lattice Laughlin states in~\cite{Nielsen2012,Nielson2014}, for which we review the relevant details in Appendix~\ref{app:FQH_wavefunction}. The result is shown in~Fig.~\ref{fig:edge_gamma}. In particular, the power-law fit gives \(\gamma\approx2.774,2.471\) for the two code subspaces. This agrees with the theoretical lower bound as $\gamma \ge  \alpha$.

\subsection{2D chiral Ising code}
\label{subsec:numerics_ising_beta_relent}

The Ising topological order has three anyon types, $I, \sigma, \epsilon$, and on the cylinder, the lowest-energy states correspond to the three sectors on the cylinder: $|\psi^1\rangle$, $|\psi^\sigma\rangle$, and $|\psi^\epsilon\rangle$ respectively. We choose two code subspaces  $\mathbb{V}^{\chi} \bigl(\cylindericon[1.2]\bigr)$ as 
\begin{equation}
\begin{aligned}
   & \mathrm{span}\{\ket{\psi^1},\ket{\psi^\sigma}\},\,\,\, \text{and}\\
    & \mathrm{span}\{\ket{\psi^1},\ket{\psi^\epsilon}\}.\,\,\,
\end{aligned}
\end{equation}
The notation is as in Example~\ref{exmp:chiral_Ising}.
Consider the \(1\)D Ising CFT, realized numerically from the transverse field Ising model at the critical point on a circle with periodic boundary conditions. Let the three primary states be \(\ket{\varphi^1}\), \(\ket{\varphi^\sigma}\), and \(\ket{\varphi^\epsilon}\). The dimensional reduction of the above code subspaces becomes CFT codes with $\mathbb{V}^{\rm CFT}$:
\begin{equation}
\mathrm{span}\{\ket{\varphi^1},\ket{\varphi^\sigma}\},\,\,\,
\mathrm{span}\{\ket{\varphi^1},\ket{\varphi^\epsilon}\},\,\,\,
\end{equation}
each of which is two-dimensional. Our method of finding these primary states of the Ising chain is based on the periodic uniform matrix product state (puMPS)~\cite{Zou2017}, as explained in Appendix~\ref{app:ising_pumps_method}. The numerical computation of $\alpha$, $\beta$ for the two code subspaces is in Fig.~\ref{fig:alpha_beta}. We arrive at the data of $\alpha,\beta$, $\gamma$ ($\gamma$ is only numerically computed in Table~\ref{tab:power_law_and_stability_compact_boson_relent}) summarized in Table~\ref{tab:power_law_and_stability_Ising_relent}. The analytical computation of $\alpha$ can be found in Appendix~\ref{app:alpha-beta}.
The dominating exponent associated with the 1D CFT code robustness is $\min\{\alpha,\beta\} =\beta$, for both cases. 

\begin{figure}[h]
    \centering
    \begin{tikzpicture}[
        inner sep=0pt,
        outer sep=0pt
    ]
        \node[anchor=north west] (leftfig) at (0,-0.15)
        {
            \includegraphics[
                width=0.34\linewidth
            ]{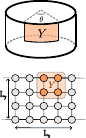}
        };

        \node[anchor=north west] (rightfig) at (0.34\linewidth,0)
        {
            \includegraphics[
                width=0.60\linewidth
            ]{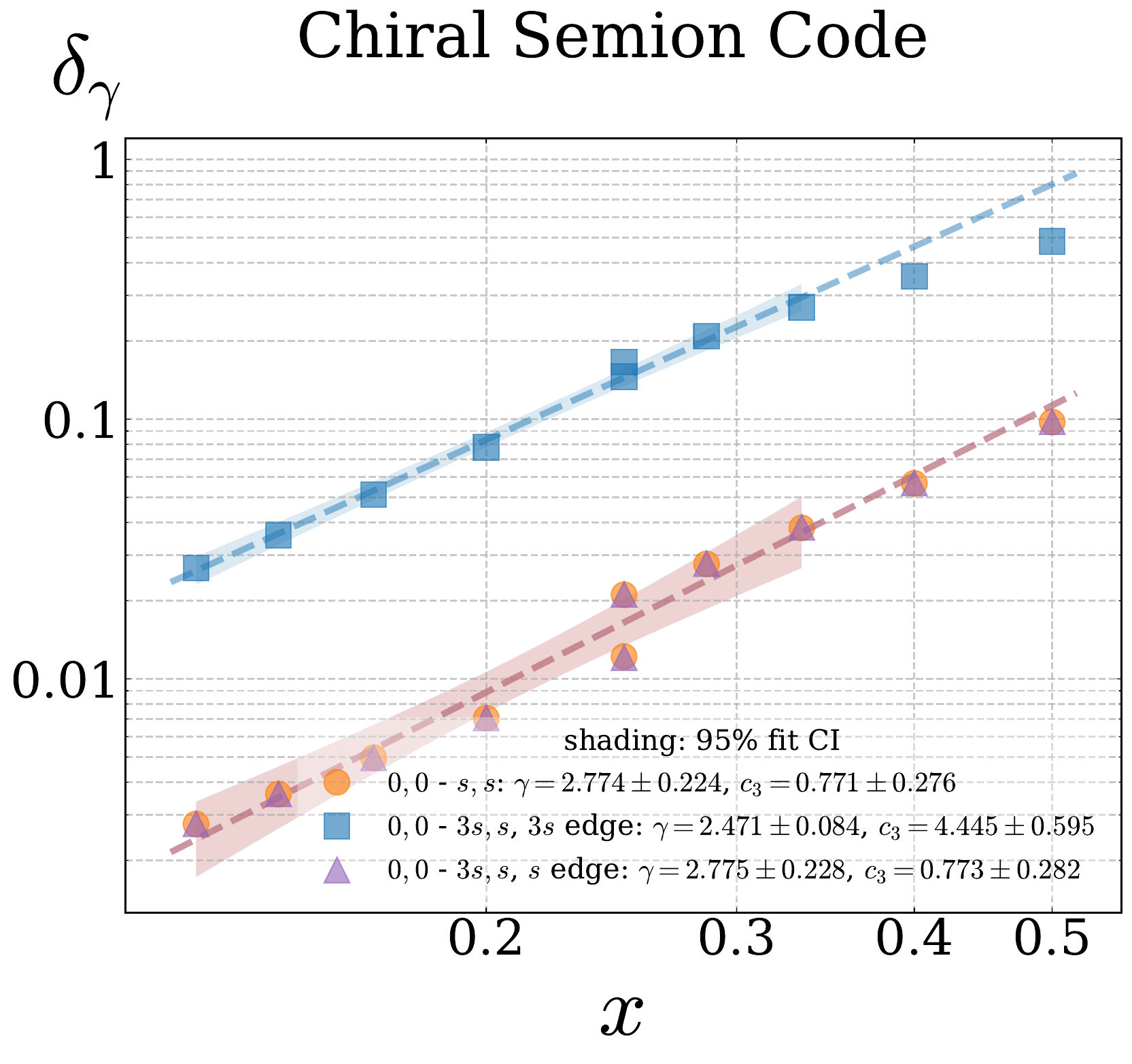}
        };
    \end{tikzpicture}
    \includegraphics[width=0.98\linewidth]{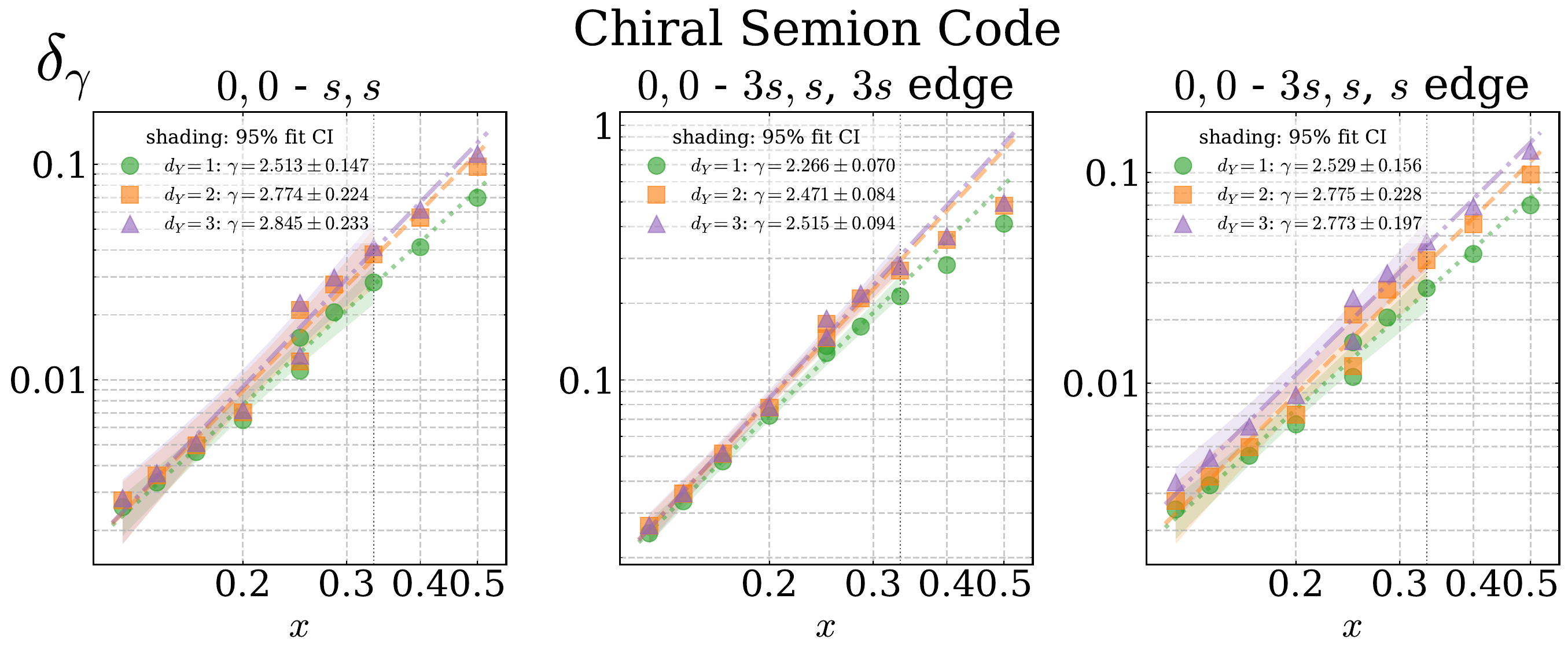}

    \caption{{\bf Fitting $\gamma$ on a finite cylinder.} We consider a square lattice of size $L_x \times L_y$ on cylinder (see Appendix~\ref{app:square_lattice}). For the numerical calculation, we fixed \(L_y=4\), with \(L_x=4,5,6,7,8\). The region \(Y\) is a connected subsystem adjacent to a single physical edge of thickness $d_Y=2$. Power-law fit \(\delta_\gamma=\frac{1}{D}\sum_\mfa S(\rho_Y^\mfa\|\bar{\rho}_Y)\approx c_3x^{\gamma}\) is shown with original data. Here, $\delta_\gamma$ is the coherent information loss $(\Delta(Y;\mathbb{V}^{\chi} \bigl(\cylindericon[1.2]\bigr))$, when the orthogonality error is negligible. The code subspace is \(\mathrm{span}\{\ket{\psi_1^{0,0}},\ket{\psi_s^{\frac{1}{2},\frac{1}{2}}}\}\) and \(\mathrm{span}\{\ket{\psi_1^{0,0}},\ket{\psi_s^{\frac{3}{2},\frac{1}{2}}}\}\). The power-law fit gives \(\gamma\approx2.774\) and \(\gamma\approx2.471\) (for the edge with charge $3/2$), for the fitting of which, only data with $x \le 1/3$ are used. Numerical results for $d_Y=1,2,3$ with $95\%$-confidence interval are also available, as shown in the lower-half. It is worthy of notice that, the behavior of $\delta_\gamma$ in $\mathrm{span}\{\ket{\psi_1^{0,0}},\ket{\psi_s^{\frac{1}{2},\frac{1}{2}}}\}$ is similar to that on the edge with charge $3/2$ in $\mathrm{span}\{\ket{\psi_1^{0,0}},\ket{\psi_s^{\frac{3}{2},\frac{1}{2}}}\}$.}  
    \label{fig:edge_gamma}
\end{figure}

\subsection{Summary of power law exponents}
\label{subsec:numerical_power_law_summary_relent}

We summarize the extracted power-law exponents $\alpha,\beta$ and $\gamma$ in Tables~\ref{tab:power_law_and_stability_compact_boson_relent} and \ref{tab:power_law_and_stability_Ising_relent}. $\alpha_{\rm th}$ and $\beta_{\rm th}$ refer to the theoretical values of $\alpha$ and $\beta$ computed in Appendix~\ref{app:CFT_relative_entropy}, which match our numerical computation up to errors we attribute to finite sizes.

\begin{table}[H]
     \centering
     \setlength{\extrarowheight}{4pt}
     \begin{tabular}{|c|c|c|c|c|c|c|}
     \hline
        \(\mathbb{V}^{\chi}\bigl(\cylindericon[1.2]\bigr)\) & \makecell{\(\alpha\)\\\(\alpha_{\infty}\)} & \(\alpha_{\text{th}}\) & \makecell{\(\beta\)\\\(\beta_{\infty}\)} & \(\beta_{\text{th}}\) &\( \gamma_{\rm 1D}\) & \(\gamma\) \\
        \hline
         \(\mathrm{span}\{\ket{\psi_1^{0,0}},\ket{\psi_s^{\frac{1}{2},\frac{1}{2}}}\}\) & \makecell{\(2.255\)\\\(2.075\)} & \(2\) & \makecell{\(1.011\)\\\(0.927\)} & \(1\) & \( 1.011 \)  & \( 2.774\) \\
         \hline
         \(\mathrm{span}\{\ket{\psi_1^{0,0}},\ket{\psi_s^{\frac{3}{2},\frac{1}{2}}}\}\) & \makecell{\(2.187\)\\\(2.000\)} & \(2\) & \makecell{\(4.762\)\\\(4.737\)} & \(5\) & \(2.187 \) & \( 2.471\) \\
        \hline
     \end{tabular}
     \caption{{\bf Power-law exponents for the chiral semion code.} These exponents $\alpha, \beta$ and $\gamma$ are relevant to the robustness of the chiral edge code and its 1D CFT code obtained by dimensional reduction. $\alpha_{\rm th}$ and $\beta_{\rm th}$ refer to the theoretical values of $\alpha$ and $\beta$. $\gamma_{\rm 1D}:= \min\{\alpha,\beta\}$.}
     \label{tab:power_law_and_stability_compact_boson_relent}
\end{table}

The results illustrate the mechanism of Sec.~\ref{sec:coherent_loss_vs_relent}. In the \(2\)D chiral edge code, the erasure of a connected region is controlled by a power law exponent $\gamma$ 
that is greater or equal to the exponent governing the 1D CFT code robustness $\min\{\alpha,\beta\}$. This is by
\begin{equation}
    \gamma \ge  \alpha \ge \min\{\alpha,\beta\}=: \gamma_{\rm 1D}.
\end{equation}
In three of the four examples, we find clear evidence that $\gamma$ is strictly larger than $\min\{\alpha,\beta\}$, and $\beta < \alpha$ for those examples. 

The remaining case is the second chiral-semion code in Table~\ref{tab:power_law_and_stability_compact_boson_relent}. There, the leading exponent does not show a clear enhancement: $\beta > \alpha$, so the 1D CFT code is controlled by $\alpha$, and the finite-size estimate gives $\gamma \approx \alpha$. Thus, for this particular code subspace, the 2D chiral edge code appears comparable to its 1D dimensional reduction under the local-erasure exponent.
Even in this case, however, the 2D chiral edge realization retains a physical advantage: the conformal degrees of freedom arise as topologically enforced edge modes of a gapped topological phase, rather than from a one-dimensional Hamiltonian tuned to criticality. The exponent comparison captures local distinguishability, while this additional stability reflects the many-body origin of the chiral edge code.

\begin{table}[H]
     \centering
     \setlength{\extrarowheight}{4pt}
     \begin{tabular}{|c|c|c|c|c|c|c|}
     \hline
        \(\mathbb{V}^{\chi}\bigl(\cylindericon[1.2]\bigr)\) & \makecell{\(\alpha\)\\\(\alpha_{\infty}\)} & \(\alpha_{\text{th}}\) & \makecell{\(\beta\)\\\(\beta_{\infty}\)} & \(\beta_{\text{th}}\) & \( \gamma_{\rm 1D}\)  & \(\gamma\geq\)  \\
        \hline
        \(\mathrm{span}\{\ket{\psi^1},\ket{\psi^\sigma}\}\) & \makecell{\(2.002\)\\\(2.007\)} & \(2\) & \makecell{\(0.294\)\\\(0.293\)} &\(\frac{1}{4}\)  & \( 0.294 \) & \( 2\) \\
        \hline
        \(\mathrm{span}\{\ket{\psi^1},\ket{\psi^\epsilon}\}\) & \makecell{\(3.829\)\\\(3.787\)} & \(4\) & \makecell{\(1.913\)\\\(1.946\)}& \(2\) &  \(1.913\) & \( 4\) \\
        \hline
     \end{tabular}
     \caption{{\bf Power-law exponents for the chiral Ising code.} It is presented in parallel to Table~\ref{tab:power_law_and_stability_compact_boson_relent}. The only difference is that $\gamma$ here is bounded using Prop.~\ref{prop:powers} instead of numerical computation on finite sizes.}
     \label{tab:power_law_and_stability_Ising_relent}
\end{table}

\section{Power-law-range recovery channel}\label{sec:quasi-local-R}

In this section, we construct a power-law-range recovery channel that approximately recovers the original state of the chiral edge code for any local noise. By local noise, we mean a noise supported on a local disk. The interesting case is a disk near an edge, e.g., $A$ of $O(1)$ size in Fig.~\ref{fig:quasi-local-recovery}(a); this is because the decoherence of the bulk disk can be recovered perfectly by a local recovery channel, as in the TQFT code Eq.~\eqref{eq:local-recovery}. By a power-law-range channel, we mean a quantum channel whose support is a vanishing fraction of the length $L_x$ of the cylinder in the neighborhood of $A$, as the system size grows, e.g., the region $AB$ of Fig.~\ref{fig:quasi-local-recovery}(a). Importantly, the smallness of coherent information loss for erasure noise on $A$ implies only a recovery channel; it does not imply the geometrical locality. We provide a general theorem (Thm.~\ref{thm:quasi-local-recovery}) on power-law-range recovery.

\begin{figure}[h]
    \centering
    \includegraphics[width=0.85\linewidth]{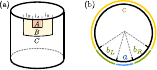}
    \caption{Regions related to the identification of the power-law-range recovery channel. (a) The cylinder of chiral topological order, with $A$ being an $O(1)$ sized disk near an edge where the decoherence happens. $B$ is a buffer that scales sub-extensively with $L_x$. (b) Regions on a circle which are used in defining $\mu^*$. In the context of Thm.~\ref{thm:quasi-local-recovery}, the sizes of $a, b_L, b_R$ are determined by the intervals resulting from $A$ and $B$ touching the upper edge.}
    \label{fig:quasi-local-recovery}
\end{figure}

In order to make a precise statement about the error of recovery, we introduce two information-theoretic quantities that are computable given the knowledge of the code subspace. (Note that the following quantities are different from $\alpha,$ $\beta$ and $\gamma$ considered before.) We define
\begin{equation}
    \gamma^* := 
    \min_{\substack{\mfa,\mfb \in \calA_{\rm code}, \\ \mfa\ne \mfb}}
    \gamma_{\mfa\mfb}.
    \label{eq:gamma-a-def}
\end{equation}
where $\gamma_{\mfa\mfb}$ is computed from relative entropy $S(\rho^\mfa_Y || \rho^\mfb_{Y}) \sim x^{\gamma_{\mfa \mfb}}$ for $Y$ in Fig.~\ref{fig:edge_gamma}. Moreover, we define
\begin{equation}
    \mu^* := \min_{\mfa\in \calA_{\rm code}} \mu_\mfa,
\end{equation}
where $\mu_\mfa$ is defined relating to the disjoint interval mutual information $I(a:c)_{|\varphi^\mfa\rangle}$ through an empirical relation
\begin{equation}\label{eq:mu-a-def}
    I(a:c)_{|\varphi^\mfa\rangle} \sim \eta^{\mu_{\mfa}},
\end{equation}
 at small $\eta$, where $|\varphi^\mfa\rangle$ is the dimensional reduction of a code word and $\eta$ is the cross-ratio. We refer to Appendix~\ref{app:mu-computation} for the precise definition of $\eta$.

\begin{tcolorbox}[breakable, enhanced, colback=yellow!5!white,colframe=orange!35!white]
\begin{theorem}[Power-law-range Recovery map]\label{thm:quasi-local-recovery}
Consider a chiral edge code on a cylinder of circumference $L_x$ and width $L_y$, with code subspace $\mathbb{V}^{\chi} \bigl(\cylindericon[1.2]\bigr)
:=\mathrm{span} \left\{\ket{\psi_Q^\mfa}:\mfa\in\mathcal A_{\rm code}\right\}$, where each code state is associated with some Abelian anyon sector. 
Then, for any quantum channel $\calN_A$ supported on a local $O(1)$-sized disk $A$ adjacent to the edge as illustrated in Fig.~\ref{fig:quasi-local-recovery}, with length of $A$ projected to the $x$ direction $l_A=O(1)$, there exists an approximate recovery channel $\calR_{AB}$ supported on $AB$ such that the size of $B$ scales sub-extensively with $l_B  \sim L_x^{\lambda^*}$, where
\begin{equation}
    \lambda^* = \frac{\gamma^{*}}{\gamma^{*} + \mu^{*}},
\end{equation}
and the recovery infidelity is upper bounded by
\begin{equation}
    1-F( \rho, \calR_{AB} \circ \calN_A (\rho))  \le  O(1) L_x^{-\frac{\mu^* \, \gamma^*}{\mu^* + \gamma^*}},
\end{equation}
for any density matrix $\rho$ in the code subspace, in which $F(\rho,\sigma)=\|\sqrt{\rho}\sqrt{\sigma}\|_1^2$. The recovery map can be chosen to be independent of the code state and the channel $\calN_A$ as $\calR_{AB}:= \calE_{B\to BA}\circ \Tr_{A}$, where $\calE_{B\to BA}$ is the universal twirled Petz map~\cite{JRSWW-universal-recovery} constructed from $\bar{\rho}$, the maximally mixed state in the code subspace.
\end{theorem}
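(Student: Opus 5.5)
The plan is to reduce approximate recoverability from erasure of $A$ to the smallness of a conditional mutual information, and then invoke the universal twirled Petz recovery theorem of Ref.~\cite{JRSWW-universal-recovery}. Let $C := Q\setminus (AB)$ and purify the maximally mixed code state with the reference $R$ as in Eq.~\eqref{eq:maximal_entangled}. By Prop.~\ref{prop:bound-1} and its proof, any channel $\calN_A$ is handled once we recover the erasure $\Tr_A$. The recovery $\calE_{B\to BA}$ built from $\bar\rho_{AB}$ (equivalently from $\psi_{ABCR}$ restricted to $AB$) reconstructs $\rho_{ABCR}$ from $\rho_{BCR}$ with fidelity error controlled by
\[
I(A:CR|B)_{\psi} \;\ge\; -2\log F\bigl(\psi_{ABCR},\calE_{B\to BA}(\psi_{BCR})\bigr).
\]
Since $\psi_{QR}$ is maximally entangled with the code space, entanglement fidelity bounds the worst-case fidelity for every $\rho$ in $\mathbb{V}^\chi$. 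This costs at most a factor $D=O(1)$, which is harmless. So the goal is $I(A:CR|B)\lesssim L_x^{-\mu^*\gamma^*/(\mu^*+\gamma^*)}$.

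Next, I will split the conditional mutual information by the chain rule, $I(A:CR|B)=I(A:R|BC)+I(A:C|B)$. The first term is $I(A:R)=\Delta(A;\mathbb{V}^\chi)$ by purity, and Thm.~\ref{thm:Delta_as_avg_relent_relent} equates it to an average of relative entropies. I prefer to bound $I(A:R|B)$ directly in the form $I(A:CR|B)\le I(A:R|B)+I(A:C|BR)$, and to control the $R$-dependent part by the relative entropies $S(\rho^\mfa_{AB}\|\rho^\mfb_{AB})$. Here $AB$ is an edge region of angular size $x\sim l_B/L_x$, so this term scales as $(l_B/L_x)^{\gamma^*}$, using $\gamma_{\mfa\mfb}$ from Eq.~\eqref{eq:gamma-a-def} with $Y=AB$. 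Abelianness enters here. With $d_\mfa=1$ the sector structure carries no extra fusion-space degeneracy, so conditioning on $R$ reduces to conditioning on a single sector label, whose distinguishability is exactly these pairwise relative entropies. For the remaining term, $I(A:C|B)_{\psi^\mfa}$ in a fixed code state, I will use the full-boundary entanglement bootstrap conditions of Ref.~\cite{Chiral-vira2024}. Bulk Markov chains let me thicken $B$ into the bulk and reduce $I(A:C|B)$ to quantities on regions touching the upper edge. Because $A$ is $O(1)$ and $B$ surrounds it, these reduce to the mutual information between the separated intervals $a$ and $c$ flanking $b_L,b_R$ in Fig.~\ref{fig:quasi-local-recovery}(b). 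By Assumption~\ref{asmp:UVfinite_entanglement_map_columns} this is UV-finite, and its decay $\eta^{\mu_\mfa}$ from Eq.~\eqref{eq:mu-a-def} gives $\eta\sim l_A/l_B$ at small ratio. The contribution is therefore $O((1/l_B)^{\mu^*})$.

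Combining the two contributions gives $I(A:CR|B)\lesssim (l_B/L_x)^{\gamma^*}+l_B^{-\mu^*}$. Balancing them with $l_B=L_x^{\lambda}$ yields $\lambda\gamma^*-\gamma^*=-\lambda\mu^*$, i.e. $\lambda^*=\gamma^*/(\gamma^*+\mu^*)$. The error is then $L_x^{-\mu^*\gamma^*/(\mu^*+\gamma^*)}$, and $1-F\le -\log F$ converts it to the stated infidelity. The main obstacle is the second step. I must justify rigorously that $I(A:C|B)$ for an edge-adjacent $A$ in a 2D chiral state reduces, via the full-boundary bootstrap axioms, to a disjoint-interval mutual information on the dimensionally reduced circle, and that the $R$-conditioning costs only pairwise sector relative entropies. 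The latter likely needs an Abelian-specific identity making $\bar\rho_{BC}$ block structured, as in the proof of Thm.~\ref{thm:Delta_as_avg_relent_relent}. I would try first to show that the Abelian sectors are related by edge-local string operators, so that the Markov structure of each $\psi^\mfa$ is identical.
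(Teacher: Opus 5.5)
Your outer frame is valid but differs from the paper's. You apply the conditional-mutual-information form of the universal recovery bound to the purification $\psi_{ABCR}$; the map built from $\psi_{AB}=\bar\rho_{AB}$ is the same twirled Petz map the paper uses. You then pass from entanglement fidelity to worst-case fidelity, which costs only a factor $D$, since $1-F_\phi\le D(1-F_e)$ for pure $\phi\in\mathbb{V}^{\chi}$. The paper instead applies the relative-entropy form with $\sigma_{ABC}=\pi_{AB}\otimes\pi_C$ to each code state directly. (With $F=\|\sqrt\rho\sqrt\sigma\|_1^2$ the bound reads $I\ge-\log F$, not $-2\log F$; this is harmless.)

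The first problem is your claim that, because $d_\mfa=1$, conditioning on $R$ ``reduces to conditioning on a single sector label.'' This is not correct as stated, and Abelianness plays no role in this step. By purity, $I(A:C|BR)_\psi=I(A:C)_{\bar\rho}$. Sector orthogonality on the annulus inside $C$ (as in Thm.~\ref{thm:Delta_as_avg_relent_relent}, valid for non-Abelian sectors too) then gives $I(A:C)_{\bar\rho}=\frac1D\sum_\mfa I(A:C|B)_{\rho^\mfa}+\chi_A$. Here $I(A:C)=I(A:C|B)$ for the pure $\rho^\mfa_{ABC}$, and $\chi_A$ is the Holevo quantity on $A$. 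Likewise $I(A:R|B)\le I(AB:R)=\Delta(AB;\mathbb{V}^{\chi})\le\frac1{D^2}\sum_{\mfa,\mfb}S(\rho^\mfa_{AB}\|\rho^\mfb_{AB})$. The extra $\chi_A$ is bounded by the same pairwise relative entropies, so you need neither string operators nor any Abelian-specific block structure. You end up with the same three quantities the paper bounds.

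The genuine missing step is the per-sector bound on $I(A:C|B)_{\rho^\mfa}$. This is exactly where Abelianness enters, and bulk Markov chains alone do not give it.
\begin{itemize}
\item The paper (Prop.~\ref{prop:mutual_on_chiral}) writes $C=C_1C_2C'$ with $C'$ an edge-adjacent disk, and uses purity to get $I(A:C)_{\rho^\mfa}-I(A:C')_{\rho^\mfa}=(S_{C_1C_2C'}+S_{BC_1C_2}-S_B-S_{C'})_{\rho^\mfa}$.
\item By strong subadditivity this is at most the same combination with $B$ and $C'$ shrunk away from the edge. That combination vanishes by the full-boundary \textbf{A1} condition, which holds only for Abelian sectors.
\item Then $I(A:C')\le I(A'':C'')$ by monotonicity, with full columns $A''\supset A$ and $C''\supset C'$.
\item Only at this point does Assumption~\ref{asmp:UVfinite_entanglement_map_columns} turn this into $I(a:c)_{\varphi^\mfa}\sim\eta^{\mu_\mfa}$, with $\eta=O(1)L_x^{-\lambda}$.
\end{itemize}
Your phrase that the edge quantities ``reduce to'' the disjoint-interval mutual information hides a point. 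The dimensional-reduction assumption applies only to full columns, so this step must be an upper bound via monotonicity, not an identification.

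With these two steps supplied, your balancing of $(l_B/L_x)^{\gamma^*}$ against $l_B^{-\mu^*}$ reproduces $\lambda^*$ and the stated infidelity.
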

\end{tcolorbox}

See Appendix~\ref{app:quasi_local_recovery_channel} for the proof of the theorem, which needs the dimensional reduction assumption (Assumption~\ref{asmp:UVfinite_entanglement_map_columns}). The Abelian restriction enters through the Markov property used to control the conditional mutual information in the proof, see Appendix~\ref{app:gamma-computation}. We also remark that, the sub-extensive length scale $l_B\sim L_x^{\lambda^*}$ only needs to occur near the edge. It is enough to have $O(1)$ bulk correlation length thickness of $B$ away from the edge.

\begin{exmp}[Recovery channel for the chiral Ising code]
For the code subspace spanned by $|\psi^1\rangle$ and $|\psi^\sigma\rangle$, the anyon sector $\sigma$ is not Abelian, so we do not consider it here. 
For the Abelian code subspace spanned by $|\psi^1\rangle $ and $|\psi^\epsilon\rangle$, we have
\begin{equation}
    \gamma^* = \min\{\gamma_{I \epsilon}, \gamma_{\epsilon I} \} =4.  
\end{equation}
This is explained in Appendix~\ref{app:gamma-computation}.
The exponent $\mu^*$ is
\begin{equation}
    \mu^*  = \min \{ \mu_1, \mu_\epsilon \} \approx \min \{ 0.297,2.00 \} =  0.297
\end{equation}
according to the finite-size estimation shown in Appendix~\ref{app:mu-computation}.
From this we can estimate
\begin{equation}
    \lambda^* \approx 0.931
\end{equation}
giving recovery infidelity scaling as $\sim L_x^{-0.276}$. These results can be compared with $\lambda^*_{\text{th}}=\frac{16}{17}\approx 0.941$ and recovery infidelity scaling $\sim L_x^{-\frac{4}{17}}\approx L_x^{-0.235}$, based on the theoretical value $\mu_{1\,\text{th}}=\frac{1}{4}$ as shown in~\cite{2009JSMTE..11..001C,2011JSMTE..01..021C}.

    
\end{exmp}

\begin{exmp}[Recovery channel for the chiral semion code]

For code subspace $\operatorname{span}\{|\psi_1^{0,0}\rangle,|\psi_s^{\frac{1}{2},\frac{1}{2}}\rangle\}$,  
\begin{equation} 
    \gamma^* = \min\{ \gamma_{1s}, \gamma_{s1} \} = 2, 
\end{equation}
which is derived 
in Appendix~\ref{app:gamma-computation}, thanks to the Abelian nature of semion $s$. We note the dependence of the above quantities $\gamma_{1s}, \gamma_{s1}$ on the code subspace. The exponent $\mu^*$ is
\begin{equation}
    \mu^* = \min \{ \mu_{1}, \mu_{s}\} \approx 1.071,
\end{equation}
from $\mu_{1} \approx 1.071, \mu_{s} \approx 1.102$ as identified in Appendix~\ref{app:mu-computation}. 
From this we can estimate 
    \begin{equation}
    \lambda^* \approx 0.651.
\end{equation}
giving recovery infidelity scaling as $\sim L_x^{-0.697}$. These results can be compared with $\lambda^*_{\text{th}}=\frac{2}{3}$ and recovery infidelity scaling $\sim L_x^{-\frac{2}{3}}$, based on the theoretical value $\mu_{1\,\text{th}}=1$ as shown in~\cite{2009JSMTE..11..001C,2011JSMTE..01..021C}.

For code subspace $\operatorname{span}\{|\psi_1^{0,0}\rangle,|\psi_s^{\frac{3}{2},\frac{1}{2}}\rangle\}$, 
\begin{equation}
    \gamma^* =\min\{ \gamma_{1s}, \gamma_{s1} \} = 2
\end{equation}
and
\begin{equation}
    \mu^{*}=\min\{\mu_{{1}},\mu_{s}\}\approx 1.071.
\end{equation}
Thus, we have
\begin{equation}
    \lambda^*\approx 0.651.
\end{equation}
giving recovery infidelity scaling as $\sim L_x^{-0.697}$. Similarly, we have $\lambda^*_{\text{th}}=\frac{2}{3}$ and recovery infidelity scaling $\sim L_x^{-\frac{2}{3}}$, based on $\mu_{1\,\text{th}}=1$.
\end{exmp}  
In this regard, the chiral semion code has better locality in terms of the power-law-range recovery channel we identify, for the Abelian code subspaces we consider.

\section{Discussion}\label{sec:discussion}

In this work, we introduced \emph{chiral edge codes}, a family of approximate quantum error-correcting codes whose codewords are cylinder primary states associated with distinct bulk anyon sectors. The construction offers two advantages over a standalone one-dimensional CFT code. First, its gapless degrees of freedom arise at the boundary of a stable gapped phase, rather than from tuning a microscopic Hamiltonian to criticality. Second, the two-dimensional geometry imposes a stronger notion of locality: a geometrically local error near one edge does not simultaneously probe the opposite edge, whereas dimensional reduction treats a full column connecting the two edges as a local interval.

We quantified this geometric advantage using coherent-information loss under local erasure. When the complement of the erased region contains a noncontractible bulk annulus, the coherent-information loss reduces to an average relative entropy on the erased region. In the small-region limit, the resulting exponents obey $\gamma\geq\alpha\geq\min\{\alpha,\beta\}$, showing that the two-dimensional encoding is never worse at the level of the small-region exponent, with strict enhancement established or numerically supported in three of the four code subspaces considered. For Abelian code subspaces, we further constructed a power-law-range recovery channel supported on the erased edge region together with a subextensive buffer.

These results reveal a hybrid protection mechanism. The gapped bulk provides a nonlocal decomposition into anyon sectors and spatially separates the two physical edges, while the edge theory and bulk--edge geometry control the residual algebraic information leakage near the boundary. This allows CFT relative entropy and disjoint-interval mutual information to diagnose approximate quantum error correction, while entanglement-bootstrap methods identify the encoded sectors and constrain the spatial support of recovery. Chiral edge codes therefore combine the robust realization of gapless boundary degrees of freedom with a geometric local-erasure advantage unavailable in the one-dimensional description. Our work suggests four main future directions.

\begin{itemize}[leftmargin=13pt]
    \item {\bf Microscopic foundations.} A central problem is to derive the dimensional-reduction correspondence and full-boundary entanglement conditions for microscopic chiral topological phases, with controlled finite-size and finite-correlation-length errors. This would clarify which leakage and recovery exponents are universal and which depend on the boundary realization or code subspace. Since the present power-law-range recovery theorem relies on the full-boundary Markov property for Abelian sectors, extending spatially local recovery to non-Abelian sectors may require a formulation that incorporates their fusion-space structure.
    \item {\bf Geometry-dependent erasures.} The main text considers a connected disk-like erasure adjacent to one physical edge. Disconnected regions, regions touching both edges, and regions winding nontrivially around the cylinder can probe different aspects of the bulk--edge encoding; for some such geometries, the complement no longer contains a noncontractible annulus, so the present relative-entropy reduction need not apply. Their coherent-information loss and recovery range may therefore depend on topology and edge connectivity, not only on size. Accordingly, the effective distance $d^*(\delta)$ introduced in Sec.~\ref{subsec:theoretical-summary} is a geometry-resolved diagnostic rather than a worst-case code distance. A natural next step is to define effective distances indexed by erasure geometry and determine how their scaling constrains local recovery. This viewpoint may also inform information-theoretic diagnostics of memories in mixed-state phases~\cite{PRXQuantum.5.020343,hlfh-86yz,PhysRevA.111.032402,2025arXiv251222121V,Sang2025Markov-length,Yang2025mixed}; punctured coherent information provides a related approach~\cite{Negari2026}.

    \item {\bf Logical operations and decoding.} The universal twirled Petz map establishes recovery with controlled spatial support, but does not provide an efficient microscopic decoder. An important question is whether the Petz map can be approximated by local circuits, tensor-network algorithms, or experimentally accessible measurements. For related work on decoding one-dimensional CFT codes, see Ref.~\cite{Zhang2025}. 
    The bulk--edge geometry may also support logical operations through deformations of the edge~\cite{You2015,Zhu2018}, spatial rotations~\cite{Wang2024}, or adiabatic motion of edges and interfaces~\cite{Cong2017}. Geometries with three or more edges may allow additional couplings between encoded sectors. Constructing explicit decoders and logical-gate protocols is therefore necessary to assess the computational utility of chiral edge codes.  
    \item {\bf Stochastic noise and thresholds.} A central question is whether the advantage established for local erasure persists under spatially extensive noise. Appendix~\ref{app:IID_noise} provides finite-size evidence that, for selected Pauli channels and one chiral-semion code subspace, the two-dimensional realization exhibits a more favorable weak-noise flow of coherent information than its dimensionally reduced CFT counterpart. These data neither establish a threshold nor identify its controlling mechanism. It remains to determine whether chiral edge codes are recoverable under sufficiently weak stochastic noise that applies to the entire system and whether their stability is governed by the edge-local exponent $\gamma$, other universal data, or microscopic details. 
    This question is distinct from that for fractional quantum Hall memories based on bulk topological-sector encodings on a torus or non-Abelian fusion-space encodings~\cite{2025arXiv251008490W}, because the physical edge degrees of freedom participate directly in the encoding considered here. 
    The role of this bulk--edge structure can be isolated particularly sharply by comparing the chiral Ising code with its dimensionally reduced Ising CFT code. Sufficiently weak independent and identically distributed (IID) Pauli noise is not correctable in the latter~\cite{Sang2024AQECC}, whereas the additional bulk-edge geometry may alter the stability of the former. Whether the chiral Ising code remains recoverable under a general class of sufficiently weak IID noise---and, if so, whether the resulting threshold is controlled by universal edge data---is an important open problem.
     
\end{itemize}
Several further extensions may support these main directions. The construction may extend to selected descendant states, ungappable edges with counterpropagating modes~\cite{PhysRevX.3.021009,Kaidi2021}, and configurations in which defect sectors replace ordinary anyon sectors in the annulus. The analysis also motivates CFT calculations of relative entropy involving mixtures of primary states and disjoint-interval mutual information in excited states, which determine the leakage and recovery exponents appearing here~\cite{Lashkari2015,Sarosi2016,Sarosi2017,Ugajin2017}. Although the relevant error-correction quantities are von Neumann quantities, their R\'enyi analogues may provide a useful analytical and numerical route through replica continuation and stabilized extrapolation to $n\to1$~\cite{Vijay2025}. More broadly, it would be valuable to identify which ingredients of the present construction---a robust sector decomposition, protected gapless degrees of freedom, and algebraically suppressed local distinguishability---can arise in other forms of many-body chirality~\cite{Kim_2022,Zou2022chiral,Vardhan2025,2026arXiv260620472E}.

\section*{Acknowledgments}

We thank Bryan Clark for the discussion of numerical methods, Leonid P. Pryadko and Jinmin Yi for the discussion about how to quantify a good code by the exponents, Yijian Zou for answering questions about the decoding method of 1D CFT code, Dominic Williamson, Xiang Li, Ting-Chun Lin, John McGreevy, Isaac Kim, Akash Vijay, and Yuta Hirasaki for interesting discussions related to chiral topological ordered edges or local decoherence, and Nima Lashkari, Jignesh Mohanty and Tom Faulkner for discussions related to relative entropy computations. This work made use of the Illinois Campus Cluster, a computing resource that is operated by the Illinois Campus Cluster Program (ICCP) in conjunction with the National Center for Supercomputing Applications (NCSA) and which is supported by funds from the University of Illinois at Urbana-Champaign. BS and JYL are supported by the IQUIST fellowship, faculty startup grant at the University of Illinois, Urbana-Champaign, and IBM-Illinois Discovery Accelerator Institute. 
BS gratefully acknowledges the hospitality of the Isaac Newton Institute, Tsinghua University, and the Perimeter Institute, where parts of this work were carried out during research visits.  


\appendix

\section{Relative entropy in RCFT}
\label{app:CFT_relative_entropy}

In this appendix, we discuss the relative entropy between two primary states, as well as the relative entropy between a primary state and a mixture of primary states in 1D RCFT. This will explain the origin of the power-law exponents $\alpha$ and $\beta$ defined in Sec.~\ref{sec:coherent_loss_vs_relent} and Fig.~\ref{fig:Srel}, when $x\ll 1$:
\begin{align}
    x^\alpha &\sim \frac{1}{D}\sum_{\mfa \in \calA_{\rm code}} S(\rho^\mfa_X|| \bar{\rho}_X),  \\
   x^\beta &\sim \log D - \frac{1}{D}\sum_{\mfa \in \calA_{\rm code}} S(\rho^\mfa_{\bar{X}}|| \bar{\rho}_{\bar{X}}),  
\end{align}
as well as the exponents $\alpha_{\mfa \mfb}$
\begin{equation}
    x^{\alpha_{\mfa \mfb}}\sim S(\rho_X^\mfa\|\rho_X^\mfb),
\end{equation}
which we use in Appendix~\ref{app:gamma-computation}. We remind the reader that, upon dimensional reduction, the above formulas are equivalent to the relative entropy formulas for a single interval $A\subset S^1$ in 1D CFT, and again $x:=\operatorname{Arc}(A)/2\pi$, therefore $0<x<1$.
 
We will first collect various relative entropy formulas, in particular formulas for compact free boson CFT and Ising CFT. Unfortunately, analytic formulas for general $0<x<1$ are only available in a small number of special examples \cite{Lashkari2014,Lashkari2015,Ruggiero2017,Ugajin2017}, and only for relative entropy between two primaries. However, as we will see, there are universal formulas for relative entropy of $A$ and $\bar A$ in the limit $x\ll 1$, as a result of the operator product expansion (OPE) used in computing the replica correlators. For relative entropy between two primaries, such formulas already exist for the interval $A$ when $x \ll 1$ \cite{Sarosi2016,Sarosi2017,Lashkari2026}. For relative entropy between a primary and a mixture of primaries, the universal formulas are available for \textit{both} the interval $A$ and its complement $\bar A$, the details of which will be presented in future work~\cite{Lashkari2026}; here we summarize the formulas needed for the present work. Finally, we give a brief review of the replica method used to obtain such results.

\subsection{Relative entropy between different primaries}\label{app:Rel-a-b}
For relative entropy between two different primaries, there exist analytic formulas for any $0<x<1$ in free boson CFT \cite{Lashkari2014,Lashkari2015,Ruggiero2017} and Ising CFT \cite{Ugajin2017}, as we will discuss in the examples later. Beyond these examples, exact all-\(x\) results are rare. However, the leading small-interval ($x\ll 1$) behavior of the relative entropy is known. 

For two spinless (i.e. the conformal weights $h=\bar h$) primary states \(a,b\) with different conformal weights $(h_a,\bar h_a)\neq(h_b,\bar h_b)$, 
the leading order
relative entropy is \cite{Sarosi2016,Sarosi2017}
 \begin{equation} \label{eq:rel_ent_cft}
 \begin{aligned}
     &S(\rho_A^a\|\rho_A^b)
\\&=
\frac{\sqrt{\pi}\Gamma(\Delta+1)}
{4\Gamma \left(\Delta+\frac32\right)}
\sum_{\mathcal{O}_p\in\mathcal L}
\left(
C_{p a^* a}
-
C_{p b^* b}
\right)^2
(\pi x)^{2\Delta}
+\cdots ,
 \end{aligned}
 \end{equation}
where $C_{p a^* a}$ and $C_{p b^* b}$ are OPE coefficients, \(\mathcal L\) is the set of lightest operators $\mathcal{O}_p$ with $C_{p a^* a}-C_{p b^* b}\neq 0$, and \(\Delta=h+\bar h\) is their scaling dimension. Note that the above formula holds only when such $\mathcal{O}_p$ have $\Delta\leq 2$. When $\Delta>2$, the leading order formula is 
 \begin{equation}
     S\left(\rho^a_{A} \| \rho^b_{A}\right)=\frac{16}{15} \frac{1}{c}\left(h_a-h_b\right)^2(\pi x)^4 +\cdots 
     \label{eq:rel_ent_cft_2} 
 \end{equation}
where $c$ is the CFT central charge. 

When $a$ and $b$ have the same conformal weights, the relative entropy is always given by Eq. $\eqref{eq:rel_ent_cft}$. Similar formulas can be derived for fields with spin, i.e., $h\neq \bar h$ \cite{Lashkari2026}.

From Eq.~\eqref{eq:rel_ent_cft} and~\eqref{eq:rel_ent_cft_2}, we notice that the relative entropy between primaries $a$ and $b$ has an exchange symmetry $S(\rho_A^a\|\rho_A^b)=S(\rho_A^b\|\rho_A^a)$ to leading order in $x\ll 1$.

From these formulas one can easily find the exponent $\alpha_{\mfa\mfb}$ used in Appendix~\ref{app:gamma-computation}.
Below we show the relative entropy formulas in some explicit examples.    

\subsubsection{1D compact free boson CFT ($c=1$)}
\label{app:CFT_relative_ent_compact_boson}

In compact free boson CFT, we are interested in the primary states that correspond to vertex operators. Given a compactification radius $R$, the vertex operators are labeled by a pair of integers $(m,n)$ (see \cite{DiFrancesco:1997nk,Thorngren2021} for nice reviews on this)
\begin{equation}
\label{eq:vertex_op}
   V(z,\bar z)
   =
   :\exp \left(
   m\frac{i}{\sqrt R}\phi(z)
   +
   n\frac{i}{\sqrt R}\bar \phi(\bar z)
   \right): ,
\end{equation}
where $\phi(z)$ and $\bar \phi(\bar z)$ are the holomorphic and anti-holomorphic parts of the scalar field.  

For any $0<x<1$, the relative entropy between two primaries with integer labels $(m_a,n_a)$ and $(m_b,n_b)$ (as in Eq. \eqref{eq:vertex_op}) is derived by Lashkari \cite{Lashkari2014,Lashkari2015},
\begin{equation}
\label{eq:free_boson_Srel_exact}
\begin{aligned}
    &S(\rho^a_A||\rho^b_A) \\
    =& \frac{1}{R}\left[(m_a-m_b)^2+(n_a-n_b)^2\right](1 - \pi x \cot(\pi x)).
\end{aligned}
\end{equation}

For $x\ll 1$, we can easily find the relative entropy formulas by expanding Eq. \eqref{eq:free_boson_Srel_exact}. But one can also use a slight generalization of the universal formula $\eqref{eq:rel_ent_cft}$ to find:
\begin{equation}
\begin{aligned}
    &S(\rho^a_A||\rho^b_A) \\
    =&  \frac{1}{3R}\left[(m_a-m_b)^2+(n_a-n_b)^2\right](\pi x)^2 +\cdots 
\end{aligned}
\end{equation}
This formula comes from the fact that the relevant lightest primaries are always the holomorphic and anti-holomorphic $U(1)$ currents $J$ and $\bar J$. They have conformal weights $(h_J,\bar h_J)=(1,0)$ and $(h_{\bar J},\bar h_{\bar J})=(0,1)$, and OPE coefficients
\begin{equation}
    C_{JV^\dagger V}=-\frac{m}{\sqrt{R}},\quad C_{\bar JV^\dagger V}=-\frac{n}{\sqrt{R}}.
\end{equation}

In this paper, we particularly considered the cylinder primary states in chiral semion topological order that correspond to the following vertex operators
\begin{equation}
\begin{aligned}
        |\varphi_1^{0,0}\rangle \leftrightarrow V(z,\bar{z}) &= \mathbf{I} \\
        |\varphi_s^{\frac12,\frac12}\rangle \leftrightarrow V(z,\bar{z})&= :\exp(\frac{i}{\sqrt{2}}\phi(z)+\frac{i}{\sqrt{2}}\bar{\phi}(\bar{z})):\\
        |\varphi_s^{\frac32,\frac12}\rangle \leftrightarrow V(z,\bar{z})&= :\exp(i\frac{3}{\sqrt{2}}\phi(z)+\frac{i}{\sqrt{2}}\bar{\phi}(\bar{z})):\\
        |\varphi_s^{\frac12,-\frac12}\rangle \leftrightarrow V(z,\bar{z})&= :\exp(\frac{i}{\sqrt{2}}\phi(z)-\frac{i}{\sqrt{2}}\bar{\phi}(\bar{z})):.
\end{aligned}
\end{equation}
For later convenience of labeling states, we also use the following notation,
\begin{equation}
\begin{aligned}
    |\varphi_1^{0,0}\rangle 
    & \to |0,0\rangle, \qquad
    \ket{\varphi_s^{\frac{1}{2},\frac{1}{2}}}
     \to |s,s\rangle, \\
    \ket{\varphi_s^{\frac{3}{2},\frac{1}{2}}}
    & \to |3s,s\rangle, \qquad
   \ket{\varphi_s^{\frac{1}{2},-\frac{1}{2}}}
     \to |s,t\rangle.
\end{aligned} 
\end{equation}

Applying the above formulas, we find the following relative entropy to leading order in $x$:
\begin{align}
\label{eq:nu_half_primary_relent}
    S(\rho_A^{s,s}\|\rho_A^I)
    &=
    \frac13(\pi x)^2 +\cdots \\
    S(\rho_A^{3s,s}\|\rho_A^{I})
    &=\frac{5}{3}(\pi x)^2 +\cdots. 
\end{align}
Other cases can be worked out similarly.  

\subsubsection{1D Ising CFT ($c=\tfrac12$)}
\label{app:CFT_relative_ent_Ising}
Next we move on to the Ising CFT.
Full Ising primaries are $I$, $\sigma(z,\bar z)$, $\varepsilon(z,\bar z) $ with
$
(h_\sigma,\bar h_\sigma)=\Big(\tfrac{1}{16},\tfrac{1}{16}\Big),\quad
(h_\varepsilon,\bar h_\varepsilon)=\Big(\tfrac{1}{2},\tfrac{1}{2}\Big)
$
and OPE coefficients
$
C_{I\sigma\sigma}=1, C_{\varepsilon\sigma\sigma}=\tfrac12,
C_{I\varepsilon\varepsilon}=1,C_{\varepsilon\varepsilon\varepsilon}=0.
$

For Ising CFT, there exist analytic relative entropy formulas for all $0<x<1$ between certain pairs of primaries, derived in \cite{Ugajin2017}:
\begin{equation}
\begin{aligned}
    &S(\rho^\sigma_A || \rho^I_A) = S(\rho^I_A || \rho^\sigma_A) = \frac{1}{4}( 1 - \pi x \cot(\pi x)),\\
    & S(\rho_A^{\epsilon} || \rho_A^{I})=2(\log (2 \sin \pi x)+1-\pi x \cot (\pi x)
     \\ & +\psi_0\left(\frac{\csc \pi x}{2}\right)+\sin \pi x),\\
     &S(\rho_A^{\epsilon}||\rho_A^{\sigma})= S(\rho_A^{\epsilon} || \rho_A^{I})+ S(\rho_A^{I} || \rho_A^{\sigma}),
\end{aligned}
\end{equation}
where $\psi_0(x)$ is the digamma function. Note that we have an exchange symmetry $S(\rho^\sigma_A || \rho^I_A)=S(\rho^I_A || \rho^\sigma_A)$~\cite{Ugajin2017}, but it is not clear that this symmetry exists for $I,\epsilon$ and $\sigma,\epsilon$. 

In the case $x\ll 1$, we can use the universal formula \eqref{eq:rel_ent_cft} and \eqref{eq:rel_ent_cft_2} to find the relative entropy to leading order in $x$.
For $\sigma$ versus $I$, the lightest operator distinguishing them is
\(\varepsilon\), with scaling dimension \(\Delta_\varepsilon=1\). Thus
\begin{equation}
\label{eq:ising_sigma_I_relent}
    S(\rho_A^\sigma\|\rho_A^I)
    =
    \frac13
    \left(C_{\varepsilon\sigma\sigma}-C_{\varepsilon II}\right)^2
    (\pi x)^2
    =
    \frac{1}{12}(\pi x)^2 .
\end{equation}
Similarly,
\begin{equation}
\label{eq:ising_sigma_eps_relent}
    S(\rho_A^\sigma\|\rho_A^\varepsilon)
    =
    \frac13
    \left(C_{\varepsilon\sigma\sigma}
    -
    C_{\varepsilon\varepsilon\varepsilon}\right)^2
    (\pi x)^2
    =
    \frac{1}{12}(\pi x)^2 .
\end{equation}
For \(I\) versus \(\varepsilon\), the \(\varepsilon\)-exchange contribution
vanishes, and the leading term is the universal stress-tensor contribution:
\begin{equation}
\label{eq:ising_I_eps_relent}
    S(\rho_A^I\|\rho_A^\varepsilon)
    =
    \frac{16}{15}\frac{1}{c}
    \left(h_\varepsilon-h_I\right)^2
    (\pi x)^4
    =
    \frac{8}{15}(\pi x)^4
\end{equation}
The other cases can be obtained from the above by the $a\leftrightarrow b$ exchange symmetry in the universal formula.

\subsection{Relative entropy with a probabilistic mixture}\label{app:alpha-beta}

Next, we present the relative entropy formulas between a primary and a mixture of primaries, which can be used to determine the exponents $\alpha$ and $\beta$.

\subsubsection{Analytic results of $\alpha$}\label{app:A-alpha}
Before introducing the explicit results, for the small interval case, one may use joint convexity of relative entropy~\cite{1974CMaPh..39..111L,carlen2010trace}
\begin{equation}
    S(\sum_i \lambda_i \rho_i||\sum_i \lambda_i \sigma_i) \leq \sum_i \lambda_i S(\rho_i||\sigma_i)
    \label{eq:joint_convexity}
\end{equation}
to find an upper bound 
\begin{equation}
    \frac{1}{D}\sum_{a=1}^DS(\rho_A^a\|\bar{\rho}_A)\leq \frac{1}{D^2}\sum_{a,b=1}^DS(\rho_A^a\|\rho_A^b), 
\end{equation}
where $\bar{\rho}_A=\frac{1}{D}\sum_{a=1}^D\rho_A^a$. 

Through a replica-trick calculation, we can derive the relative entropy between a primary state and the equal mixture when $D=2$. The answer is surprisingly simple~\cite{Lashkari2026}: to leading order in $x$, the answer compared with Eq.~\eqref{eq:rel_ent_cft} is
\begin{equation}
\label{eq:rel_ent_cft_alpha}    S(\rho_A^a\|\frac{\rho_A^a+\rho_A^b}{2})=S(\rho_A^b\|\frac{\rho_A^a+\rho_A^b}{2})=\frac{1}{4}  S(\rho_A^a\|\rho_A^b).
\end{equation}

We therefore find for a small interval $A$
\begin{equation}
    \frac{1}{2}\sum_{a=1}^2S(\rho_A^a\|\bar{\rho}_A)=\frac{1}{2}\left[\frac{1}{4}\sum_{a,b=1}^2S(\rho_A^a\|\rho_A^b)\right]
\end{equation}
which means that, to leading order in $x$, the average of relative entropy between the primary and equal mixture is \emph{exactly half} of its upper bound derived by joint convexity. Generally, for any integer $D$, we can prove the following relation to leading order in $x$ \cite{Lashkari2026}
\begin{equation}
    \frac{1}{D}\sum_{a=1}^DS(\rho_A^a\|\bar{\rho}_A)=\frac{1}{2} \left[\frac{1}{D^2}\sum_{a,b=1}^DS(\rho_A^a\|\rho_A^b)\right].
\end{equation}

It follows that the power $\alpha$ computed from the replica trick is identical to the one obtained from a more naive joint convexity bound, with only the coefficient differing by a factor of $1/2$. 
Therefore, the powers $\alpha$ can actually be simply obtained from the relative entropy between two primary states in Appendix~\ref{app:Rel-a-b}. The relevant $\alpha$ exponent results for chiral semion code and chiral Ising code can be checked in~\cref{tab:power_law_and_stability_compact_boson_relent} and~\cref{tab:power_law_and_stability_Ising_relent}, respectively.

\subsubsection{Analytic results of $\beta$}\label{app:A-beta}
Next we look at the relative entropy for the complement region $\bar A$, whose angular size $1-x$ becomes large when $x\ll 1$. In this case, the joint convexity bound becomes trivial, since when $x\to 0$, we have $S(\rho_{\bar A}^a||\rho_{\bar A}^b)\to +\infty$ while $S(\rho_{\bar A}^a||\bar \rho_{\bar A})\to \log D$. So the answer from replica calculation will provide genuinely new information about the exponent $\beta$.

In the case $D=2$, we have the following results for the relative entropy in $\bar A$, to leading order in $x$, again assuming $a$ and $b$ are spinless~\cite{Lashkari2026}: 
 \begin{equation} \label{eq:rel_ent_cft_beta}
 \begin{aligned}
     \log 2 - S(\rho_{\bar A}^a\|(\rho_{\bar A}^a+\rho_{\bar A}^b)/2)=
\sum_{\mathcal{O}_q \in\mathcal L'}
c_q
(\pi x)^{2\Delta'}
+\cdots ,
 \end{aligned}
 \end{equation}
where $\mathcal{L}'$ is the set of lightest primary operators in the OPE of $\mathcal{O}_a \mathcal{O}_b^\dagger$, with scaling dimension $\Delta'$, and $c_q$ are some positive numbers, whose explicit forms will be presented in \cite{Lashkari2026}.

For compact free boson CFTs, the lightest primary operator appearing in the OPE of $\mathbf{I}$ and $V$ is $V$. Therefore we find
\begin{align}
    \log 2- S(\rho_A^{s,s}\|(\rho_A^I+\rho_A^{s,s})/2) \sim x^{1},\\
    \log 2- S(\rho_A^{I}\|(\rho_A^I+\rho_A^{s,s})/2) \sim x^{1},\\
    \log 2- S(\rho_A^{3s,s}\|(\rho_A^I+\rho_A^{3s,s})/2) \sim x^5,\\
    \log 2- S(\rho_A^{I}\|(\rho_A^I+\rho_A^{3s,s})/2) \sim x^5.
\end{align}

For Ising CFT, we find the explicit results are as follows. 
\begin{align}
    \log 2- S(\rho^I_{\bar A}||(\rho^I_{\bar A}+\rho^\sigma_{\bar A})/2) \sim x^{1/4} \\
    \log 2- S(\rho^\sigma_{\bar A}||(\rho^I_{\bar A}+\rho^\sigma_{\bar A})/2) \sim x^{1/4} \\
    \log 2- S(\rho^I_{\bar A}||(\rho^I_{\bar A}+\rho^\epsilon_{\bar A})/2) \sim x^{2} \\
    \log 2- S(\rho^\epsilon_{\bar A}||(\rho^I_{\bar A}+\rho^\epsilon_{\bar A})/2) \sim x^{2} \\
    \log 2- S(\rho^\sigma_{\bar A}||(\rho^\sigma_{\bar A}+\rho^\epsilon_{\bar A})/2) \sim x^{1/4}\\
    \log 2- S(\rho^\epsilon_{\bar A}||(\rho^\sigma_{\bar A}+\rho^\epsilon_{\bar A})/2) \sim x^{1/4}.
\end{align}
From these equations, the values of $\beta$ can be easily inferred.

\subsection{Review of the replica method}
Here we give a brief review of the replica method used to compute the relative entropy in (1+1)D CFT, as first discussed in \cite{Lashkari2014,Lashkari2015}. We then discuss the simplification of the calculation for the interval $A$ and $\bar A$ using OPE when $x\ll 1$ \cite{Sarosi2016,Sarosi2017,Lashkari2026}. The full detailed calculation will be presented in \cite{Lashkari2026}.

To compute the relative entropy $S(\rho||\sigma)$ using the replica trick, we first write it as
\begin{equation}
\label{eq:replica_trick}
    S(\rho||\sigma)= \lim_{n\to 1}\partial_n \log \frac{\tr \tilde \rho^n}{\tr \tilde \rho \, \tilde{\sigma}^{n-1}}. 
\end{equation}
where we use tildes for normalized density matrices, for example, $\tilde \rho \equiv \frac{\rho}{\tr\rho}$. It is important to make the normalization explicit, since generally $\tr \rho\neq 1$, $\tr \sigma\neq 1$ in a path-integral calculation.

We are interested in two cases: (1) $\tilde \rho = \tilde \rho_A^a$, $\tilde \sigma = \tilde \rho_A^b$, and (2) $\tilde \rho = \tilde \rho_A^a$, $\tilde \sigma = (\tilde \rho_A^a+ \tilde \rho_A^b)/2$. 

In either case, traces of (unnormalized) density matrices can be represented as path integrals on a $n$-sheeted replica manifold $\mathcal{M}_n$, as shown in Fig. \ref{fig:replica_trick}. For case (1), this is straightforward; for case (2), one needs to first expand $\tilde \rho \tilde{\sigma}^{n-1}$ into words made out of $\tilde \rho_A^a$ and $\tilde \rho_A^b$, and compute the trace of each term individually.

\begin{figure}
    \centering
    \includegraphics[width=\linewidth]{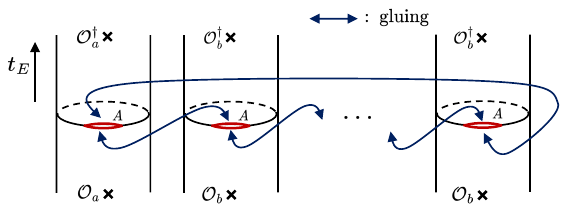}
    \caption{The Euclidean path integral for computing $\tr [\rho^a_A (\rho^b_A)^{n-1}]$. Given region $A$, the replica manifold $\mathcal{M}_n$ is obtained by gluing $n$ copies of Euclidean manifolds $\mathbb{R}\times S^1$ as indicated by the blue arrows. Using the state-operator correspondence, the path integral can be computed by a $2n$-point function with operators inserted at Euclidean time $t_E=\pm \infty$ on each sheet.}
    \label{fig:replica_trick}
\end{figure}

Using the state-operator correspondence, those path integrals can be turned into correlation functions with the corresponding operators inserted at Euclidean time $t_E=\pm \infty$ on each sheet, as also shown in Fig. \ref{fig:replica_trick}. For example,
\begin{equation}
   \tr [(\tilde \rho_A^a)^n] = \frac{\langle \Psi_{a,1}\cdots \Psi_{a,n}\rangle_{\mathcal{M}_n}}{(\langle  \Psi_{a,1}\rangle_{\mathcal{M}_1})^n},
\end{equation}
\begin{equation}
\begin{aligned}
    &\tr [\tilde{\rho}_A^a(\tilde{\rho}_A^b)^{m}(\tilde{\rho}_A^a)^{n-m-1}] \\
    =& \frac{\langle \Psi_{a,1} \Psi_{b,2} \cdots \Psi_{b,m+1} \Psi_{a,m+2} \cdots  \Psi_{a,n}\rangle_{\mathcal{M}_n}}{(\langle \Psi_{a,1}\rangle_{\mathcal{M}_1})^{n-m}(\langle \Psi_{b,1}\rangle_{\mathcal{M}_1})^{m}}.
\end{aligned}
\end{equation}
where $\Psi_{{s_k},k}=\mathcal{O}_{s_k}^\dagger(t_E=\infty_k)\mathcal{O}_{s_k}(t_E=-\infty_k)$, and $s_k=a,b$ is a pair of $a$ or $b$ operators insertion on the $k$-th sheet.

Although possible in principle, it is usually practically hard to compute those correlators on $\mathcal{M}_n$. In CFT, \cite{Lashkari2014,Lashkari2015} used a trick to conformally map $\mathcal{M}_n$ to the complex plane. The conformal map has the following feature: for each $k$, the $\mathcal{O}_{s_k}^\dagger$ and $\mathcal{O}_{s_k}$ in $\Psi_{s_k,k}$ are mapped to $z=\exp(i\pi k/n \pm  i \pi x/n)$ on the complex plane. The branch cut at the interval $A$ is mapped to $z=\exp(i\pi k/n) $, $k=1,...,n$. As shown in \cite{Lashkari2014,Lashkari2015}, the contribution from such a conformal map vanishes in the end as we take $n\to 1$. Therefore, the above correlators can be equivalently computed on the complex plane. 

This conformal transformation makes the calculation more tractable. However, even after the conformal transformation, the relative entropy can only be computed explicitly in very special cases, for example, between two primaries in free boson CFT \cite{Lashkari2014,Lashkari2015,Ruggiero2017}, or between certain primaries in Ising CFT \cite{Ruggiero2017}.

While the full correlators are generally hard to compute, we can nevertheless consider the limits $x\ll 1$ and $1-x\ll 1$, and compute the correlators perturbatively using the operator product expansion \cite{Sarosi2016,Sarosi2017,Lashkari2026}. The case $1-x\ll 1$ can be related to considering the relative entropy for the complement region $\bar A$ when $x\ll 1$. 

This simplification happens because the OPE organizes the product of two operators that are separated by a small distance in terms of a power series expansion of other operators, and the first few terms of such an expansion are often simple and universal. 

When $x\ll 1$, on the complex plane, the pair of operators within the same $\Psi_{s,k}$ approach each other, and therefore $\mathcal{O}_a^\dagger \mathcal{O}_a$, $\mathcal{O}_b^\dagger \mathcal{O}_b$ OPEs are relevant. The $\mathcal{O}_a^\dagger \mathcal{O}_a$ OPE is
\begin{equation}
\begin{aligned}
&\mathcal{O}_a(z)^\dagger \mathcal{O}_a(0) \\
=& z^{-2\Delta_a}\left(  \mathbf{I}+ \sum_{\mathcal{O}_p\in \mathcal{L}} z^{\Delta} C_{p a^* a} \mathcal{O}_p(0)+\cdots\right), 
\end{aligned}
\end{equation}
and similarly for $\mathcal{O}_b^\dagger \mathcal{O}_b$. Here the leading term is always the identity operator, and it is followed by $\mathcal{L}$, the set of lightest primary operators with non-zero OPE coefficient $C_{p a^* a}$, where $\Delta$ is their conformal dimension. When $\Delta>2$, however, the stress tensor $T$ (with $\Delta_T=2$) will be in place of the set of operators $\mathcal{O}_p$.  

 When $1-x\ll 1$, for any $k$ (mod $n$), the operator $\mathcal{O}_{s_k}$ or from $\Psi_{s_k,k}$ approaches the operator $\mathcal{O}_{s_{k+1}}^\dagger$ from $\Psi_{s_{k+1},k+1}$. Since $s_k$ can differ from $s_{k+1}$, $\mathcal{O}_a^\dagger \mathcal{O}_b$, $\mathcal{O}_b^\dagger \mathcal{O}_a$ OPE also become relevant. The $\mathcal{O}_a^\dagger \mathcal{O}_b$ OPE is
\begin{equation}
\begin{aligned}
&\mathcal{O}_a(z)^\dagger \mathcal{O}_b(0) \\
=& z^{-(\Delta_a+\Delta_b)}\left(\sum_{\mathcal{O}_p\in \mathcal{L}'} z^{\Delta'} C_{p a^* b} \mathcal{O}_p(0)+\cdots\right), 
\end{aligned}
\end{equation}
and similarly for $\mathcal{O}_b^\dagger \mathcal{O}_a$. Here the leading term involves $\mathcal{L}'$, the set of lightest primary operators with non-zero OPE coefficient $C_{p a^* b}$, where $\Delta'$ is their conformal dimension. 

The above OPEs will turn $2n$-point functions into sums over $n$-point functions, which turn out to be further largely simplified and computed analytically due to the appearance of the identity operator in certain OPEs. With those $2n$-point functions in hand, we can then assemble them into Eq. $\eqref{eq:replica_trick}$ (which can include the sum over words), and then analytically continue to $n\to 1$. Finally, this leads us to the results presented before, i.e., Eqs.~\eqref{eq:rel_ent_cft}, \eqref{eq:rel_ent_cft_2}, \eqref{eq:rel_ent_cft_alpha}, \eqref{eq:rel_ent_cft_beta}. The detailed calculation, in particular the calculation for the case $1-x \ll 1$ and the case involving the probabilistic mixture, will be presented in \cite{Lashkari2026}.

\section{Chiral semion model by lattice bosonic Laughlin states}
\label{app:FQH_wavefunction} 

\subsection{Analytical wave functions}
\label{app:FQH_analytical_wavefunction}

The chiral semion topological order can be realized by the bosonic fractional quantum Hall (FQH) with $\nu = 1/2$. Ref.~\cite{Nielson2014,Nielsen2012} provides a wider class of FQH states, with $\nu = 1/q$ in which $q$ is an even integer. (Similar models have been generalized to non-Abelian topological orders, see e.g.~\cite{Manna2018,Liu2025}.) Here, we discuss the wave functions that correspond to cylinder primary. For general $q$,
let the ``charge number'' of the two edges to be $p_1$ and $p_2$, where $p_j \in \{ 1,2,\cdots, q-1\}$ (module $q$). Physically, the charges of the edges are $(p_1/q , p_2/q)$.
The wave function of any even $q$ is given by 
\begin{equation}
\begin{aligned}\label{eq:wave-with-w12-p12-q}
    \Psi^{[p,w,z]}_q(n_1, \cdots, n_N) = \frac{1}{C} \, \delta_{\mathbf{n}}^{[p,q]} \cdot \prod_{i < j} (z_i - z_j)^{q n_i n_j} \\\cdot \prod_{i\ne j} (z_i - z_j)^{- n_i} 
    \cdot \prod_{i,j} (w_i - z_j)^{p_i n_j}  
\end{aligned}
\end{equation}
where $n_j \in \{ 0,1 \}$ and the delta function is 
\begin{equation}\label{eq:delta-2}
    \delta_{\mathbf{n}}^{[p,q]} = \left\{ \begin{array}{cc}
        1, & \text{for } \sum_j n_j = \frac{ N - p_1 - p_2}{ q}  \\
         0, & \text{otherwise.} 
    \end{array} \right.
\end{equation}
The values $\{ z_j\}_{j=1}^N$ are complex numbers chosen depending on the geometric shape of the system, and $w_1, w_2$ are two complex numbers which reflect the locations of topological excitations~\cite{Nielson2014}. 
Importantly, the system size $N$ (i.e., the total number of qubits) must be such that $N\mod q =0$. The sum of two changes should be $p_1 + p_2 = k q$ for the wave function to represent a primary state. 
 
\subsection{Cylinder with a square lattice}
\label{app:square_lattice}

For the cylinder used in Fig.~\ref{fig:edge_gamma}, we choose the lattice of \(N=L_xL_y\) sites as follows. For integers \(L_x\) and \(L_y\), sites are labeled by a row index \(j_2=0,\ldots,L_y-1\) and a periodic coordinate \(j_1=0,\ldots,L_x-1\), with \(j_1\sim j_1+L_x\). The corresponding row-major site label is \(j=1+j_2L_x+j_1\). The complex plane coordinates $\{z_j\}$ are generated from the cylinder coordinate  
\begin{equation}
    u_1(j_2)=\frac{2\pi}{L_x}j_2,\quad
    u_2(j_1)=\frac{2\pi}{L_x}j_1 .
\end{equation}
Thus 
\begin{equation}
\begin{aligned}
 &z_j:=z_{j_1,j_2}
    =\exp \left(u_1(j_2)+i u_2(j_1)\right)
\\& =\exp \left(\frac{2\pi}{L_x}j_2
    +i\frac{2\pi}{L_x}j_1\right).  
\end{aligned}
\end{equation}
For each fixed row \(j_2\), the \(L_x\) sites are equally spaced around the periodic direction, while increasing \(j_2\) moves outward along the cylinder before the exponential map.

The values of $(p_1, p_2)$ associated with the $\nu = 1/2$ bosonic Laughlin cylinder primary states we use are: 
\begin{equation}\label{eq:cylinder-primary-boson}
    \begin{aligned}
       \ket {\psi_1^{0,0}} & \to (0,0), \\
        \ket{\psi_s^{\frac12,\frac12}} & \to (1,1), \\
       \ket{\psi_s^{\frac12,-\frac12}} & \to (1,-1), \\
       \ket{\psi_s^{\frac32,\frac12}} & \to (3,1) ,
    \end{aligned}
\end{equation}
written in the notation in Example~\ref{exmp:chiral_Semion}. We set \(w_1=0\) and \(w_2=10^8\).

\subsection{Spin chains}
\label{app:semion_chain}

We can use this analytical wave function to simulate the 1D compact free boson RCFT with compactification radius $R=\sqrt{2}$ and central charge $c=1$. To do so, we choose $q=2$. For the primary state $\ket{\varphi_s^{\frac{1}{2},\frac{1}{2}}}$ which corresponds to the vertex operator 
\begin{equation}
    V(z,\bar{z})=:\exp(\frac{i}{\sqrt{2}}\phi(z)+\frac{i}{\sqrt{2}}\bar{\phi}(\bar{z})):,
\end{equation}
we set $w_1=0$ and $w_2=10^8$. The qubits are arranged evenly on the circle with unit radius. In other words, we let 
\begin{equation}
    z_j=\exp(\frac{2\pi i}{N}j),\quad  \text{for}\quad  j=1,2,\cdots N.
\end{equation}
Again, we set \(w_1=0\) and \(w_2=10^8\).
The primary states of 1D RCFT corresponding to those listed in Eq.~\eqref{eq:cylinder-primary-boson} under dimensional reduction, are
\begin{equation}
    \ket {\varphi_1^{0,0}} , \quad 
    \ket {\varphi_s^{\frac12,\frac12}} , \quad 
    \ket {\varphi_s^{\frac12,-\frac12}} , \quad 
    \ket {\varphi_s^{\frac32,\frac12}}.
\end{equation}
They share the same assignment of $(p_1, p_2)$.

\section{Ising CFT primary: puMPS method}
\label{app:ising_pumps_method}

This appendix explains the periodic uniform MPS (puMPS) method~\cite{Zou2017} used for the one-dimensional
Ising CFT data in the numerical section~\ref{subsec:numerics_ising_beta_relent}. The goal is to obtain finite-size
lattice representatives of the three low-lying periodic-sector primaries
\(\ket{I}\), \(\ket{\sigma}\), and \(\ket{\epsilon}\), and then to compute reduced
density matrices on a small interval \(M=\{1,\ldots,\ell\}\). The notations used here are mainly based on the ones in~\cite{Zou2017}.

\subsection{Hamiltonian and periodic uniform MPS ansatz}
\label{app:ising_pumps_ham_ansatz}

The critical transverse-field Ising chain used in the simulation is
\begin{equation}
\label{eq:ising_hamiltonian_pumps_app}
    H_N=-\sum_{j=1}^{N}X_jX_{j+1}-\sum_{j=1}^{N}Z_j,
\end{equation}
The ground-state variational ansatz is the periodic uniform matrix product state
(puMPS). A puMPS with physical dimension \(d=2\) and bond dimension
\(D\) is specified by one tensor \(A^s\in\mathbb{C}^{D\times D}\), repeated at
every site:
\begin{equation}
\label{eq:pumps_ground_ansatz_app}
    \ket{\Psi(A)}
    =
    \sum_{s_1,\ldots,s_N=1}^{d}
    \operatorname{Tr} \big[A^{s_1}A^{s_2}\cdots A^{s_N}\big]
    \ket{s_1s_2\cdots s_N} .
\end{equation}
The trace enforces the periodic boundary condition. The state is invariant
under the gauge transformation \(A^s\mapsto G^{-1}A^sG\), with \(G\) invertible,
and the implementation fixes this freedom by repeatedly bringing the tensor to a
left-canonical gauge, which requires  
\begin{equation}
\label{eq:left_canonical_app}
    \sum_s A_L^{s\dagger}A_L^s=\mathbf{1},
    \qquad
    \sum_s A_L^s\lambda^2 A_L^{s\dagger}=\lambda^2,
\end{equation}
where \(\lambda\) is the diagonal matrix of Schmidt coefficients of the
corresponding infinite uniform MPS. The center tensor is
\begin{equation}
\label{eq:Ac_def_app}
    A_C^s=A_L^s\lambda .
\end{equation}

The ground state is obtained by minimizing
\begin{equation}
\label{eq:pumps_ground_energy_app}
    E(A,\bar A)=
    \frac{\bra{\Psi(A)}H_N\ket{\Psi(A)}}{\braket{\Psi(A)|\Psi(A)}}.
\end{equation}
In practice, the program first applies a small number of VUMPS (variational uniform matrix product state)~\cite{2018PhRvB..97d5145Z} iterations and then
uses local energy minimization. The local minimization uses the deformed puMPS
with only the first tensor varied,
\begin{equation}
\label{eq:local_deformed_pumps_app}
    \ket{\Psi_{A_L}(A_C)}
    =
    \sum_{\mathbf{s}}
    \operatorname{Tr} \big[(A_C^{s_1}\lambda^{-1})A_L^{s_2}\cdots A_L^{s_N}\big]
    \ket{\mathbf{s}},
\end{equation}
and the induced local norm matrix
\begin{equation}
\label{eq:local_metric_app}
    \braket{\Psi_{A_L}(A_C)|\Psi_{A_L}(A_C)}
    =
    \bar A_C^{\mu}g_{\mu\nu}A_C^{\nu},
    \quad \mu=(s,a,b).
\end{equation}
The physical gradient direction is therefore 
\begin{equation}
\label{eq:pumps_gradient_direction_app}
    \Delta A_C^{\mu}
    =
    -g^{\mu\nu}
    \frac{\partial E_{A_L}(A_C,\bar A_C)}{\partial \bar A_C^{\nu}},
    \quad g^{\mu\nu}g_{\nu\rho}=\delta^\mu_{\rho},
\end{equation}
where \(E_{A_L}\) is the auxiliary energy functional obtained by replacing
\(\ket{\Psi(A)}\) in Eq.~\eqref{eq:pumps_ground_energy_app} with
\(\ket{\Psi_{A_L}(A_C)}\). The uniform tensor is updated by
\begin{equation}
\label{eq:pumps_gradient_update_app}
    A_L^s \longleftarrow A_L^s+\tau\,\Delta A_C^s\lambda^{-1},
\end{equation}
where \(\tau\) is chosen by a line search, followed by canonicalization and
normalization. This gives the optimized tensor \(A\) used below. The resulting
normalized state is identified with the finite-size representative of the Ising
CFT vacuum,
\begin{equation}
\label{eq:I_state_app}
    \ket{I}_N\equiv \frac{\ket{\Psi(A)}}{\sqrt{Z_I}},
    \quad
    Z_I=\braket{\Psi(A)|\Psi(A)}.
\end{equation}

\subsection{Construction of excited states \texorpdfstring{\(\ket{\sigma}\)}{sigma} and \texorpdfstring{\(\ket{\epsilon}\)}{epsilon}}
\label{app:ising_pumps_tangent_states}

Low-energy excited states are represented by Bloch-state tangent vectors built
on the optimized ground-state tensor \(A\). For momentum \(p=2\pi k/N\), define
\begin{equation}
\label{eq:pumps_tangent_ansatz_app}
\begin{aligned}
&\ket{\Phi_p(B;A)}
    \\&=
    \sum_{n=1}^{N} e^{ipn}
    \sum_{\mathbf{s}}
    \operatorname{Tr} \left[
        A^{s_1}\cdots A^{s_{n-1}}B^{s_n}
        A^{s_{n+1}}\cdots A^{s_N}
    \right]
    \ket{\mathbf{s}} . 
\end{aligned}
\end{equation}
The sign of \(p\) is a convention fixed by the translation operator; all states
we used here have \(k=0\). The tensor \(B\) has the same index structure as \(A\). It is useful
to solve the variational problem in center gauge,
\begin{equation}
\label{eq:BC_center_gauge_app}
    B^s=B_C^s\lambda^{-1}.
\end{equation}
Let \(B^\mu_C\) be the vectorized variational parameters. Projecting the
Hamiltonian into the tangent subspace gives the generalized eigenvalue problem
\begin{equation}
\label{eq:tangent_generalized_evp_app}
    H^{\mathrm{eff}}_{\mu\nu}(p)B_C^\nu
    =
    E\,G_{\mu\nu}(p)B_C^\nu,
\end{equation}
with
\begin{align}
\label{eq:tangent_metric_hamiltonian_app}
    G_{\mu\nu}(p)
    &=
    \left\langle
      \frac{\partial \Phi_p(\bar B_C;\bar A)}{\partial \bar B_C^\mu}
      \middle|
      \frac{\partial \Phi_p(B_C;A)}{\partial B_C^\nu}
    \right\rangle,\\
    H^{\mathrm{eff}}_{\mu\nu}(p)
    &=
    \left\langle
      \frac{\partial \Phi_p(\bar B_C;\bar A)}{\partial \bar B_C^\mu}
      \middle|H_N\middle|
      \frac{\partial \Phi_p(B_C;A)}{\partial B_C^\nu}
    \right\rangle .
\end{align}
The metric \(G(p)\) is positive semidefinite, not strictly positive definite,
because the tangent representation has gauge redundancies. Numerically one
uses the pseudoinverse \(\widetilde G(p)\) and solves
\begin{equation}
\label{eq:tangent_pinv_evp_app}
    \widetilde G^{\rho\mu}(p)H^{\mathrm{eff}}_{\mu\nu}(p)B_C^\nu
    =
    E B_C^\rho,
\end{equation}
normalizing the eigenvectors by
\begin{equation}
\label{eq:tangent_norm_app}
    \bar B_C^\mu G_{\mu\nu}(p)B_C^\nu=1.
\end{equation}
After the eigenvalue problem is solved, one converts back to \(B^s=B_C^s\lambda^{-1}\)
and stores the excited state as a pair \((A,B)\).

For the PBC Ising data in this paper, we solve only the momentum-zero sector, ($k=0,\,\, p=0$), and derive the three lowest tangent-space eigenvectors. The optimized puMPS
\(\ket{\Psi(A)}\) itself is used as the vacuum representative. The two lowest
nontrivial tangent-space states in the same sector are then assigned as
\begin{equation}
\label{eq:sigma_epsilon_tensors_app}
    \ket{\sigma}_N
    \equiv
    \frac{\ket{\Phi_0(B_\sigma;A)}}{\sqrt{Z_\sigma}},
    \qquad
    \ket{\epsilon}_N
    \equiv
    \frac{\ket{\Phi_0(B_\epsilon;A)}}{\sqrt{Z_\epsilon}} .
\end{equation} 

\begin{algorithm}[t]
\DontPrintSemicolon
\caption{PBC Ising primary-state construction by puMPS}
\label{alg:pumps_ising_primary_app}
\KwIn{System size \(N\), bond dimension \(D\).}
Construct the local Ising MPO for Eq.~\eqref{eq:ising_hamiltonian_pumps_app} and the split PBC MPO \(H_{\rm OBC}+H_{\partial}\).\;
Initialize a random translation-invariant tensor \(A\in\mathbb{C}^{D\times d\times D}\).\;
Optimize \(A\) by VUMPS preconditioning followed by local gradient minimization of Eq.~\eqref{eq:pumps_ground_energy_app}.\;
Canonicalize \(A\) and compute \(\lambda\).\;
Build \(G(0)\) and \(H^{\rm eff}(0)\) from Eq.~\eqref{eq:tangent_metric_hamiltonian_app}.\;
Solve the pseudoinverse eigenproblem Eq.~\eqref{eq:tangent_pinv_evp_app} for the lowest tangent vectors.\;
Use \(\ket{\Psi(A)}\) as \(\ket{I}\); use the two lowest nontrivial tangent states as \(\ket{\sigma}\) and \(\ket{\epsilon}\).\; 
\end{algorithm}

Here, we choose the system size: $N=16,18,20,22,24$ with bond dimension $D = 14,16,16,18,18 $. 

\subsection{Reduced density matrices from puMPS contractions}
\label{app:pumps_rdm_contraction}

The reduced density matrix of subsystem $M$ can be obtained directly from puMPS transfer matrices, avoiding
the dense vector of size \(d^N\). For two local tensors \(X^s,Y^s\in\mathbb{C}^{D\times D}\), define the
single-site double-layer transfer matrices
\begin{equation}
\label{eq:double_layer_def_app}
    E_{X,Y}^{s,t}:=X^s\otimes \overline{Y^t},
    \qquad
    E_{X,Y}:=\sum_{u=1}^{d}E_{X,Y}^{u,u} .
\end{equation}
For the ground state, \(X=Y=A\), the norm is
\begin{equation}
\label{eq:ground_norm_transfer_app}
    Z_I=\operatorname{Tr}((E_{A,A})^{N}).
\end{equation}
For \(M=\{1,\ldots,\ell\}\) and multi-indices
\(\mathbf{s}=(s_1,\ldots,s_\ell)\), \(\mathbf{t}=(t_1,\ldots,t_\ell)\), the reduced
density matrix of the ground state is
\begin{equation}
\label{eq:ground_rdm_transfer_app}
    \big(\rho_{M}^{I}\big)_{\mathbf{s},\mathbf{t}}
    =
    \frac{1}{Z_I}
    \operatorname{Tr} \left[
        E_{A,A}^{s_1,t_1}E_{A,A}^{s_2,t_2}\cdots
        E_{A,A}^{s_\ell,t_\ell}
        (E_{A,A})^{N-\ell}
    \right].
\end{equation}
The transfer matrix's power
\((E_{A,A})^{N-\ell}\) contracts the complement \(\bar M\), while the
\(\ell\) block tensors keep the physical indices \(\mathbf{s},\mathbf{t}\) open.

The tangent-state reduced density matrices are obtained by the same rule, but
one must sum over the insertion positions of the ket tensor \(B\) and the bra
tensor \(\bar B\). It is convenient to write one formula that covers ground and
tangent states. Let \(\alpha\in\{I,\sigma,\epsilon\}\). For \(\alpha=I\) there is a
single insertion configuration \(r=\varnothing\), with
\(X_j^{(I,r)}=A\) for all \(j\) and phase \(\omega_I(r)=1\). For a tangent state
\(\alpha\in\{\sigma,\epsilon\}\), the configurations are
\(r=n\in\{1,\ldots,N\}\), with
\begin{equation}
\label{eq:tangent_config_tensor_app}
    X_j^{(\alpha,n)}=
    \begin{cases}
    B_\alpha, & j=n,\\
    A, & j\neq n,
    \end{cases}
    \qquad
    \omega_\alpha(n)=e^{ip_\alpha n} .
\end{equation}
For the states used in this work \(p_\sigma=p_\epsilon=0\). Define
\begin{equation}
\label{eq:generic_state_norm_app}
    Z_\alpha
    =
    \sum_{r,r'\in\mathcal{C}_\alpha}
    \omega_\alpha(r)\overline{\omega_\alpha(r')}
    \operatorname{Tr} \left[
      \prod_{j=1}^{N}
      E_{X_j^{(\alpha,r)},X_j^{(\alpha,r')}}
    \right],
\end{equation}
where \(\mathcal{C}_I=\{\varnothing\}\) and
\(\mathcal{C}_{\sigma}=\mathcal{C}_{\epsilon}=\{1,\ldots,N\}\). Then the mixed
reduced operator
\begin{equation}
\label{eq:mixed_rdm_def_app}
    \rho_{M}^{\alpha\beta}
    :=
    \operatorname{Tr}_{\bar M}\ket{\psi_\alpha}\bra{\psi_\beta}
\end{equation}
is
\begin{align}
\label{eq:generic_mixed_rdm_transfer_app}
    \big(\rho_{M}^{\alpha\beta}\big)_{\mathbf{s},\mathbf{t}}
    & =
    \frac{1}{\sqrt{Z_\alpha Z_\beta}}
    \sum_{r\in\mathcal{C}_\alpha}
    \sum_{r'\in\mathcal{C}_\beta}
    \omega_\alpha(r)\overline{\omega_\beta(r')}
    \nonumber\\
    &\quad \times
    \operatorname{Tr} \left[
      \prod_{j=1}^{\ell}
      E_{X_j^{(\alpha,r)},X_j^{(\beta,r')}}^{s_j,t_j}
      \prod_{j=\ell+1}^{N}
      E_{X_j^{(\alpha,r)},X_j^{(\beta,r')}}
    \right].
\end{align}
Setting \(\alpha=\beta\) gives the ordinary reduced density matrices
\(\rho_{M}^{I}\), \(\rho_{M}^{\sigma}\), and \(\rho_{M}^{\epsilon}\). Setting
\(\alpha\neq\beta\) gives the off-diagonal reduced operators needed for a direct
puMPS contraction of the reference-system mutual information. Equation
\eqref{eq:generic_mixed_rdm_transfer_app} is also the precise tensor-network
meaning of tracing out the complement of a tangent-state puMPS.

For the computation of the ground-state reduced density matrix on the left-hand side of Eq.~\eqref{eq:ground_rdm_transfer_app}, the RAM needed is \(d^{2\ell}\) instead of \(d^N\), which is needed if we start with the pure state on the entire system. For tangent states, the expression above has an explicit double sum over the two insertion positions. A naive implementation scales as \(O(N^2 d^{2\ell})\) transfer contractions, but the products can be reused with prefix and suffix transfer matrices. For fixed small \(\ell\), this avoids the dense-vector bottleneck and is the contraction strategy needed to push the Ising scan to larger system
sizes.

\section{Power-law-range recovery channel details}
\label{app:quasi_local_recovery_channel}

\subsection{Tools for the proof}

We present a set of tools and necessary background knowledge for the proof of~\cref{thm:quasi-local-recovery}.

\subsubsection{A universal recovery theorem}
There is a universal recovery theorem following from Eq.~(15) in Ref.~\cite{JRSWW-universal-recovery}. 
For a pair of states $\rho, \sigma$ such that \(\operatorname{supp}(\rho)\subseteq\operatorname{supp}(\sigma)\), the relative entropy under a quantum channel $\mathcal{N}$ satisfies
\begin{equation}
\begin{aligned}
&S(\rho||\sigma) - S(\mathcal N(\rho)||\mathcal N(\sigma))
\\&\ge-\log {F} \left(\rho,\mathcal R\circ \mathcal N(\rho)
\right),  
\end{aligned}
\end{equation} 
where $\mathcal{R}$ denotes a universal recovery channel that only depends on $\mathcal{N}$ and $\sigma$. 
Here, we use the Uhlmann fidelity $F(\omega,\tau):=\|\sqrt{\omega}\sqrt{\tau}\|_1^2$. Furthermore, the recovery map $\calR$ has an explicit form as worked out in Ref.~\cite{JRSWW-universal-recovery}.  

We shall need the special context of tripartite states $\rho_{ABC}$ and $\sigma_{ABC}$ living on a tensor product of three finite-dimensional Hilbert spaces $\calH_A \otimes \calH_B \otimes \calH_C$ with a channel $\calN_A$ acting on $A$. Furthermore, we require a special form of $\sigma$ as
\begin{equation}
    \sigma_{ABC}=\sigma_{AB}\otimes\sigma_C,
\end{equation}
and with $\rho$ still satisfies \(\operatorname{supp}(\rho)\subseteq\operatorname{supp}(\sigma)\). In this context, the universal recovery theorem implies the following inequality: 
\begin{equation}
\begin{aligned}
&S(\rho_{ABC}\|\sigma_{ABC})
-
S(\calN_A(\rho_{ABC})\| \calN_A(\sigma_{ABC}))
\\
&\ge
-\log F\left(
\rho_{ABC},
(\mathcal R_{AB}\otimes\mathrm{id}_C)\circ\calN_A(\rho_{ABC})
\right),
\end{aligned}
\label{eq:recovery_theorem-ABC}
\end{equation}
where the recovery channel $\calR_{AB}$ depends only on $\sigma_{AB}$ and $\calN_A$. The verification of the support of $\calR$ on $AB$ is not completely trivial, and it follows from the explicit form of the recovery channel in Ref.~\cite{JRSWW-universal-recovery}.

\subsubsection{Full boundary axioms from entanglement bootstrap}
 
We present two conditions for partitions covering an edge. They are useful in understanding cylinder primary states and especially Abelian ones. These conditions are analogs of entanglement bootstrap axioms {\bf A0} and {\bf A1} of the topologically ordered bulk~\cite{Shi2019fusion}, and are considered in Ref.~\cite{Chiral-vira2024}.
\begin{itemize}[leftmargin=13pt]
    \item Full boundary {\bf A0} refers to
    \begin{equation}
        S_{C} + S_{BCD} - S_{BD}  \approx 0, 
    \end{equation}
    for partition $B,C,D$ on cylinder that is topologically the same as Fig.~\ref{fig:full-bdy-A1}. Such a condition is expected to be satisfied for any cylinder primary state $|\psi^\mfa\rangle$. However, it may not hold for superpositions of different cylinder primary states.
     \item Full boundary {\bf A1} refers to
    \begin{equation}
       S_{BC} + S_{CD} - S_{B} - S_D  \approx 0,  
    \end{equation}
    for partition $B,C,D$ on cylinder that is topologically the same as Fig.~\ref{fig:full-bdy-A1}. Such a statement is expected to hold only for Abelian primary states, namely it holds for $|\psi^\mfa\rangle$ with an Abelian $\mfa \in \calA$. 
\end{itemize}
 Suppose that the full boundary axioms are satisfied on an edge. Then collapsing the edge to a point will result in a point satisfying the bulk axioms. In other words, the edge is invisible unless we cut it into pieces. We should expect non-vanishing errors for systems with finite onsite Hilbert spaces for chiral states~\cite{Li2025strict,Ranard2024strict}, though such errors typically decay fast towards zero as the sizes of the regions increase. We shall neglect the errors for the applications in this appendix.
 
\begin{figure}[h]
    \centering
    \includegraphics[width=0.45\linewidth]{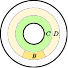}
    \caption{{\bf Full boundary Axioms.} Partition $BCD$ of an annulus surrounding an edge, which we use to define the full boundary entanglement bootstrap axioms.} 
    \label{fig:full-bdy-A1}
\end{figure}

\subsubsection{Conditional mutual information of chiral edge states}

For an Abelian cylinder primary, we apply full boundary $\mathbf{A1}$ to relate the conditional mutual information of a certain partition of the cylinder to the mutual information between disjoint intervals $A'',C''$ that can be calculated in the corresponding 1D CFT. 
\begin{Proposition}
\label{prop:mutual_on_chiral}
Consider a code word $\rho^\mfa$ of a chiral edge code, and the partitions in Fig.~\ref{fig:proof}. Suppose $\mfa$ is Abelian, we have
\begin{equation}
I(A:C|B)_{\rho^\mfa}=I(A:C')_{\rho^\mfa} \le I(A'':C'')_{\rho^\mfa}.
\end{equation}

\end{Proposition}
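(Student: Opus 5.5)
The plan is to prove the equality by an entropy-combination argument, using purity of $\rho^\mfa$ on the cylinder together with the full-boundary axioms, and then to get the inequality from monotonicity of mutual information under enlarging both regions to disjoint full columns. I read Fig.~\ref{fig:proof} as follows: $A$ is the edge disk; $B$ is the buffer surrounding $A$ along the edge and into the bulk by a few correlation lengths; $C=Q\setminus AB$. Inside $C$, the region $C'$ is the part adjacent to the same physical edge and separated from $A$ by $B$. The regions $A''\supseteq A$ and $C''\supseteq C'$ are disjoint full columns.

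\textbf{Step 1 (reducing the conditional mutual information to a mutual information).} I would expand
\begin{equation}
I(A:C|B)=S_{AB}+S_{BC}-S_B-S_{ABC}.
\end{equation}
Since $\rho^\mfa$ is pure on $ABC=Q$, we have $S_{ABC}=0$ and $S_{BC}=S_A$. The target is $I(A:C')=S_A+S_{C'}-S_{AC'}$. It therefore suffices to show
\begin{equation}
S_{AB}-S_B=S_{C'}-S_{AC'}.
\end{equation}
Equivalently, I would decompose $B$ and the bulk part of $C$ into pieces that, together with $A$ and $C'$, form a partition $B_1C_1D_1$ of an annulus around the edge, topologically as in Fig.~\ref{fig:full-bdy-A1}. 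On this annulus I would apply full-boundary \textbf{A1}, $S_{BC}+S_{CD}-S_B-S_D=0$, with suitable identifications. To remove the bulk portions of $C$ and $B$, I would use full-boundary \textbf{A0}, which holds for any cylinder primary, together with the bulk axioms \textbf{A0}, \textbf{A1}. The strategy is to write both sides as entropy differences that telescope once each axiom instance is inserted. The Abelian hypothesis enters precisely here, because \textbf{A1} on a full-boundary annulus holds only for Abelian $\mfa$ (quantum dimension $d_\mfa=1$, so no $\log d_\mfa^2$ correction appears).

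\textbf{Step 2 (the inequality).} Since $A\subseteq A''$ and $C'\subseteq C''$, monotonicity of mutual information under partial trace, which follows from strong subadditivity, gives
\begin{equation}
I(A:C')_{\rho^\mfa}\le I(A'':C'')_{\rho^\mfa}.
\end{equation}
Because $A''$ and $C''$ are disjoint full columns, this quantity is UV finite. By Assumption~\ref{asmp:UVfinite_entanglement_map_columns}, in the form of Eq.~\eqref{eq:MI_column_match}, it equals the two-interval mutual information $I(P(A''):P(C''))_{|\varphi^\mfa\rangle}$, which is what feeds into $\mu_\mfa$ later.

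\textbf{Main obstacle.} The hard part is Step 1: finding the right sequence of axiom applications so that every cutoff-dependent entropy, in particular those of regions cut by the edge and the corners where $B$ meets the edge, cancels exactly. My first attempt would be to merge $A$ into a single edge segment $\tilde B$ and view $A\,B\,C'$ as the three pieces $C,B,D$ of the full-boundary annulus. I would then show, via strong subadditivity combined with full-boundary \textbf{A1}, that $I(A:C\setminus C'\,|\,BC')=0$. Chain rule then gives
\begin{equation}
I(A:C|B)=I(A:C'|B)+I(A:C\setminus C'|BC').
\end{equation}
The remaining term $I(A:C'|B)$ would be identified with $I(A:C')$ because the buffer Markov-screens $A$ from everything except the edge-connected part. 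If this identification fails to be exact, I would settle for an inequality $I(A:C|B)\le I(A:C')$, which still suffices for the recovery bound.
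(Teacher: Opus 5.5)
Step~2 is fine and agrees with the paper. Step~1, however, has a genuine gap: the equality is never derived, and the route you sketch aims at the wrong instance of the axiom. Your reduced target $S_{AB}-S_B=S_{C'}-S_{AC'}$ is correct. Now set $C_1C_2:=C\setminus C'$ and use purity once more ($S_{AB}=S_{C_1C_2C'}$, $S_{AC'}=S_{BC_1C_2}$). The target becomes
\begin{equation*}
\Delta(B,C_1C_2,C'):=S_{BC_1C_2}+S_{C_1C_2C'}-S_B-S_{C'}=0 ,
\end{equation*}
which is the same as $I(A:C_1C_2|C')=I(A:C)-I(A:C')=0$. The paper proves exactly this, in three steps:
\begin{itemize}
\item $\Delta\ge 0$, since it equals a conditional mutual information.
\item Strong subadditivity shows that shrinking the first and third slots cannot decrease $\Delta$. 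Hence $\Delta(B,C_1C_2,C')\le\Delta(\tilde B,C_1C_2,\tilde C')$, where $\tilde B,\tilde C'$ are $B,C'$ with a thin layer removed along the edge that contains $A$.
\item $(\tilde B,C_1C_2,\tilde C')$ is a full-boundary \textbf{A1} partition around the \emph{opposite} edge. The middle slot is $C_1C_2$, which contains that whole edge, and $\tilde B\tilde C'$ is the split annulus surrounding it.
\end{itemize}
Your proposal to put $A,B,C'$ into the three slots around the near edge cannot work. The middle region of a full-boundary \textbf{A1} partition must contain an entire edge circle, whereas $A$ touches only an arc of it. Neither the bulk axioms nor a telescoping over many axiom instances is needed.

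The chain-rule fallback does not close the gap either. The term you plan to kill is the easy one: by purity $I(A:C_1C_2|BC')=I(A:C_1C_2)$. This vanishes by full-boundary \textbf{A0} around the far edge, using the same shrinking trick, with no Abelian input. What remains is $I(A:C'|B)=I(A:C')$, which you assert on the basis of ``Markov screening by the buffer.'' But combining the two chain rules with purity gives $I(A:C'|B)-I(A:C')=I(A:C_1C_2|C')-I(A:C_1C_2)$. So your identification is equivalent to $I(A:C_1C_2|C')=0$. That is the entire nontrivial content, and the screening is done by $C'$, not by $B$.

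Likewise, the inequality you would ``settle for'' is not weaker. For a pure state, $I(A:C|B)=I(A:C)=I(A:C')+I(A:C_1C_2|C')\ge I(A:C')$. So $I(A:C|B)\le I(A:C')$ is equivalent to the equality and requires the same \textbf{A1} step.
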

\begin{proof}
The equality $I(A:C|B)_{\rho^\mfa}=I(A:C')_{\rho^\mfa}$ follows from the full boundary version of entanglement bootstrap axiom $\mathbf{A1}$ which we explained above. The detailed computation, with regions shown Fig.~\ref{fig:proof} (b) and (c), is 
    \begin{equation}
\begin{aligned}
  & I(A:C_1C_2C')_{\rho^{\mfa}}-I(A:C')_{\rho^\mfa}\\
  &=(S_{C_1C_2C'}-S_{AC_1C_2C'}-S_{C'} + S_{AC'})_{\rho^\mfa}\\
  &=I(C_1C_2:A|C')_{\rho^\mfa}\\
  & = (S_{C_1C_2C'}-S_{B}-S_{C'} + S_{B C_1 C_2})_{\rho^\mfa}\\
&:=\Delta(B,C_1C_2,C')_{\rho^\mfa}.
\end{aligned}
\end{equation} 
Then, we apply the full boundary {\bf A1} and notice that shrinking the regions in the 1st and 3rd slots of $\Delta$ cannot decrease the value, namely $\Delta(XX',Y,ZZ') \le \Delta(X,Y,Z)$. Thus,
\begin{equation}\label{eq:A1-bound}
    \Delta(B,C_1C_2,C')_{\rho^\mfa} \le \Delta(\tilde{B},C_1C_2,\tilde{C}')_{\rho^\mfa}
\end{equation}
where $\tilde{B}$ and $\tilde{C}'$ not shown in the figure, are regions with $B$ and $C'$ minus a thin layer adjacent to the edge. Importantly, $(\tilde{B},C_1C_2,\tilde{C}')$ is now topologically identical to the partition $(B,C,D)$ in Fig.~\ref{fig:full-bdy-A1}, thus the right-hand side of Eq.~\eqref{eq:A1-bound} is zero by the full boundary {\bf A1}, so is the left-hand side. 
Therefore,
\begin{equation}
I(A:C|B)_{\rho^\mfa}\;=\;I(A:C')_{\rho^\mfa}, 
\end{equation}
for partitions in Fig.~\ref{fig:proof}.
Note that we prove this only for Abelian states because the full boundary {\bf A1} should be violated by non-Abelian anyonic charges.

The inequality that $I(A:C')_{\rho^\mfa} \le I(A'':C'')_{\rho^\mfa}$ follows from the fact that $A \subset A''$ and $C'\subset C''$ and the monotonicity of mutual information under partial trace. This completes the proof. 
\end{proof}

\begin{figure}[h]
    \centering
    \includegraphics[width=0.85\linewidth]{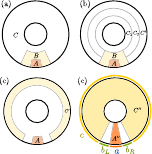}
    \caption{Partitions of the cylinder viewed as a flat disk, useful for the proof of theorem~\ref{thm:quasi-local-recovery}. (a) A local disk $A$ adjacent to an edge. $B$ separates $A$ from the rest. (b) Further partition $C$ into $C_1 C_2$ and $C'$. (c) $A'$ and $C'$ are both disks adjacent to the same edge. (d) Columns $A'' \supset A$ and $C'' \supset C'$ are introduced for dimensional reduction purposes. Each of $a,b_L, b_R$ and $c$ is the 1D correspondence of a certain column.   
    }
    \label{fig:proof}
\end{figure}

\subsection{Proof of the theorem~\ref{thm:quasi-local-recovery}}

Let $A$ be the erased local disk near the edge and let $B$ be a buffer region surrounding $A$. Define $C := Q \setminus (AB)$; see Fig.~\ref{fig:proof}(a). We denote the channel on $A$ as $\calN_A$. 
Our first trick is to apply the variant of the universal recovery theorem from~\eqref{eq:recovery_theorem-ABC}, choosing  
\begin{equation}
\sigma_{ABC}:=\pi_{AB}\otimes \pi_C,
\end{equation}
where $\pi_{X}
:=
\frac1D
\sum_{\mfa\in A_{\mathrm{code}}}
\rho^\mfa_{X}$
is the maximally mixed code state supported on a subsystem $X$. We obtain
\begin{equation} 
\begin{aligned}
&  S(\rho_{ABC}||\sigma_{ABC}) - S(\mathcal \Tr_A(\rho_{ABC})||\mathcal \Tr_A(\sigma_{ABC})) \\ 
&\ge-\log F\left(\rho_{ABC},\mathcal \calE_{B\to AB}\circ \mathcal \Tr_A(\rho_{ABC})
\right), 
\end{aligned}
\label{eq:recovery_theorem}
\end{equation}
where $\calE_{AB}$ is a universal twirled Petz map [Eq.(21) of \cite{JRSWW-universal-recovery}], which depends only on $\pi_{AB}$.

Therefore, it suffices to upper bound the relative-entropy difference 
$S(\rho_{ABC}||\sigma_{ABC}) - S(\mathcal \rho_{BC}||\mathcal \sigma_{BC})$.

Recall that we consider an arbitrary density matrix in the code subspace, which is a convex combination of pure states in the code subspace of the form $ |\Psi\rangle=\sum_\mfa c_\mfa |\psi^\mfa\rangle$. Since $C$ contains a noncontractible annulus, the anyon charge can be measured from $C$. Consequently, local reduced density matrices on $AB$ are classical mixtures of states from different anyon sectors because of the orthogonality:
$ \rho_{AB}
=\sum_\mfa p_\mfa\, \rho^\mfa_{AB}$, in which $p_\mfa :=|c_\mfa|^2$ for $|\Psi\rangle$. For a mixed state in the code subspace formed by a convex combination, the probabilities $p_\mfa$ are determined similarly by an average of probabilities from each pure state.  

Now, we calculate the relative-entropy difference  
\begin{equation}
\begin{aligned}
&
S(\rho_{ABC}||\pi_{AB}\otimes\pi_C)
-
S(\rho_{BC}||\pi_B\otimes\pi_C)
\\
&=
I(A:C|B)_\rho
+
S(\rho_{AB}||\pi_{AB})
-S(\rho_B||\pi_B)\\
& \le 
I(A:C|B)_\rho
+
S(\rho_{AB}||\pi_{AB}).
\label{eq:DeltaDecomp}
\end{aligned}
\end{equation}
The second line follows from simple algebra, and the third line follows from the non-negativity of relative entropy.

We now bound the two terms separately.
First, by joint convexity of relative entropy, we have
\begin{equation} 
\begin{aligned}
S(\rho_{AB}\|\pi_{AB})
&=
S\left(
\left.\sum_\mfa p_\mfa \rho^\mfa_{AB}
\right\|
\frac1D\sum_\mfb \rho^\mfb_{AB}
\right)
\\
&\le
\sum_\mfa p_\mfa
S\left(
\left.\rho^\mfa_{AB} \right\| \frac1D\sum_\mfb \rho^\mfb_{AB}
\right)
\\
&\le
\frac1D
\sum_{\mfa,\mfb}
p_\mfa
S(\rho^\mfa_{AB}||\rho^\mfb_{AB}).
\label{eq:jointconv}
\end{aligned}
\end{equation}
The relative entropies $S(\rho^\mfa_{AB}||\rho^\mfb_{AB})\sim x^{\gamma_{\mfa \mfb}}$, where $x$ is the angular size of the arc associated with $AB$. 
Plugging this in Eq.~\eqref{eq:jointconv}, we obtain
\begin{equation}
S(\rho_{AB}\|\pi_{AB})\leq c_5 x^{\gamma^*},
\end{equation}
at small $x$, where $c_5$ is some constant and
\begin{equation}
    \gamma^* := 
    \min_{\substack{\mfa,\mfb \in \calA_{\rm code}, \\ \mfa\ne \mfb}}
    \gamma_{\mfa \mfb}.
\end{equation}
Taking $l_A=O(1)$ and $l_B\sim L_x^\lambda$ with
$0<\lambda<1$, we have
$l_{AB}\sim L_x^\lambda$,
and therefore
\begin{equation}\label{eq:power-A}
    S(\rho_{AB}||\pi_{AB})
 \le O(1)\
L_x^{-(1-\lambda)\gamma^*}.
\end{equation}

Next we estimate the conditional mutual information $I(A:C|B)_\rho$.
It is convenient to use the Holevo leakage
\begin{equation}
\begin{aligned}
\chi_Y(p)
&:=
S\left(
\sum_\mfa p_\mfa \rho^\mfa_Y
\right)
-
 \sum_\mfa p_\mfa S(\rho^\mfa_Y)
\\&=\sum_\mfa
p_\mfa
S(\rho^\mfa_Y||\bar\rho^p_Y),
\end{aligned}
\end{equation}
where $\bar\rho^p_Y =
\sum_\mfa p_\mfa \rho^\mfa_Y. $
After direct algebra, we have
\begin{equation}
    \begin{aligned}
I(A:C|B)_\rho \le & (S_{A} + S_{AB} - S_B)_\rho \\
= & (S_{A} + S_{AB} - S_B)_{\bar{\rho}^p} \\
= &\sum_\mfa p_\mfa
(S_{A} + S_{AB} - S_B)_{\rho^\mfa} \\
&+
\chi_A(p)
+
\chi_{AB}(p)
-\chi_B(p). \\
= &\sum_\mfa p_\mfa
I(A:C|B)_{\rho^\mfa} \\
&+
\chi_A(p)
+
\chi_{AB}(p)
-\chi_B(p) \\
\le
&\sum_\mfa p_\mfa
I(A:C|B)_{\rho^\mfa}\\
&+
\chi_A(p)
+
\chi_{AB}(p).
\end{aligned}
\end{equation}
The first line follows from the strong subadditivity. The second line follows from the orthogonality between different anyon sectors on $C$. The third line follows from the definition of Holevo leakage applied to regions $A, B$ and $AB$ respectively. The third equality follows from the purity of code state $\rho^\mfa$. The last inequality follows by 
dropping the nonnegative term $\chi_B(p)$. 

Next we bound each term on the right-hand side. The computation of $I(A:C|B)_{\rho^\mfa}$ is boiled down to a disjoint interval mutual information on 1D CFT, as implied by
Prop.~\ref{prop:mutual_on_chiral}. In detail,
\begin{equation}
\begin{aligned}
    I(A:C|B)_{\rho^\mfa}
\le & I(A'':C'')_{\rho^\mfa}  \\
=& I(a:c)_{|\varphi^\mfa\rangle} \\
\sim & \, \eta_{a,b_L, c}^{\mu_\mfa},
\end{aligned}
\end{equation}
with regions shown in Fig.~\ref{fig:proof}. When $l_a=O(1)$, $l_{ab_L},l_{ab_R}\sim L_x^\lambda$, we obtain  $\eta_{a,b_L, c} = O(1) L_x^{-\lambda}$.
Thus,
\begin{equation}
I(A:C|B)_{\rho^\mfa}
= O(1)
L_x^{-\lambda \mu_\mfa}
.
\end{equation}
Hence
\begin{equation}
\sum_\mfa p_\mfa I(A:C|B)_{\rho^\mfa}
=O(1) 
L_x^{-\lambda \mu^*},
\end{equation}
where
\begin{equation}
\mu^{*}
:=
\min_\mfa \mu_\mfa.
\end{equation}

The Holevo leakage terms $\chi_A(p)$ and $
\chi_{AB}(p)$ are sums of relative entropies between anyon sectors and the mixture $\bar{\rho}^p$. Repeating the joint convexity argument (similar to the steps from Eq.~\eqref{eq:jointconv} to Eq.~\eqref{eq:power-A}), we derive 
\begin{equation}
\chi_A(p)+\chi_{AB}(p)
=O(1)
L_x^{-(1-\lambda)\gamma^*}.
\end{equation}
This follows from the definition of $\gamma^*$.

Combining all estimates for the deficit of relative entropy, we get
\begin{equation}
\begin{aligned}
&S(\rho_{ABC}\|\sigma_{ABC})
-
S(\rho_{BC}\|\sigma_{BC})
\\&\le
m_1
L_x^{-(1-\lambda)\gamma^*}
+
m_2 L_x^{-\lambda\mu^*},
\end{aligned}
\end{equation}
where $m_1$ and $m_2$ are two positive $O(1)$ constants.

Now consider an arbitrary channel $\calN_A$,
Applying 
\begin{equation}
\begin{aligned}
&-\log
F\left(
\rho_{ABC},
\calR_{AB}\circ \calN_A(\rho_{ABC})
\right)\\
=&-\log
F\left(
\rho_{ABC},
\calE_{B\to AB}(\rho_{BC})
\right)
\\&\le
m_1
L_x^{-(1-\lambda)\gamma^*}
+
m_2 L_x^{-\lambda\mu^*}
\end{aligned}
\end{equation} 
Here $\calR_{AB}:= \calE_{B\to BA}\circ \Tr_A$ as defined in Thm.~\ref{thm:quasi-local-recovery}. The first line follows from the property of the channel on $A$. The second line uses the inequality in Eq.~\eqref{eq:recovery_theorem}. 

Since the RHS is small when $L_x\gg 1$, we have the upper bound of the infidelity
\begin{equation}
\begin{aligned}
  &1-
F\left(
\rho_{ABC},
\calR_{B\to AB}(\rho_{BC})
\right)
\\&\le
m_1
L_x^{-(1-\lambda)\gamma^*}
+
m_2 L_x^{-\lambda\mu^*}.
\end{aligned}
\end{equation}
Further, we can optimize the scaling law over $\lambda$ (to make the infidelity vanish most rapidly with $L_x$) by equating the exponents,
\begin{equation}
(1-\lambda)\gamma^*
=\lambda\mu^*,
\end{equation}
this gives the optimal $\lambda$:
\begin{equation}
\lambda^*
=\frac{\gamma^*}
{\gamma^*+\mu^*}.
\end{equation}
In conclusion, we have
\begin{equation}
1-F\left(
\rho_{ABC},
\calR_{B\to AB}(\rho_{BC})
\right)\le O(1)L_x^{
-\frac{\gamma^*\mu^*}
{\gamma^*+\mu^*}
},
\end{equation}
for all kinds of quantum channels applied on the local disk $A$ near one of the physical edges, with an explicit choice of $\calR_{AB}$ that only depends on the code subspace but not the noise channel.

\subsection{A trick for computing $\gamma_{\mfa \mfb}$}\label{app:gamma-computation}

We present a useful proposition that allows us to compute $\gamma_{\mfa\mfb}$ exactly using the dimensional reduction picture.
Recall that, $\gamma_{\mfa\mfb}$ is defined according to
\begin{equation}
    S(\rho^\mfa_A || \rho^\mfb_{A}) \sim x^{\gamma_{\mfa \mfb}}
\end{equation}
for region $A$ in Fig.~\ref{fig:Markov-Abelian}. We now define $\alpha_{\mfa\mfb}$ according to
\begin{equation}
    S(\rho^\mfa_{ABC} || \rho^\mfb_{ABC}) \sim x^{\alpha_{\mfa \mfb}}
\end{equation}
for regions in the same figure.
It is evident that $\alpha_{\mfa\mfb}$ can be computed by the dimensional reduction picture, as the single-interval relative entropy between primary states of a 1D CFT.

\begin{figure}[h]
    \centering
    \includegraphics[width=0.34\linewidth]{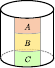}
    \caption{Partitions of a full column $X=ABC$ for the proof of Prop.~\ref{prop:alpha=gamma}. Markov condition $I(A:C|B)\approx 0$ holds for the Abelian code state $|\psi^\mfa\rangle$ of the chiral edge code as a consequence of the full boundary {\bf A1}.}
    \label{fig:Markov-Abelian}
\end{figure}

\begin{Proposition}\label{prop:alpha=gamma}
   Consider a pair of code words in the chiral edge code, $|\psi^\mfa\rangle$ and $|\psi^\mfb\rangle$. Suppose that the values of $\gamma_{\mfa \mfb}$ computed from a local region touching the upper and the lower edges are identical and both $\mfa$ and $\mfb$ are Abelian. Then, 
   \begin{equation}
       \gamma_{\mfa \mfb} = \alpha_{\mfa \mfb}. 
   \end{equation}
   \end{Proposition}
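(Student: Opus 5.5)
We will establish the two inequalities $\gamma_{\mfa\mfb}\ge\alpha_{\mfa\mfb}$ and $\gamma_{\mfa\mfb}\le\alpha_{\mfa\mfb}$ separately, using the partition $X=ABC$ of a full column in Fig.~\ref{fig:Markov-Abelian}. Here $A$ touches the upper edge, $C$ touches the lower edge, and $B$ is a bulk strip separating them.

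\textbf{Lower bound.} Since $A\subset ABC$, monotonicity of relative entropy gives $S(\rho^\mfa_A\|\rho^\mfb_A)\le S(\rho^\mfa_{ABC}\|\rho^\mfb_{ABC})$ for every $x$. At small $x$ this reads $c\,x^{\gamma_{\mfa\mfb}}\le c'\,x^{\alpha_{\mfa\mfb}}$, so $\gamma_{\mfa\mfb}\ge\alpha_{\mfa\mfb}$. This is the same argument used in the proof of Prop.~\ref{prop:powers}.

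\textbf{Upper bound.} Here we use the Markov structure. For an Abelian code word, the full boundary \textbf{A1}, applied as in the proof of Prop.~\ref{prop:mutual_on_chiral} after shrinking regions, gives $I(A:C|B)_{\rho^\mfa}=0$, and likewise $I(A:C|B)_{\rho^\mfb}=0$. In addition, $B$ is a bulk disk-like strip, so $\rho^\mfa_B=\rho^\mfb_B$ by local indistinguishability. The states are then quantum Markov chains, which lets us write them as Petz-recovered states. For example,
\[
\rho^\mfa_{ABC}=\mathcal{P}_{B\to BC}(\rho^\mfa_{AB}),
\]
where the Petz map is built from $\rho^\mfa_{BC}$.

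The direct chain-rule identity would be
\[
S(\rho^\mfa_{ABC}\|\rho^\mfb_{ABC})=S(\rho^\mfa_{AB}\|\rho^\mfb_{AB})+S(\rho^\mfa_{BC}\|\rho^\mfb_{BC})-S(\rho^\mfa_B\|\rho^\mfb_B),
\]
valid when both states are Markov with respect to the same structure. To justify this, I would use the Hayden--Jozsa--Petz--Winter decomposition of $\mathcal{H}_B$ into $\bigoplus_j \mathcal{H}_{b_j^L}\otimes\mathcal{H}_{b_j^R}$. Because $\rho^\mfa_B=\rho^\mfb_B$ and the two states share the middle marginal, I would argue that the decomposition can be chosen common to both. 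Then each state is $\bigoplus_j q_j\,\omega_{Ab_j^L}\otimes\omega_{b_j^R C}$, with sector-dependent factors, and the relative entropy becomes additive. The last term vanishes, so
\[
S(\rho^\mfa_{ABC}\|\rho^\mfb_{ABC})=S(\rho^\mfa_{AB}\|\rho^\mfb_{AB})+S(\rho^\mfa_{BC}\|\rho^\mfb_{BC}).
\]
Since $B$ is an $O(1)$-thickness bulk layer, adding it to a region touching one edge should not change the leading power. Concretely, I would bound $S(\rho^\mfa_{AB}\|\rho^\mfb_{AB})$ by the one-edge quantity $\sim x^{\gamma_{\mfa\mfb}}$, using that $AB$ is still a region touching only the upper edge. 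By hypothesis, the lower-edge region $BC$ has the same exponent $\gamma_{\mfa\mfb}$. Hence $x^{\alpha_{\mfa\mfb}}\lesssim 2x^{\gamma_{\mfa\mfb}}$, which gives $\alpha_{\mfa\mfb}\ge\gamma_{\mfa\mfb}$.

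\textbf{Main obstacle.} The expected hardest step is the additivity of relative entropy for two different Markov states. The Markov property alone gives this only when the Markov decompositions are compatible. I would first try to derive compatibility from the full boundary \textbf{A0}/\textbf{A1} together with $\rho^\mfa_B=\rho^\mfb_B$, for instance by showing that $\tfrac12(\rho^\mfa+\rho^\mfb)$ is itself Markov on $ABC$. That holds because the anyon label is invisible on $B$ and the conditional mutual information of the mixture differs from the average only by Holevo terms that cancel when $\chi_B=0$.

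If exact equality fails, an inequality suffices: $S_{ABC}\le S_{AB}+S_{BC}$ via a Petz-map monotonicity argument. That inequality is all the upper bound requires.
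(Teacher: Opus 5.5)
Your strategy is the paper's. Both Abelian code words satisfy the Markov condition on $X=ABC$, relative entropy obeys a chain rule on $X$, $\rho^\mfa_B=\rho^\mfb_B$ on the bulk disk, and by hypothesis the upper-edge region $AB$ and the lower-edge region $BC$ both scale as $x^{\gamma_{\mfa\mfb}}$. The chain rule is an equality, so it gives $S(\rho^\mfa_X\|\rho^\mfb_X)\sim x^{\gamma_{\mfa\mfb}}$ directly. Your monotonicity lower bound is therefore correct but redundant. Like the paper's last step, it also tacitly assumes that every one-edge region of angular size $x$ carries the same exponent, since you use $A$ there and $AB$ in the upper bound.

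The problem is in how you justify the step you flagged as the main obstacle. The step itself is not hard, but both of your proposed justifications fail.

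\emph{No common Hayden--Jozsa--Petz--Winter structure is needed.} If $I(A:C|B)_\sigma=0$, the equality case of strong subadditivity gives the operator identity $\log\sigma_{ABC}=\log\sigma_{AB}+\log\sigma_{BC}-\log\sigma_B$ (on supports). Then $\Tr\rho_{ABC}\log\sigma_{ABC}$ involves only marginals of $\rho$. So for any $\rho$ with $\operatorname{supp}\rho\subseteq\operatorname{supp}\sigma$,
\[
S(\rho_{ABC}\|\sigma_{ABC})=S(\rho_{AB}\|\sigma_{AB})+S(\rho_{BC}\|\sigma_{BC})-S(\rho_B\|\sigma_B)+I(A:C|B)_\rho .
\]
Take $\sigma=\rho^\mfb$ and $\rho=\rho^\mfa$. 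The last term vanishes because $\rho^\mfa$ is Markov as well. This is exactly the paper's step of replacing $-\log\rho_{ABC}$ by subsystem entanglement Hamiltonians in both states. Each state only needs to be Markov on its own.

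\emph{Equal $B$-marginals do not force a common decomposition of $\mathcal{H}_B$.} Take Bell pairs on $Ab_1$ and $b_2C$, versus Bell pairs on $Ab_2$ and $b_1C$. Both states are Markov and have the same maximally mixed $B$-marginal. However, their factorizations of $\mathcal{H}_B$ are forced to be $b_1\otimes b_2$ and $b_2\otimes b_1$ respectively, which are incompatible.

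\emph{The mixture is not Markov in general.} With $\chi_B=0$ one has
\[
I(A:C|B)_{\bar\rho}-\tfrac12\bigl(I(A:C|B)_{\rho^\mfa}+I(A:C|B)_{\rho^\mfb}\bigr)=\chi_{AB}+\chi_{BC}-\chi_{ABC}.
\]
This Holevo remainder is strictly positive (about $0.81$ bits) in the same Bell-pair example.

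Drop these detours and the vague fallback inequality, and use the identity above instead.
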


Therefore, the exponents $\gamma_{\mfa \mfb}$, in a large class of codewords, can be analytically computed in the 1D CFT obtained from the dimensional reduction (Sec.~\ref{subsec:dimension-reduction}). Details of such computation for small intervals are presented in Appendix~\ref{app:CFT_relative_entropy}. 

\begin{proof} 
    Suppose both $\mfa$ and $\mfb$ are Abelian, and let the full column $X = ABC$ as in Fig.~\ref{fig:Markov-Abelian}.
    The Markov condition that follows from the full boundary {\bf A1} implies that
    \begin{equation}
        \log{\rho_{ABC}} \approx \log{\rho_{AB}} +\log{\rho_{BC}} - \log{\rho_{B}}.
    \end{equation} 
    Plugging this in for both states, we have
    \begin{equation}
        \begin{aligned}
            & \,\,\,\, S(\rho^\mfa_X|| \rho^\mfb_X) \\
            &= S(\rho^\mfa_{ABC}|| \rho^\mfb_{ABC}) \\
            &= S(\rho^\mfa_{AB}|| \rho^\mfb_{AB}) + S(\rho^\mfa_{BC}|| \rho^\mfb_{BC}) - S(\rho^\mfa_{B}|| \rho^\mfb_{B}) \\
            &= S(\rho^\mfa_{AB}|| \rho^\mfb_{AB}) + S(\rho^\mfa_{BC}|| \rho^\mfb_{BC}) \\
            & \sim x^{\gamma_{\mfa \mfb}}.
        \end{aligned}
    \end{equation}
    The 2nd line follows from $X=ABC$. The 3rd line follows from replacing the entanglement Hamiltonian ($-\log \rho_{ABC}$) with that of the subsystems. The 4th line follows from $\rho^\mfa_B = \rho^\mfb_B$ for bulk disk $B$. The last line follows from the assumption on the identification of $\gamma_{\mfa \mfb}$ computed on the local region on the upper edge ($AB$) and the local region on the lower edge ($BC$). 

  Finally, by $S(\rho^\mfa_X|| \rho^\mfb_X)\sim x^{\alpha_{\mfa \mfb}}$ we derive $\gamma_{\mfa \mfb}=\alpha_{\mfa \mfb}$.
  \end{proof}

\begin{figure*}[t]
    \centering
\begin{minipage}[c]{0.23\textwidth}
        \centering
        \includegraphics[width=\linewidth]
        {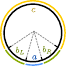}
    \end{minipage}
    \hfill
    \begin{minipage}[c]{0.32\textwidth}
        \centering
        \includegraphics[width=\linewidth]
        {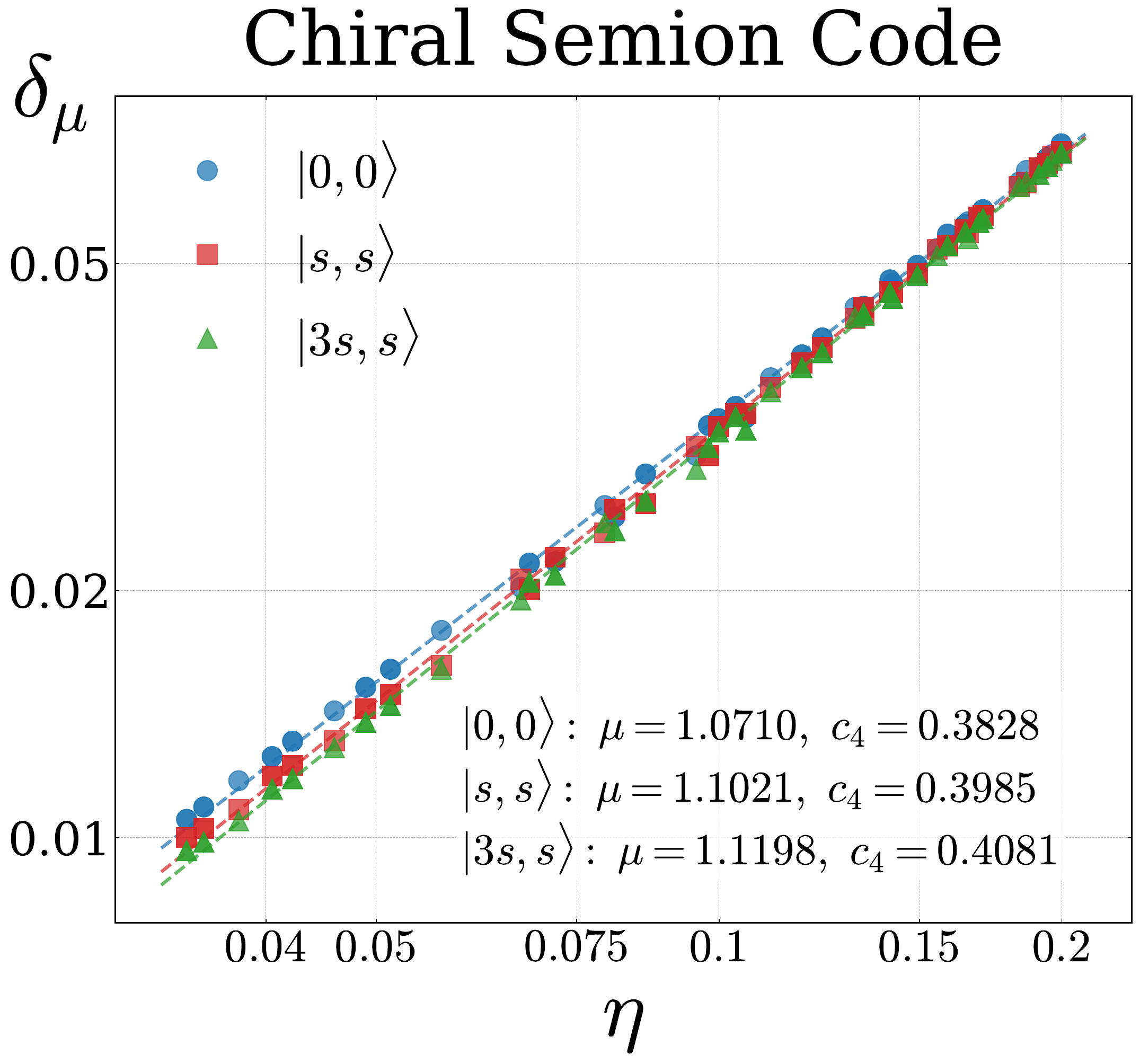}
    \end{minipage}
    \hfill
    \begin{minipage}[c]{0.32\textwidth}
        \centering
        \includegraphics[width=\linewidth]
        {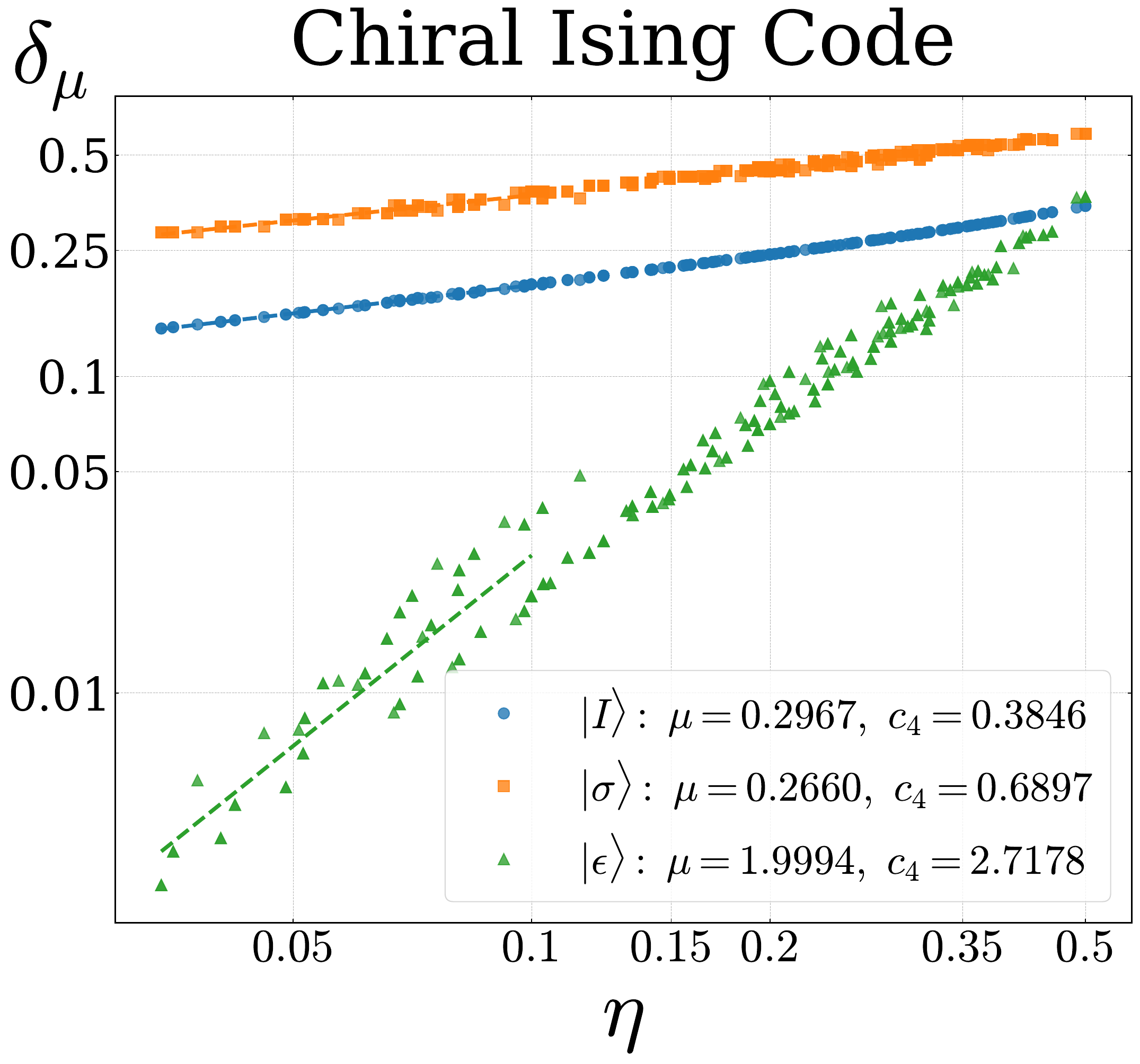}
    \end{minipage}
\caption{{\bf Fitting $\mu$ on a finite circle.} Left: Partitions used in the calculation of $\mu_\mfa$. $\delta_{\mu_\mfa}:=I(a:c)_{\ket{\varphi^\mfa}}\approx c_4\eta^{\mu_\mfa}$ through correspondence in 1D. Middle: For the Chiral semion code, we test the code states $\ket{\varphi_1^{0,0}},\ket{\varphi_{s}^{\frac{1}{2},\frac{1}{2}}},\ket{\varphi_{s}^{\frac{3}{2},\frac{1}{2}}}$, and find $\mu\approx 1.0710, 1.1021, 1.1198$ with $\eta\leq 0.2$. Right: For the chiral Ising code, we test the code states $|\varphi^1\rangle$, $|\varphi^\sigma\rangle$ and $|\varphi^\epsilon\rangle$. We found that the corresponding exponents are $\mu_I\approx 0.2967,\mu_{\sigma}\approx 0.2660,\mu_{\epsilon}\approx 1.9994$ fitting the data with $\eta\leq 0.1$. } 
    \label{fig:edge_mu}
\end{figure*}

\subsection{Computation of relevant exponent $\mu_a$}\label{app:mu-computation}

In this appendix, we compute the exponent $\mu_a$ that is relevant to the power-law-range recovery channel in Sec.~\ref{sec:quasi-local-R}. Recall that $\mu_\mfa$ is defined according to an empirical formula of mutual information of a certain 1D CFT state $|\varphi^\mfa\rangle$ obtained from the dimensional reduction of chiral states on a cylinder. 
\begin{equation}
    I(a:c)_{|\varphi^\mfa\rangle} \sim \eta_{a,b_L,c}^{\mu_\mfa}.
\end{equation}
for low-lying primary states. Note that, the cross-ratio dependence is best justified for the ground state, which has the global conformal symmetry. But we observe numerically that a similar condition holds for low-lying primary states, at least for small $\eta$. Here the cross-ratio for three adjacent intervals $a,b_L,c$ is defined in terms of chord distances as
\begin{equation}
    \eta_{a,b_L,c} = \frac{\sin(\theta_{a}/2)\sin(\theta_{c}/2)} {\sin(\theta_{ab_L}/2)\sin(\theta_{b_L c}/2)}.
\end{equation}
The partitions are shown in Fig.~\ref{fig:edge_mu}.

\subsubsection{Chiral Ising code}

{\bf Computation of $\mu_a$:} For the ground state value $\mu_{1}$, a finite size computation on Ising spin chain gives  
\begin{equation}
    \mu_1 \approx 0.30.
\end{equation}
This is reasonably consistent with the theoretical prediction of Cardy that for small $\eta$, the mutual information $I(a,c)_{\ket{\varphi_1}} \sim \eta^{1/4}$~\cite{2009JSMTE..11..001C,2011JSMTE..01..021C}.

For the result shown in Fig.~\ref{fig:edge_mu}, we use system sizes $N=20,22,24$, with interval sizes $l_a\in\{2,3,4\}$ and $l_c\in\{1,\ldots,6\}$. For each choice of ($N,l_a,l_c$), the two complementary gaps $l_{b_L},l_{b_R}$ are chosen as near-balanced partitions of the remaining length $N-l_a-l_c$ : explicitly, $l_{b_R}=\left\lfloor \frac{N-l_a-l_c}{2}\right\rfloor+s,\quad l_{b_L}=N-l_a-l_c-l_{b_R},$ with $s\in\{-2,-1,0,1,2\}$, retaining only geometries with $l_{b_L},l_{b_R}\ge 4$. For the power-law fitting, we restricted $\eta\leq0.1$.

For the Abelian anyon $\epsilon$ in the Ising anyon theory, we compute $\mu_\epsilon$ as approximately $2.00$, with data shown in Fig.~\ref{fig:edge_mu}.

\subsubsection{Chiral Semion code}
We numerically compute both $\mu_1$ and $\mu_s$ of the chiral semion code, making use of the semion chain of Ref.~\cite{Nielsen2012,Nielson2014} (see also Appendix~\ref{app:semion_chain}). By a finite-size simulation, we fit the power-law exponents as 
\begin{equation}
    \mu \approx 1.07,  1.10, 1.12,
\end{equation}
for primary states $\ket{\varphi_1^{0,0}},\ket{\varphi_s^{\frac{1}{2},\frac{1}{2}}},\ket{\varphi_s^{\frac{3}{2},\frac{1}{2}}}$, respectively. The result of ground state is reasonably consistent with the theoretical prediction $\mu_1=1$~\cite{2009JSMTE..11..001C,2011JSMTE..01..021C}. For the result shown in Fig.~\ref{fig:edge_mu}, we use system size $N=20,22,24$, with interval sizes $l_a=2$ and
$l_c\in\{1,\ldots,6\}$. For each choice of $(N,l_a,l_c)$, the two complementary gaps $l_{b_L},l_{b_R}$ are chosen as near-balanced partitions of the remaining length $N-l_a-l_c$: explicitly,
  \(
  l_{b_R}=\left\lfloor \frac{N-l_a-l_c}{2}\right\rfloor+s,
  l_{b_L}=N-l_a-l_c-l_{b_R},
  \)
  with $s\in\{-2,-1,0,1,2\}$, retaining only geometries with
  $l_{b_L},l_{b_R}\ge 4$. The same set of geometries is used for the
  $|\varphi_1^{0,0}\rangle$, $|\varphi_s^{\frac{1}{2},\frac{1}{2}}\rangle$ and $|\varphi_s^{\frac{3}{2},\frac{1}{2}}\rangle$ sectors. For the power-law fitting and the plotted data, we further restrict to $\eta\le 0.2$.

\section{Effect of IID noise}
\label{app:IID_noise}


The main text studies robustness through coherent-information loss under
geometrically local erasure. In this appendix, we give a complementary
finite-size diagnostic for a different, more extensive, noise model: IID
Pauli dephasing. Here IID means that the same single-site channel is applied
independently to every physical site, and the error locations are not supplied
to the decoder. We only study the weak-noise regime, and we do not attempt to locate a possible mixed-state decoding transition at
larger physical error rates. Instead, we ask whether small IID noise flows
toward or away from the no-noise fixed point as the system size is increased.

For the numerical examples below, we use the bosonic Laughlin state at filling
\(\nu=1/2\) as in Appendix~\ref{app:FQH_wavefunction}. (We note that the decoherence transition of Laughlin states has been studied recently~\cite{2025arXiv251008490W}. Our computation is different in that we consider topological order with edges, and we are mainly interested in weak noise.)
We choose the following code subspace of the chiral semion code  
\begin{equation}
\mathbb{V}^{\chi} \bigl(\cylindericon[1.2]\bigr)
=
\mathrm{span}\{\ket{\psi_1^{0,0}},\ket{\psi_s^{\frac{1}{2},-\frac{1}{2}}}\},
\label{eq:IID_codesubsapce}
\end{equation}
with the corresponding CFT primary states of its dimensionally reduced one-dimensional compact-boson CFT code. The notation \(\ket{\psi_1^{0,0}}\) and \(\ket{\psi_s^{\frac{1}{2},-\frac{1}{2}}}\) is the same as in Example~\ref{exmp:chiral_Semion}. 

We apply the IID Pauli dephasing channel 
\begin{equation}
\mathcal{N}_{p,P}
=
\bigotimes_{j=1}^{N}\mathcal{N}_{p,P}^{[j]},
\quad
\mathcal{N}_{p,P}^{[j]}(\rho)
=
\left(1-\frac{p}{2}\right)\rho
+
\frac{p}{2}\,
P_j\rho P_j ,
\label{eq:IID_Pauli_channel}
\end{equation}
where \(P\in\{X,Z\}\), and \(N\) is the number of qubits of the
corresponding lattice realization. The convention in
Eq.~\eqref{eq:IID_Pauli_channel} is the same as in
Ref.~\cite{Sang2024AQECC}: \(p=1\) gives complete
single-site dephasing in the \(P\) basis.

Let \(\ket{\psi_{RQ}}\) be the maximally entangled state between the reference
system \(R\) and the code subspace, as in Eq.~\eqref{eq:maximal_entangled}.
After the noise channel, we write
\begin{equation}
\rho_{RQ}(p)
=
\mathcal{N}_{p,P}
 \left(
\ket{\psi_{RQ}}\bra{\psi_{RQ}}
\right),
\end{equation}
and compute
\begin{equation}
I_c(p)
=
S\left(\rho_{Q}(p)\right)
-
S\left(\rho_{RQ}(p)\right).
\label{eq:Ic_IID_def}
\end{equation}  

\subsection{Weak-noise scaling ansatz}

Following the coherent-information scaling hypothesis of
Ref.~\cite{Sang2024AQECC}, we assume that in the
weak-noise regime
\begin{equation}
I_c(p)
=
f\left(pN^{\zeta_P}\right),
\quad
p\rightarrow 0,\quad N\rightarrow\infty .
\label{eq:IID_scaling_collapse}
\end{equation}
Here \(\zeta_P\) is the IID scaling-collapse exponent for Pauli type \(P\). (The exponent \(\zeta_P\)
corresponds to the exponent denoted by \(\nu\) in
Ref.~\cite{Sang2024AQECC}.)

The scaling variable in Eq.~\eqref{eq:IID_scaling_collapse} is $x=pN^{\zeta_P}$. At fixed weak physical error rate \(p\), a negative \(\zeta_P\) gives \(x\rightarrow0\) as \(N\rightarrow\infty\). Since
\(f(0)=\log D\), this is consistent with $I_c(p)\longrightarrow\log D$, and hence with weak-noise recoverability in the thermodynamic limit. A more
negative value of \(\zeta_P\) gives a faster flow of \(x\) back to
zero and therefore a stronger finite-size restoration of coherent information.
By contrast, \(\zeta_P=0\) gives no size-improvement at fixed \(p\), and
\(\zeta_P>0\) means that the weak-noise perturbation grows with system size.

\begin{figure}[H]
    \centering
    \includegraphics[width=0.57\linewidth]{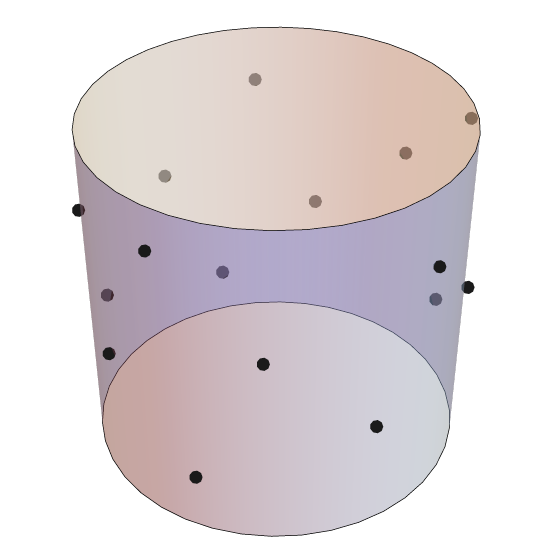}
    \caption{An illustration of the golden-cylinder lattice
    \(\Lambda_b^{[y]}(N)\) with \(b=0.1\). The number of sites is $N=16$ for this figure.}
    \label{fig:golden-cylinder-y}
\end{figure}

\subsection{Finite size scaling with Golden-cylinder coordinate}

We now describe the finite-size scaling method for $\nu=1/2$ fractional quantum Hall states. 
We use the analytical wave function described in Appendix~\ref{app:FQH_analytical_wavefunction}, and we scale up the size of the cylinder $N$.
We map the cylinder coordinate $w=u_1+iu_2$, ($ u_2\sim u_2+2\pi$), to the complex plane by \(z=e^w\). Because the Laughlin-type lattice wave
functions \cite{Nielsen2012,Nielson2014} are invariant under a global dilation of all \(z_j\), a global
translation along \(u_1\) or a global rotation along \(u_2\) does not affect
the universal quantities computed below.

Let \(\theta_{\rm g}\) be the \emph{golden angle}, $\theta_{\rm g}
=\pi(3-\sqrt{5})\approx 137.5^\circ$. For a fixed positive parameter \(b\), we define the golden-cylinder lattice
\(\Lambda_b^{[y]}(N)\) by placing \(N\) physical sites at
\begin{equation}
\begin{aligned}
\bigl(u_1(j),u_2(j)\bigr)
&=
\bigl(jb,(j-1)\theta_{\rm g}\bigr),
 \\
z_j
&=
e^{jb}e^{i(j-1)\theta_{\rm g}},
\label{eq:GammaYcoordinates}
\end{aligned}
\end{equation}
in which $j=1,\ldots,N$. The golden-angle sequence distributes the sites quasi-uniformly around the
periodic direction, while increasing \(N\) extends the lattice along the
cylinder. The usage of the golden angle to make a reasonably uniform lattice is a trick to improve finite-size scaling quality; see Ref.~\cite{modular-commutator,Sharma2026}.
In the data below, we use \(b=0.1\) for the golden cylinder.

\subsection{Numerical comparison: \(1\)D chain versus \(2\)D golden cylinder}
\label{subsec:numerics_IID_chain_vs_cyl}

We compare IID Pauli dephasing for the dimensionally reduced \(1\)D CFT code $\mathbb{V}^{\operatorname{CFT}}=\Gamma(\mathbb{V}^{\chi}\bigl(\cylindericon[1.2]\bigr))$
and the \(2\)D chiral edge code
\(\mathbb{V}^{\chi}\bigl(\cylindericon[1.2]\bigr)\). In both cases the
coherent information is computed using the purification algorithm described in
Appendix~\ref{app:alg_CI}.

\paragraph{\(1\)D CFT code.}
The collapse data for the dimensionally reduced \(1\)D code are shown in
Fig.~\ref{fig:CI_chain}. The best-fit exponents are
\begin{equation}
\zeta_Z^{\rm CFT}\approx -0.2000,
\qquad
\zeta_X^{\rm CFT}\approx 0.0000 .
\label{eq:CFT_IID_exponents}
\end{equation}
Thus \(Z\)-dephasing shows only a mild flow back toward the noise-free fixed
point, while \(X\)-dephasing is approximately size-independent over the
accessible sizes. In the weak-noise scaling interpretation, the \(1\)D code
therefore has weak or absent finite-size enhancement of \(I_c\) under these
IID channels.

\begin{figure}[h]
    \centering
    \includegraphics[width=1.05\linewidth]{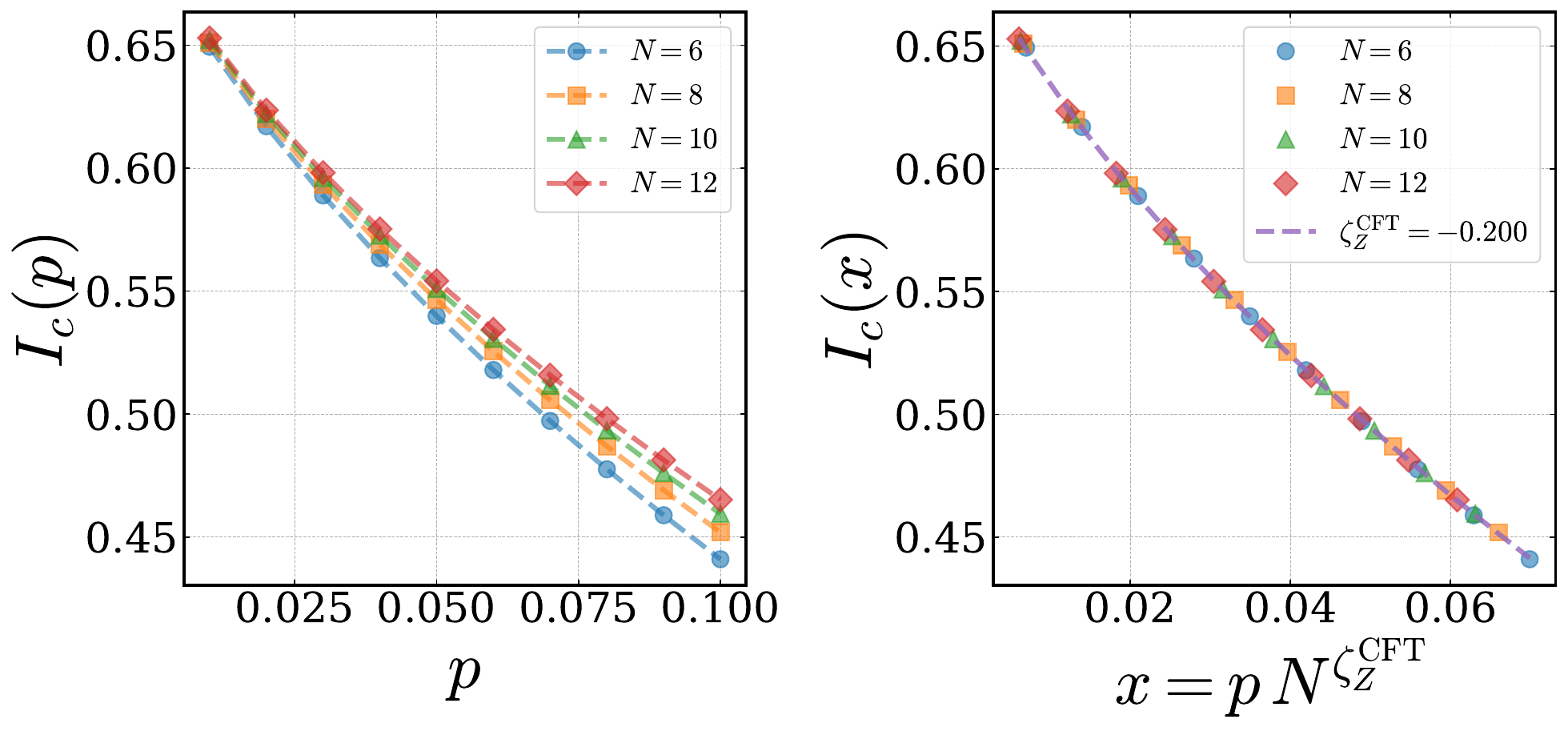}\\[-2pt]
    \includegraphics[width=1.05\linewidth]{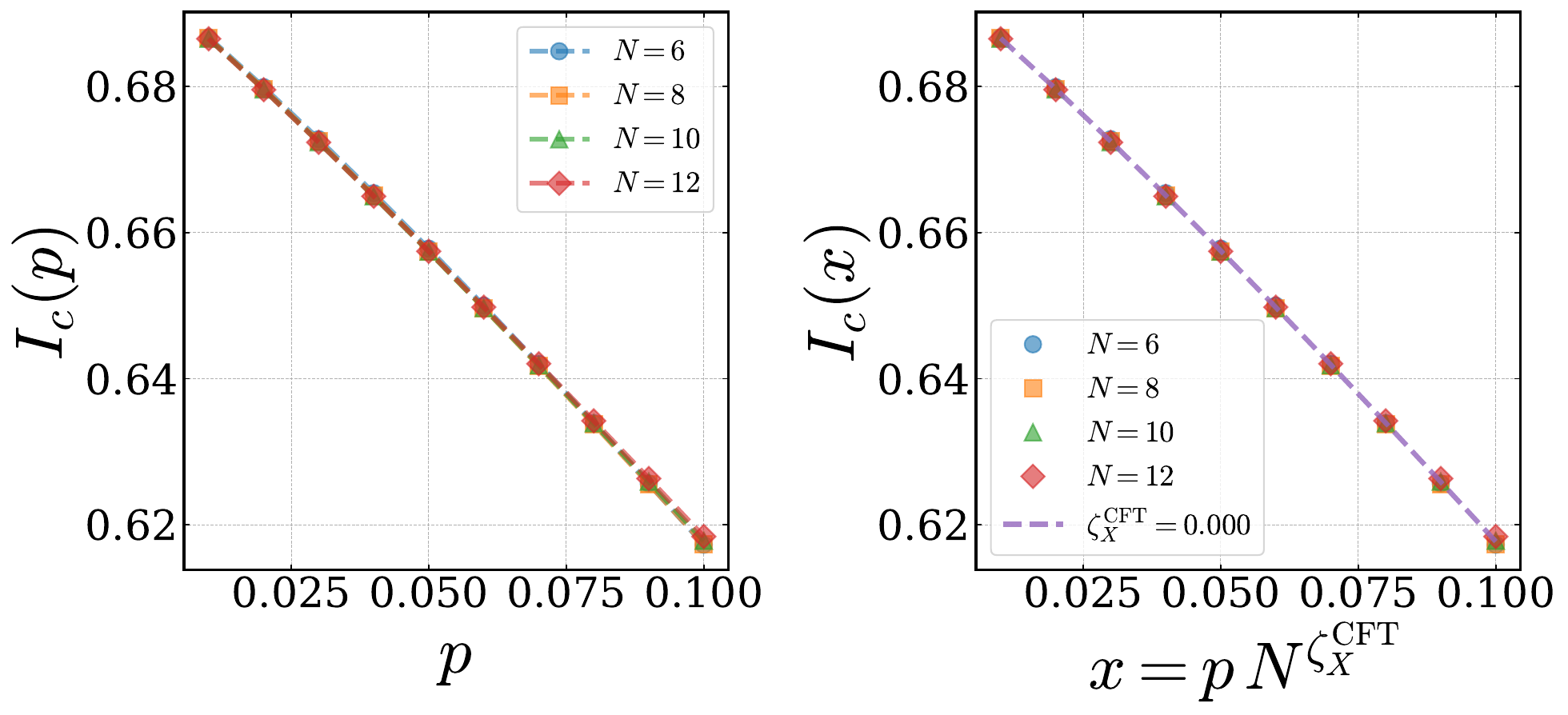}
    \caption{\textbf{Coherent information for the dimensionally reduced
    \(1\)D compact-boson CFT code}
    \(\mathbb{V}^{\operatorname{CFT}}=\Gamma(\mathbb{V}^{\chi}\bigl(\cylindericon[1.2]\bigr))=\mathrm{span}\{\ket{\varphi_1^{0,0}},\ket{\varphi_s^{\frac{1}{2},-\frac{1}{2}}}\}\)
    under IID Pauli dephasing. Each panel shows \(I_c(p)\) versus \(p\)
    for \(N=6,8,10,12\), together with a collapse using
    \(x=pN^{\zeta_P^{\text{CFT}}}\). The best-fit exponents are
    \(\zeta_Z^{\rm CFT}\approx -0.2000\) and
    \(\zeta_X^{\rm CFT}\approx 0.0000\).} 
    \label{fig:CI_chain}
\end{figure}

\paragraph{\(2\)D chiral edge code.}
The \(2\)D golden-cylinder realization gives a markedly different finite-size
flow. As shown in Fig.~\ref{fig:CI_cyl}, the best-fit exponents are
\begin{equation}
\zeta_Z^{\chi}\approx -0.6533,
\qquad
\zeta_X^{\chi}\approx -0.7600 .
\label{eq:chi_IID_exponents}
\end{equation}
Both exponents are negative and substantially smaller than the corresponding
\(1\)D values. Consequently, for fixed weak \(p\), the scaling variable
\(pN^{\zeta_P^\chi}\) decreases rapidly with \(N\), and the data are
consistent with
\begin{equation}
I_c(p)\rightarrow \log D
\end{equation}
in the $p\rightarrow0,\,\, N\rightarrow\infty$ limit.

\begin{figure}[h]
    \centering
    \includegraphics[width=1.00\linewidth]{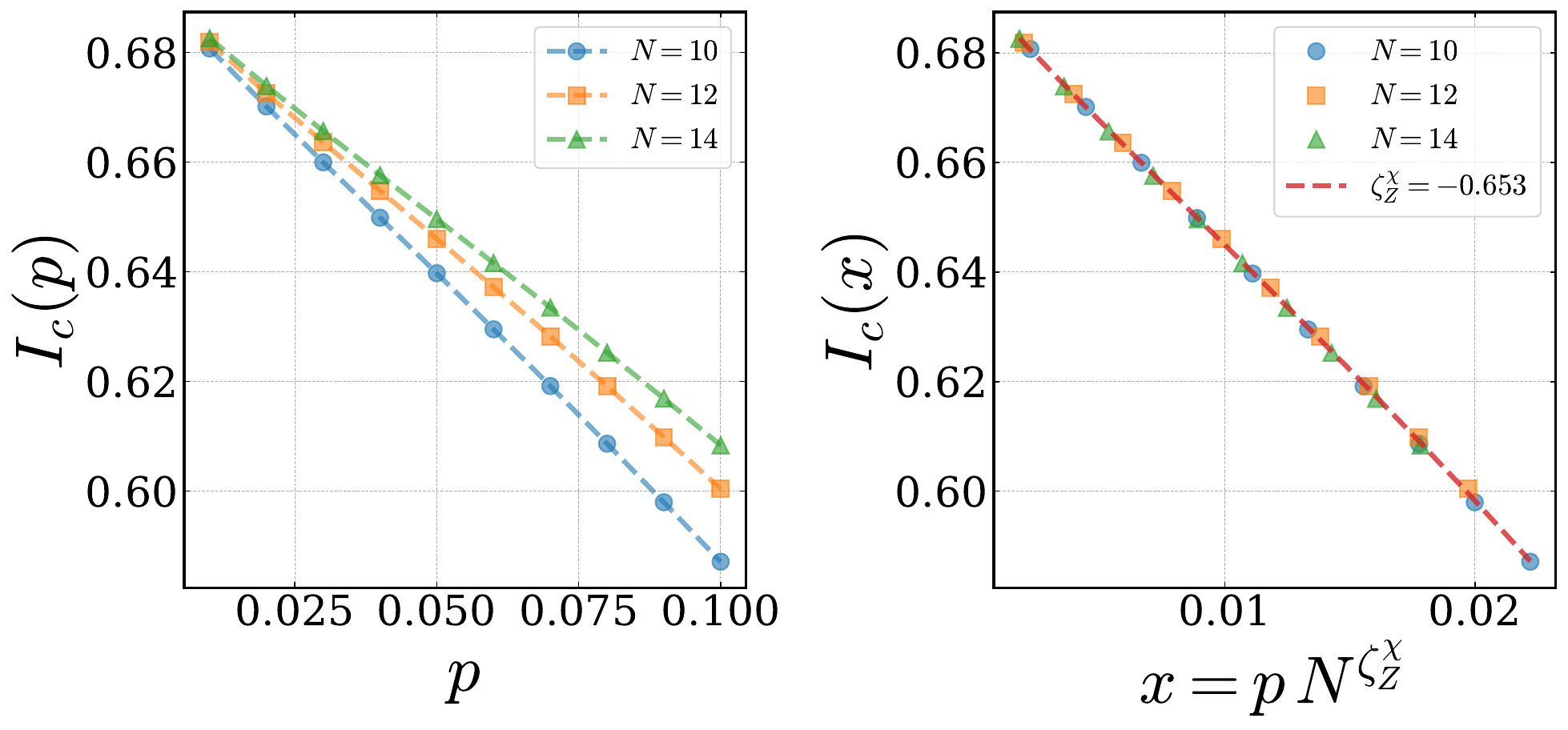}\\[-2pt]
    \includegraphics[width=1.00\linewidth]{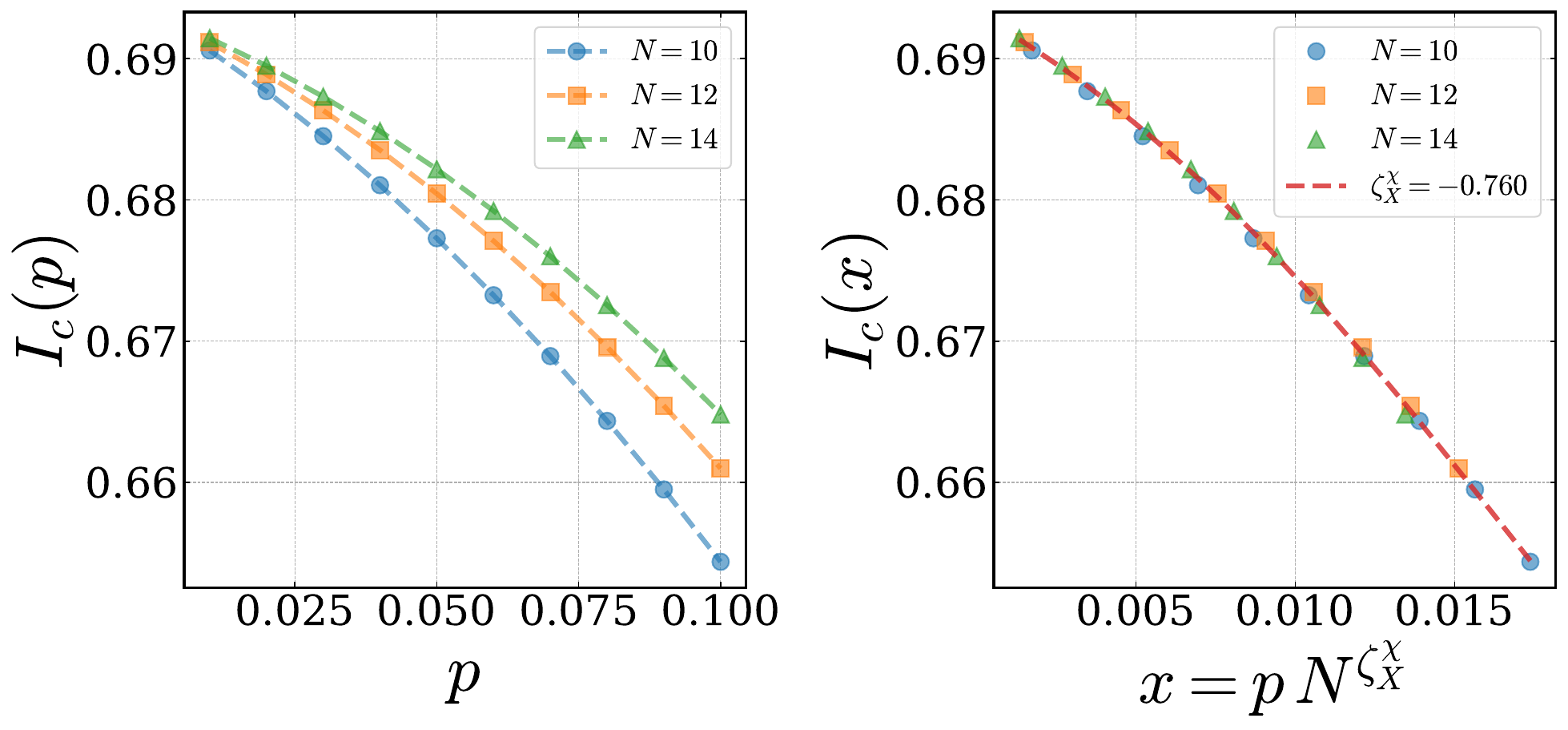}
    \caption{\textbf{Coherent information for the \(2\)D golden-cylinder
    realization} of the chiral semion edge code
    \(\mathbb{V}^{\chi}\bigl(\cylindericon[1.2]\bigr)
    =\mathrm{span}\{\ket{\psi_1^{0,0}},\ket{\psi_s^{\frac{1}{2},-\frac{1}{2}}}\}\)
    under IID Pauli dephasing. We plot \(I_c(p)\) for
    \(N=10,12,14\) and perform the collapse using
    \(x=pN^{\zeta_P^{\chi}}\). The best-fit exponents are
    \(\zeta_Z^{\chi}\approx -0.6533\) and
    \(\zeta_X^{\chi}\approx -0.7600\), indicating a flow toward
    \(I_c=\log D\) at fixed weak \(p\).  }
    \label{fig:CI_cyl}
\end{figure}

The comparison is summarized in Table~\ref{tab:IID_exponent_comparison}.
For both Pauli channels, the \(2\)D chiral edge code has the smaller
collapse exponent:
\begin{equation}
\zeta_Z^\chi < \zeta_Z^{\rm CFT},
\qquad
\zeta_X^\chi < \zeta_X^{\rm CFT}.
\end{equation} 
This is the IID-noise analog of the main-text robustness comparison under
local erasure. Under local erasure, the \(2\)D chiral edge code is more robust
because the coherent-information loss is controlled by a larger positive exponent. Under weak IID noise, the \(2\)D chiral edge code is more robust because
the coherent information flows back toward its maximal value with a more
negative scaling-collapse exponent.

\begin{table}[h]
\centering
\setlength{\extrarowheight}{5pt}
\begin{tabular}{|c|c|}
\hline 
 \(Z\)-dephasing & \(X\)-dephasing  \\
 \hline
\(\zeta_Z^{\rm CFT}\approx -0.2000\) 
&
\(\zeta_X^{\rm CFT}\approx 0.0000\) 
\\
\hline
\(\zeta_Z^{\chi}\approx -0.6533\) 
&
\(\zeta_X^{\chi}\approx -0.7600\)
\\
\hline
\end{tabular}
\caption{Effective weak-noise scaling-collapse exponents for IID Pauli
dephasing with code subspace $\mathbb{V}^{\chi}\bigl(\cylindericon[1.2]\bigr)
=
\mathrm{span}\{\ket{\psi_1^{0,0}},\ket{\psi_s^{\frac{1}{2},-\frac{1}{2}}}\}$. Smaller, i.e. more negative, \(\zeta_P\) means that the scaling
variable \(pN^{\zeta_P}\) flows more rapidly toward zero at fixed weak
physical error rate \(p\), and hence that the coherent information remains
closer to \(\log D\) as the system size grows.}
\label{tab:IID_exponent_comparison}
\end{table}

The comparison in Table~\ref{tab:IID_exponent_comparison} should be interpreted as a finite-size weak-noise observation, rather than as a derivation of an IID-noise stability mechanism. It suggests an analog of the local-erasure robustness hierarchy studied in the main text: for the code subspace and Pauli channels considered here, the scaling variable
\(pN^{\zeta_P}\) flows more rapidly toward the no-noise fixed point in the two-dimensional chiral edge realization than in the dimensionally reduced one-dimensional CFT code. This observation is naturally compared with the CFT-code analysis of IID dephasing in Ref.~\cite{Sang2024AQECC}, and with recent work on noisy topological and fractional quantum Hall mixed states~\cite{PRXQuantum.5.020343, hlfh-86yz, PhysRevA.111.032402, 2025arXiv251222121V, 2025arXiv251008490W,Sang2025Markov-length,Negari2026}.

At present, however, we do not have a first-principles explanation of the
collapse exponents \(\zeta_P^\chi\) in Table~\ref{tab:IID_exponent_comparison}, nor do we know whether they are fixed by the edge-local exponent \(\gamma\), by another universal edge datum, or by more microscopic features of the lattice realization and noise channel. Also, we only fixed $b=0.1$ in the numerical simulation, which means increasing $N$ is only making the cylinder longer rather than extending the system to the thermodynamic limit. A
plausible interpretation of the enhanced robustness in chiral edge code is that the two-dimensional geometry leaves the two physical edges separated by a gapped topological bulk, while dimensional reduction removes this separation by treating an entire column of the
cylinder as a local degree of freedom of the one-dimensional CFT code. This geometric distinction may suppress the ability of weak independent local errors to build up an effective process that distinguishes the encoded anyon sector. We regard this as a guiding picture rather than a demonstrated mechanism.

We emphasize that the above data provide only finite-size evidence in the weak-noise regime. They do not determine a threshold \(p_c\), nor do they rule out a mixed-state decoding transition at larger \(p\). Moreover, the numerics are restricted to one chiral semion code subspace, two Pauli dephasing channels, and the accessible golden-cylinder system sizes. The
main conclusion is therefore modest: near \(p=0\), these data are consistent with a more favorable finite-size flow of coherent information for the two-dimensional chiral edge code than for its one-dimensional dimensional reduction. Whether this behavior persists for other code subspaces, other local noise channels, and larger system sizes is an important open question.

\section{Algorithm for calculating coherent information}
\label{app:alg_CI}

We explain a numerical method that efficiently computes the coherent information at arbitrary $p$ for a certain error type, such as the $X$ or $Z$ error on a single site. The key idea is that we can purify such single-qubit channels with one ancilla qubit.

Let $|\psi\rangle\in \otimes_{j=1}^N\mathbb{C}^{2}$ be a wave function for $N$ qubits, and $P\in\{X,Y,Z\}$ be the noisy gate acting with probability $p/2$ on a single qubit $n$, thus defining a single-qubit noisy channel: 
\begin{equation}
\rho'=\mathcal{N}_{p,P}(\rho)=(1-\frac{p}{2})\rho+\frac{p}{2} P\rho P^\dagger    
\end{equation}
When $\rho=|\psi\rangle \langle\psi|$, we can construct a purified state for $\rho'$ as 
\begin{equation}
|\psi'\rangle=\sqrt{1-\frac{p}{2}}|\psi\rangle\otimes |0_E\rangle+\sqrt{\frac{p}{2}}(P|\psi\rangle)\otimes |1_E\rangle
\label{eq:pufication}
\end{equation}
by introducing an ancilla qubit $E$ for this single-qubit noise channel, such that, by tracing out the ancilla, we get $\rho'=\Tr_{E}|\psi'\rangle \langle\psi'|$. See Alg.~\ref{alg:purification} for the purification algorithm.
Consequently, the number of added ancilla qubits is equal to the number of applied noisy channels $M$, and the dimension of the purified state becomes $2^{N+M}$. With the purified state, we can trace out the ancilla qubits or the reference qubit to get $\rho_{{Q}}$ and $\rho_{{QR}}$, and then $S_{{Q}}$ and $S_{{QR}}$ for the calculation of coherent information. Note that the algorithm is especially efficient when we apply the noise channels to a subsystem, i.e., when $M\le N$.

\begin{algorithm}
\caption{Purification of noise channel by adding ancilla qubit}\label{alg:purification}
\KwData{state vector $|\psi\rangle$ of shape $\underbrace{(2,2,\dots,2)}_{N}$, noise gate $P$, qubit $n$, error probability $p$.}
\KwResult{purified state vector $|\psi'\rangle$ of shape $\underbrace{(2,2,\dots,2)}_{N+1}$}
Initialize $|\psi'\rangle$ with shape $\underbrace{(2,2,\dots,2)}_{N+1}$, $|\psi_1\rangle$ with shape $\underbrace{(2,2,\dots,2)}_{N}$, and $|\psi_2\rangle$ with shape $\underbrace{(2,2,\dots,2)}_{N}$\;
$|\psi_1\rangle \gets \text{swapaxis}(|\psi\rangle, 1, n) $\;
\If{$P=X$}{
$|\psi_2\rangle[0,\dots]=|\psi_1\rangle[1,\dots]$\;
$|\psi_2\rangle[1,\dots]=|\psi_1\rangle[0,\dots]$\;
}
\If{$P=Y$}{
$|\psi_2\rangle[0,\dots]=-i|\psi_1\rangle[1,\dots]$\;
$|\psi_2\rangle[1,\dots]=i|\psi_1\rangle[0,\dots]$\;
}
\If{$P=Z$}{
$|\psi_2\rangle[0,\dots]=|\psi_1\rangle[0,\dots]$\;
$|\psi_2\rangle[1,\dots]=-|\psi_1\rangle[1,\dots]$\;
}
$|\psi'\rangle[\dots,0]=\sqrt{1-p/2}|\psi_1\rangle$\;
$|\psi'\rangle[\dots,1]=\sqrt{p/2}|\psi_2\rangle$\;
$|\psi'\rangle\gets\text{swapaxis}(|\psi'\rangle, 1, n) $\;
\end{algorithm}

\bibliographystyle{ucsd}
\bibliography{ref}  
 
\end{document}